
\documentclass[10pt,journal,compsoc]{IEEEtran}
%


%

%
\ifCLASSOPTIONcompsoc
  \usepackage[nocompress]{cite}
\else
  \usepackage{cite}
\fi
%

\usepackage[utf8]{inputenc} \usepackage{amssymb} \usepackage{multirow} \usepackage{graphicx} \usepackage{tabularx} \usepackage[acronym]{glossaries} \usepackage{makecell} \usepackage{rotating} \usepackage{threeparttable} \usepackage{booktabs} \usepackage{ragged2e} \usepackage{array}
\usepackage{bm}
\usepackage{mathrsfs}
\usepackage{amsfonts}
\usepackage{amssymb}
\usepackage{amsmath}
\usepackage{latexsym}
\usepackage{graphicx}
\usepackage{subfigure}
\usepackage{amsthm}
\usepackage{hyperref}
\usepackage{indentfirst}
\usepackage{algorithm} 
\usepackage{algorithmic} 
\usepackage{multirow} 
\usepackage{xcolor}
\usepackage{color}
\usepackage{longtable}
\usepackage{graphics}
\usepackage{graphicx}
\usepackage{epsfig}
\usepackage[switch]{lineno}
\usepackage{hyperref}

\newcommand{\tabincell}[2]{\begin{tabular}{@{}#1@{}}#2\end{tabular}}

\usepackage{hyperref}
\usepackage[hyphenbreaks]{breakurl}

\usepackage{stfloats}

\theoremstyle{plain}
\newtheorem{thm}{Theorem}[section]

\theoremstyle{definition}

\theoremstyle{remark}

%
\ifCLASSINFOpdf
\else
\fi
\hyphenation{op-tical net-works semi-conduc-tor}

\begin{document}
%
\title{Residual-driven Fuzzy \emph{C}-Means Clustering for Image Segmentation}
%
%
%
%

\author{Cong~Wang,
        Witold~Pedrycz,~\IEEEmembership{Fellow,~IEEE,}
        ZhiWu~Li,~\IEEEmembership{Fellow,~IEEE,}
        and~MengChu~Zhou,~\IEEEmembership{Fellow,~IEEE}
\IEEEcompsocitemizethanks{
\IEEEcompsocthanksitem C. Wang is with the School of Electro-Mechanical Engineering, Xidian University, Xi'an 710071, China. E-mail: wangc0705@stu.xidian.edu.cn.
\IEEEcompsocthanksitem W. Pedrycz is with the Department of Electrical and Computer Engineering, University of Alberta, Edmonton, AB T6R 2V4, Canada, the School of Electro-Mechanical Engineering, Xidian University, Xi'an 710071, China, and also with the Faculty of Engineering, King Abdulaziz University, Jeddah 21589, Saudi Arabia. E-mail: wpedrycz@ualberta.ca.
\IEEEcompsocthanksitem Z. Li is with the School of Electro-Mechanical Engineering, Xidian University, Xi'an 710071, China, and also with the Institute of Systems Engineering, Macau University of Science and Technology, Macau, China. E-mail: zhwli@xidian.edu.cn.
\IEEEcompsocthanksitem M. Zhou is with the Institute of Systems Engineering, Macau University  of Science and Technology, Macau 999078, China and also with the Helen and John C. Hartmann Department of Electrical and Computer Engineering, New Jersey Institute of Technology, Newark, NJ 07102 USA. E-mail: zhou@njit.edu.
}
}

%
%

\markboth{}
{Wang \MakeLowercase{\textit{et al.}}: Residual-driven Fuzzy \emph{C}-Means Clustering for Image Segmentation}
%



\IEEEtitleabstractindextext{%
\begin{abstract}
Due to its inferior characteristics, an observed (noisy) image's direct use gives rise to poor segmentation results. Intuitively, using its noise-free image can favorably impact image segmentation. Hence, the accurate estimation of the residual between observed and noise-free images is an important task. To do so, we elaborate on residual-driven Fuzzy \emph{C}-Means (FCM) for image segmentation, which is the first approach that realizes accurate residual estimation and leads noise-free image to participate in clustering. We propose a residual-driven FCM framework by integrating into FCM a residual-related fidelity term derived from the distribution of different types of noise. Built on this framework, we present a weighted $\ell_{2}$-norm fidelity term by weighting mixed noise distribution, thus resulting in a universal residual-driven FCM algorithm in presence of mixed or unknown noise. Besides, with the constraint of spatial information, the residual estimation becomes more reliable than that only considering an observed image itself. Supporting experiments on synthetic, medical, and real-world images are conducted. The results demonstrate the superior effectiveness and efficiency of the proposed algorithm over existing FCM-related algorithms.
\end{abstract}

\begin{IEEEkeywords}
Fuzzy \emph{C}-Means, mixed or unknown noise, residual-driven, weighted fidelity, image segmentation.
\end{IEEEkeywords}}

\maketitle

\IEEEdisplaynontitleabstractindextext

%
\IEEEpeerreviewmaketitle

\IEEEraisesectionheading{\section{Introduction}\label{sec:introduction}}
\IEEEPARstart{A}{S} an important approach to data analysis and processing, fuzzy clustering has been widely applied to a number of visible domains such as pattern recognition \cite{Baraldi1999I,Baraldi1999II}, data mining \cite{Subbalakshmi2016}, granular computing \cite{Zhuxiubin2017}, and image processing \cite{Yambal2013}. One of the most popular fuzzy clustering methods is a Fuzzy \emph{C}-Means (FCM) algorithm \cite{Dunn1973,Bezdek1981,Bezdek1984}. It plays a significant role in image segmentation; yet it only works well for noise-free images. In real-world applications, images are often contaminated by different types of noise, especially mixed or unknown noise, produced in the process of image acquisition and transmission. Therefore, to make FCM robust to noise, FCM is refined resulting in many modified versions in two main means, i.e., introducing spatial information into its objective function \cite{Ahmed2002,Chen2004,Szilagyi2003,Cai2007,Krinidis2010,
Celik2013} and substituting its Euclidean distance with a kernel distance (function) \cite{Gong2013,Lin2014,Elazab2015,Zhao2013,Guo2016,Zhao2014,Zhu2017,Wang2019}. Even though such versions improve its robustness to some extent, they often fail to account for high computing overhead of clustering. To balance the effectiveness and efficiency of clustering, researchers have recently attempted to develop FCM with the aid of mathematical technologies such as Kullback-Leibler divergence \cite{Gharieb2017,Wang2020KL}, sparse regularization \cite{Gu2018,Wang2020SR}, morphological reconstruction \cite{Wang2020KL,Vincent1993,Najman1996,Chen2012} and gray level histograms \cite{Lei2018,Lei2019}, as well as pre-processing and post-processing steps like image pixel filtering \cite{Wang2020RS}, membership filtering \cite{Lei2018} and label filtering \cite{Wang2020SR,Wang2020RS,Bai2019}. To sum up, the existing studies make evident efforts to improve its robustness mainly by means of noise removal in each iteration or before and after clustering. However, they fail to take accurate noise estimation into account and apply it to improve FCM.

Generally speaking, noise can be modeled as the residual between an observed image and its ideal value (\mbox{noise-free} image). Clearly, its accurate estimation is beneficial for image segmentation as noise-free image instead of observed one can then be used in clustering. Most of FCM-related algorithms suppress the impact of such residual on FCM by virtue of spatial information. So far, there are no studies focusing on developing FCM variants based on an in-depth analysis and accurate estimation of the residual. To the best of our knowledge, there is only one attempt \cite{Zhang2019} to improve FCM by revealing the sparsity of the residual. To be specific, since a large proportion of image pixels have small or zero noise/outliers, $\ell_{1}$-norm regularization can be used to characterize the sparsity of the residual, thus forming deviation-sparse FCM (DSFCM). When spatial information is used, it upgrades to its augmented version, named as DSFCM\_N. Their residual estimation is realized by using a soft thresholding operation. In essence, such estimation is equivalent to noise removal. Therefore, neither of them can achieve highly accurate residual estimation.

To address this issue, we elaborate on residual-driven FCM (RFCM) for image segmentation, which furthers FCM's performance. We first design an RFCM framework, as shown in Fig. \ref{fig:frame}(b), by introducing a fidelity term on residual as a part of the objective function of FCM. This term makes residual accurately estimated. It is determined by a noise distribution, e.g., an $\ell_{2}$-norm fidelity term corresponds to Gaussian noise and an $\ell_{1}$-norm one suits impulse noise. In real-world applications, since images are often corrupted by mixed or unknown noise, a specific noise distribution is difficult to be obtained. To deal with this issue, by analyzing the distribution of a wide range of mixed noise, especially a mixture of Poisson, Gaussian and impulse noise, we present a weighted $\ell_{2}$-norm fidelity term in which each residual is assigned a weight, thus resulting in an augmented version namely WRFCM for image segmentation with mixed or unknown noise. To obtain better noise suppression, we also consider spatial information of image pixels in  WRFCM since it is naturally encountered in image segmentation. In addition, we design a two-step iterative algorithm to minimize the objective function of WRFCM. The first step is to employ the Lagrangian multiplier method to optimize the partition matrix, prototypes and residual when fixing the assigned weights. The second step is to update the weights by using the calculated residual. Finally, based on the optimal partition matrix and prototypes, a segmented image is obtained.
\begin{figure}[htbp]
\centering
\begin{minipage}[t]{1\linewidth}
\centering
\includegraphics[width=1\textwidth]{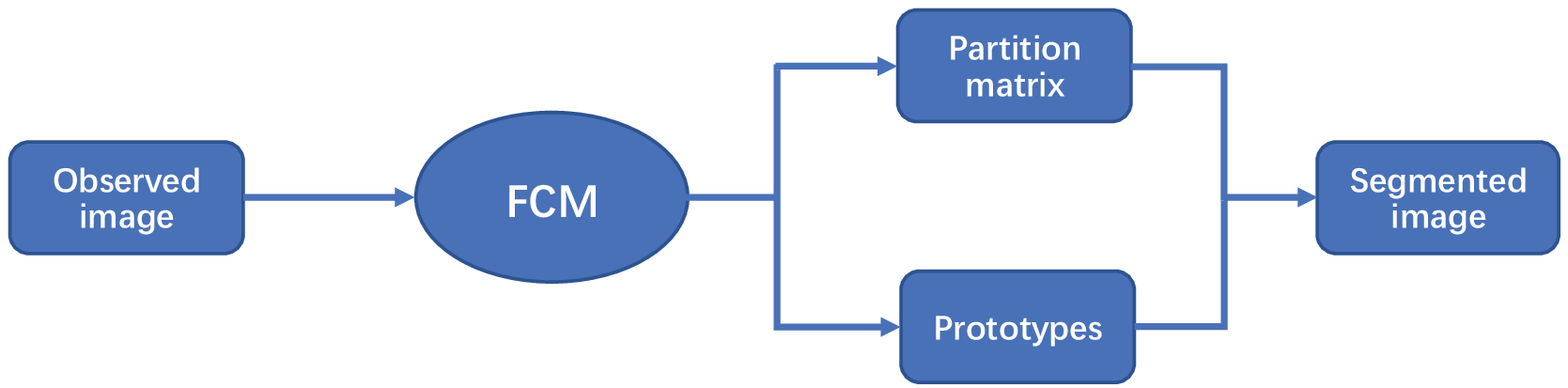}
\centerline{\footnotesize{(a)}}
\end{minipage}
\begin{minipage}[t]{1\linewidth}
\centering
\includegraphics[width=1\textwidth]{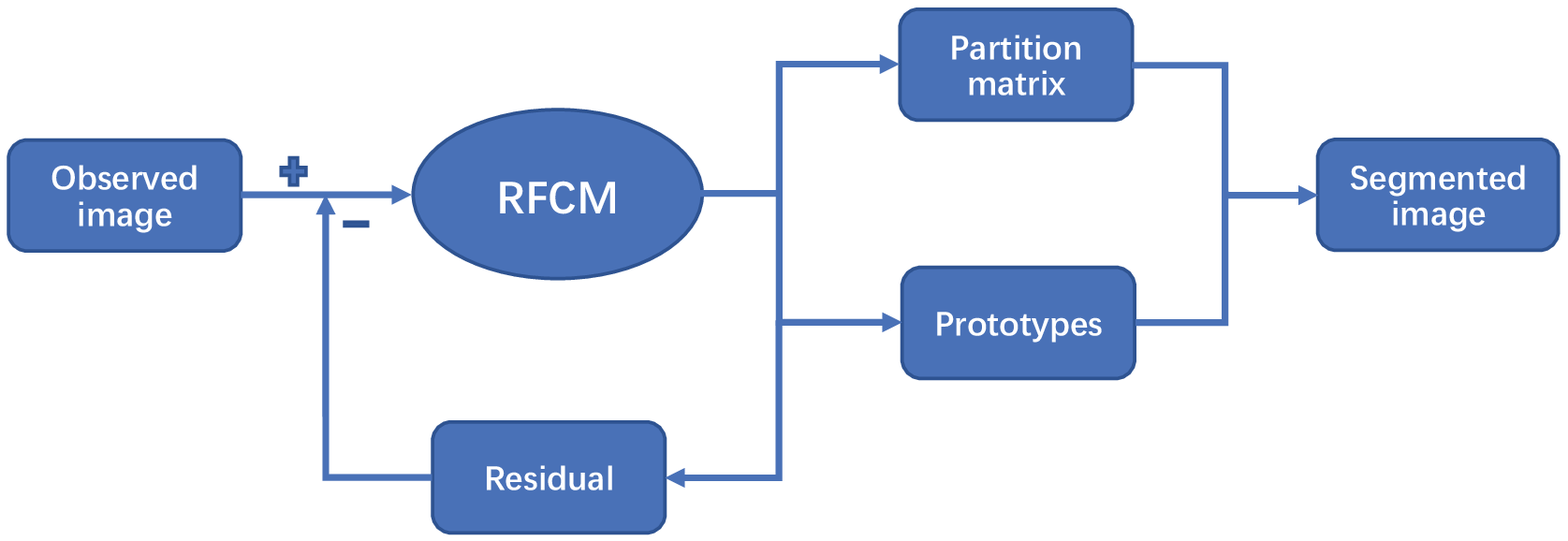}
\centerline{\footnotesize{(b)}}
\end{minipage}
\caption{A comparison between the frameworks of FCM and RFCM. (a) FCM; and (b) RFCM.}
\label{fig:frame}
\end{figure}

This study makes fourfold contributions to advance FCM for image segmentation:
\begin{itemize}
\item  For the first time, we propose an RFCM framework for image segmentation by introducing a fidelity term derived from a noise distribution into FCM. It relies on accurate residual estimation to greatly improve FCM's performance, which is absent from existing FCM-related algorithms.


\item  Built on an RFCM framework, we present WRFCM by weighting mixed noise distribution and incorporating spatial information. The use of spatial information makes resulting residual estimation more reliable. It is regarded as a universal RFCM algorithm for coping with mixed or unknown noise.



\item  We design a two-step iterative algorithm to realize WRFCM. Since only $\ell_2$ vector norm is involved, it is fast by virtue of a Lagrangian multiplier method.

\item  WRFCM is validated to produce state-of-the-art performance on synthetic, medical and real-world images from four benchmark databases.
\end{itemize}

The originality of this work comes with a realization of accurate residual estimation from observed images, which benefits FCM's performance enhancement.
In essence, the proposed algorithm is an unsupervised method. Compared with commonly used supervised methods such as convolutional neural networks (CNNs) \cite{Fakhry2017,Zhang2017,Kokkinos2019,Ren2019,Zhang2020,He2016} and dictionary learning \cite{Jiang2014,Zhou2019}, it realizes the residual estimation precisely by virtue of a fidelity term rather than using any image samples to train a residual estimation model. Hence, it needs low computing overhead and can be experimentally executed by using a \mbox{low-end} CPU rather than a high-end GPU, which means that its practicality is high. In addition, being free of the aid of mathematical techniques, it achieves the superior performance over some recently proposed comprehensive FCMs. Therefore, we conclude that WRFCM is a fast and robust FCM algorithm. Finally, in a mathematical sense, its minimization problem involves an $\ell_{2}$ vector norm only. Thus it can be easily solved by using a \mbox{well-known} Lagrangian multiplier method.

Section \ref{sec:relatedwork} reviews the state of the art relevant to this work. Section \ref{sec:methodology} details conventional FCM and the proposed methodology. Section \ref{sec:experiments} reports experimental results. Conclusions and some open issues are given in Section \ref{sec:conclusions}.

\section{Related Work}\label{sec:relatedwork}
In 1984, Bezdek et al. \cite{Bezdek1984} first proposed FCM. So far, it has evolved into the most popular fuzzy clustering algorithm. However, it cannot work well for segmenting observed (noisy) images. It has been improved by mostly considering spatial information \cite{Ahmed2002,Chen2004,Szilagyi2003,Cai2007,Krinidis2010,Celik2013}, kernel distances (functions) \cite{Gong2013,Lin2014,Elazab2015,Zhao2013,Guo2016,Zhao2014,Zhu2017,Wang2019}, and various mathematical techniques \cite{Gharieb2017,Wang2020KL,Gu2018,Wang2020SR,Vincent1993,Najman1996,Chen2012,Lei2018,Lei2019,Wang2020RS,Bai2019}. In this paper, we mainly focus on the improvement of FCM with regard to its robustness to noise for image segmentation. Therefore, we introduce related work about it in this section.

\subsection{FCM with Spatial Information}

Over the past two decades, using spatial information to improve FCM's robustness achieved remarkable successes, thus resulting in many improved versions \cite{Ahmed2002,Chen2004,Szilagyi2003,Cai2007,Krinidis2010,Celik2013}. For instance, Ahmed et al. \cite{Ahmed2002} introduce a neighbor term into the objective function of FCM so as to improve its robustness by leaps and bounds, thus yielding FCM\_S where S refers to ``spatial information". To further improve it, Chen and Zhang \cite{Chen2004} integrate mean and median filters into a neighbor term, thus resulting in two FCM\_S variants labeled as FCM\_S1 and FCM\_S2. However, their computing overhead is very high. To lower it, Szilagyi et al. \cite{Szilagyi2003} propose an enhanced FCM (EnFCM) where a weighted sum image is generated by the observed pixels and their neighborhoods. Based on it, Cai et al. \cite{Cai2007} substitute image pixels by gray level histograms, which gives rise to fast generalized FCM (FGFCM). Although it has a high computational efficiency, more parameters are required and tuned. Krinidis et al. \cite{Krinidis2010} come up with a fuzzy local information C-means algorithm (FLICM) for simplifying the parameter setting in FGFCM. Nevertheless, FLICM considers only non-robust Euclidean distance that is not applicable to arbitrary spatial information.

\subsection{FCM with Kernel Distance}

To address the serious shortcoming of FLICM \cite{Krinidis2010}, kernel distances (functions) are used to replace Euclidean distance in FCM. They realize the transformation from an original data space to a new one. As a result, a collection of kernel-based FCMs have been put forward \cite{Gong2013,Lin2014,Elazab2015,Zhao2013,Guo2016,Zhao2014,Zhu2017,Wang2019}. For example, Gong et al. \cite{Gong2013} propose an improved version of FLICM, namely KWFLICM, which augments a tradeoff weighted fuzzy factor and a kernel metric into FCM. Even though it is generally robust to extensive noise, it is more \mbox{time-consuming} than most of existing FCMs. Zhao et al. \cite{Zhao2014} take a neighborhood weighted distance into account, thus presenting a novel algorithm called NWFCM. Although it runs faster than KWFLICM, its segmentation performance is worse. Moreover, it exhibits lower computational efficiency than other FCMs. More recently, Wang et al. \cite{Wang2019} consider tight wavelet frames as a kernel function so as to present wavelet \mbox{frame-based} FCM (WFCM), which takes full advantage of the feature extraction capacity of tight wavelet frames. In spite of its rarely low computational cost, its segmentation effects can be further improved by using various mathematical techniques.

\subsection{Comprehensive FCM}

To keep a sound trade-off between performance and speed of clustering, comprehensive FCMs involving various mathematical techniques has been put forward \cite{Gharieb2017,Wang2020KL,Gu2018,Wang2020SR,Vincent1993,Najman1996,Chen2012,Lei2018,Lei2019,Wang2020RS,Bai2019}. For instance, Gharieb et al. \cite{Gharieb2017} present an FCM framework based on Kullback-Leibler (KL) divergence. It uses KL divergence to optimize the membership similarity between a pixel and its neighbors. Yet it has slow clustering speed. Gu et al. \cite{Gu2018} report a fuzzy double C-Means algorithm (FDCM) through the utility of sparse representation, which addresses two datasets simultaneously, i.e., a basic feature set associated with an observed image and a feature set learned from a spare self-representation model. Overall, FDCM is robust and applicable to a wide range of image segmentation problems. However, its computational efficiency is not satisfactory. Lei et al. \cite{Lei2018} present a fast and robust FCM algorithm (FRFCM) by using gray level histograms and morphological gray reconstruction. In spite of its fast clustering, its performance is sometimes unstable since morphological gray reconstruction may cause the loss of useful image features. More recently, Lei et al. \cite{Lei2019} propose an automatic fuzzy clustering framework (AFCF) by incorporating threefold techniques, i.e., superpixel algorithms, density peak clustering and prior entropy. It overcomes two difficulties in existing algorithms \cite{Wang2019,Gu2018,Lei2018}. One is to select the number of clusters automatically. The other one is to employ superpixel algorithms and the prior entropy to improve image segmentation performance. However, AFCF's results are unstable.

In this work, the proposed algorithm differs from all algorithms mentioned above in the sense that we take a wide range of mixed noise estimations as the starting point and directly minimize the objective function of WRFCM formulated by using fidelity without dictionary learning and CNNs and archives outstanding performance in image segmentation tasks.

\section{FCM and Proposed Methodology}\label{sec:methodology}
\subsection{Fuzzy \emph{C}-Means (FCM)}
Given a set $\bm{X}=\{\bm{x}_{j}\in\mathbb{R}^{L}:j=1,2,\cdots,K\}$, where $\bm{x}_{j}$ contains $L$ channels, i.e., $\bm{x}_{j} =(x_{j1},x_{jl},\cdots,x_{jL})^{T}$. FCM is applied to cluster $\bm{X}$ by minimizing:
\begin{equation} \label{GrindEQ__1_}
J({\bm U},{\bm V})=\sum\limits_{i=1}^{c}\sum\limits_{j=1}^{K}u_{ij}^{m} \|{\bm x}_{j} -{\bm v}_{i} \|^{2}
\end{equation}
where ${\bm U}=[u_{ij}]_{c\times K} $ is a partition matrix under a constraint $\sum_{i=1}^{c}u_{ij}=1$ for $j=1,2,\cdots,K$, $\bm{V}=\{\bm{v}_{i}:i=1,2,\cdots,c\}$ is a prototype set, $\|\cdot\|$ stands for Euclidean distance, and $m$ denotes a fuzzification exponent ($m>1$).

An alternating iteration scheme \cite{Bezdek1984} is used to minimize \eqref{GrindEQ__1_}. Each iteration is realized as follows:
\begin{equation*}
u_{ij}^{(t+1)} =\frac{(\|{\bm x}_{j} -{\bm v}_{i}^{(t)} \|^{2} )^{-1/(m-1)} }{\sum\limits_{q=1}^{c}(\|{\bm x}_{j} -{\bm v}_{q}^{(t)} \|^{2} )^{-1/(m-1)} }
\end{equation*}

\begin{equation*}
v_{il}^{(t+1)} =\frac{\sum\limits_{j=1}^{K}\left(u_{ij}^{(t+1)} \right)^{m} x_{jl}  }{\sum\limits_{j=1}^{K}\left(u_{ij}^{(t+1)} \right)^{m}}
\end{equation*}
Here, $t=0,1,2,\cdots$ is an iterative step and $l=1,2,\cdots,L$. By presetting a threshold $\varepsilon$, the procedure stops when $\|{\bm U}^{(t+1)} -{\bm U}^{(t)} \|<\varepsilon$.

\subsection{Noise Model}
Consider an observed image $\bm{X}$ with $K$ pixels. It is denoted as ${\bm X}=\{ {\bm x}_{j} :j=1,2,\cdots ,K\} $, where ${\bm x}_{j} =\{ x_{jl}:l=1,2,\cdots,L\} $. When $L=1$, ${\bm X}$ represents a gray image. For $L=3$, ${\bm X}$ is a Red-Green-Blue color image. Since there is noise in an observed image, ${\bm X}$ can be modeled as a sum of a noise-free image $\widetilde{{\bm X}}$ and noise ${\bm R}$:
\begin{equation} \label{GrindEQ__2_}
{\bm X}=\widetilde{{\bm X}}+{\bm R}
\end{equation}

Mathematically speaking, $\widetilde{{\bm X}}=\{\tilde{{\bm x}}_{1} ,\tilde{{\bm x}}_{2} ,\cdots ,\tilde{{\bm x}}_{K}\}$ is an ideal value of ${\bm X}$ and thus is unknown. ${\bm R}=\{ {\bm r}_{1} ,{\bm r}_{2} ,\cdots ,{\bm r}_{K} \} $ is viewed as the residual between $\bm{X}$ and $\widetilde{\bm{X}}$. Its accurate estimation can make $\widetilde{{\bm X}}$ instead of ${\bm X}$ participate in clustering so as to improve FCM's robustness. Hence, it is a necessary step to formulate a noise model before constructing an FCM model. In image processing, the models of single noise such as Gaussian, Poisson and impulse noise are widely used. In this work, in order to construct robust FCM, we mostly consider mixed or unknown noise since it is often encountered in real-world applications. Its specific model is unfortunately hard to be formulated. Therefore, a common solution is to assume the type of mixed noise in advance. In universal image processing, two kinds of mixed noise are the most common, refer to mixed Poisson-Gaussian noise and mixed Gaussian and impulse noise. Beyond them, we focus on a mixture of a wide range of noise, i.e., a mixture of Poisson, Gaussian, and impulse noise. We investigate an FCM-related model based on the analysis of the mixed noise model and extend it to image segmentation with mixed or unknown noise.

Formally speaking, a noise-free image $\widetilde{{\bm X}}$ is defined in a domain $\Omega =\{1,2,\cdots,K\}$. It is first corrupted by Poisson noise, thus resulting in $\overline{{\bm X}}=\{ \bar{{\bm x}}_{1} ,\bar{{\bm x}}_{2} ,\cdots ,\bar{{\bm x}}_{K} \} $ that obeys a Poisson distribution, or, $\overline{{\bm X}}\sim {\bf P}(\widetilde{{\bm X}})$. Then additive zero-mean white Gaussian noise ${\bm R}'=\{ {\bm r}'_{1} ,{\bm r}'_{2} ,\cdots ,{\bm r}'_{K} \} $ with standard deviation $\sigma$ is added. Finally, impulse noise ${\bm R}''=\{ {\bm r}''_{j} ,{\bm r}''_{2} ,\cdots ,{\bm r}''_{K} \}$ with a given probability $p\in (0,1)$ is imposed. Hence, for $j\in \Omega $, an arbitrary element in observed image $\bm X$ is expressed as:
\begin{equation} \label{GrindEQ__3_}
{\bm x}_{j} =\left\{\begin{array}{l} {\bar{{\bm x}}_{j} +{\bm r}'_{j} {\kern 1pt} {\kern 1pt} {\kern 1pt} {\kern 1pt} {\kern 1pt} {\kern 1pt} j\in \Omega _{1} } \\ {{\bm r}''_{j} {\kern 1pt} {\kern 1pt} {\kern 1pt} {\kern 1pt} {\kern 1pt} {\kern 1pt} {\kern 1pt} {\kern 1pt} {\kern 1pt} {\kern 1pt} {\kern 1pt} {\kern 1pt} {\kern 1pt} {\kern 1pt} {\kern 1pt} {\kern 1pt} {\kern 1pt} {\kern 1pt} {\kern 1pt} {\kern 1pt} {\kern 1pt} {\kern 1pt} {\kern 1pt} {\kern 1pt} {\kern 1pt} {\kern 1pt} {\kern 1pt} j\in \Omega_{2} :=\Omega \backslash \Omega _{1} } \end{array}\right.
\end{equation}
where the subset $\Omega_{2} $ of $\Omega $ denotes the region including the missing information of $\overline{{\bm X}}$ and is assumed to be unknown with each element being drawn from the whole region $\Omega$ by Bernoulli trial with $p$. In image segmentation, mixed noise model (\ref{GrindEQ__3_}) is for the first time presented.

\subsection{Residual-driven FCM}
Since there exists an unknown amount of noise in an observed image, the segmentation accuracy of FCM is greatly impacted without properly handling it. It is natural to understand that taking a noise-free image (the ideal value of an observed image) as data to be clustered can achieve better segmentation effects. In other words, if noise (residual) can be accurately estimated, the segmentation effects of FCM should be greatly improved. To do so, we introduce a fidelity term on residual into the objective function of FCM. Consequently, an RFCM framework is first presented:
\begin{equation} \label{GrindEQ__4_}
J({\bm U},{\bm V},{\bm R})=\sum _{i=1}^{c}\sum _{j=1}^{K}u_{ij}^{m} \|{\bm x}_{j} -{\bm r}_{j} -{\bm v}_{i} \|^{2}+{\bm \beta }\cdot \Gamma ({\bm R})
\end{equation}
where ${\bm \beta }=\{\beta _{l} :l=1,2,\cdots ,L\}$ is a parameter set, which controls the impact of fidelity term $\Gamma ({\bm R})$ on FCM. We rewrite ${\bm R}$ as $\{{\bm R}_{l} :l=1,2,\cdots ,L\}$ with ${\bm R}_{l}=(r_{1l} ,r_{2l} ,\cdots ,r_{Kl} )^{T}$, which indicates that ${\bm R}$ has $L$ channels and each of them contains $K$ pixels. In this work, $L=1$ (gray) or $3$ (Red-Green-Blue). From a channel perspective, we have:
\begin{equation} \label{GrindEQ__5_}
{\bm \beta }\cdot \Gamma ({\bm R})=\sum _{l=1}^{L}\beta _{l} \Gamma ({\bm R}_{l} )
\end{equation}

The fidelity term $\Gamma ({\bm R})$ guarantees that the solution accords with the degradation process of the minimization of \eqref{GrindEQ__4_}. It is determined by a specified noise distribution. For example, when considering Gaussian noise estimation, we use an $\ell_{2} $-norm fidelity term:
\[\Gamma ({\bm R}_{l} )=\|{\bm R}_{l} \|_{\ell _{2} }^{2} =\sum_{j=1}^{K}|r_{jl} |^{2}\]
where $\|\cdot \|_{\ell_{2} } $ stands for an $\ell _{2}$ vector norm. In the presence of impulse noise, we choose an $\ell_{1}$-norm fidelity term:
\[\Gamma ({\bm R}_{l} )={\kern 1pt} {\kern 1pt} \|{\bm R}_{l} \|_{\ell _{1} } =\sum _{j=1}^{K}|r_{jl} | \]
where $\|\cdot \|_{\ell_{1}}$ denotes an $\ell_{1}$ vector norm. For Poisson noise, we take the Csisz\'{a}r's I-divergence \cite{Le2007} of $\bm R$ from $\bm X$ as a fidelity term, i.e.,
\[\Gamma ({\bm R}_{l} )=\sum _{j=1}^{K}\left((x_{jl} -r_{jl} )-x_{jl} \log (x_{jl} -r_{jl} )\right) \]

For common single noise, i.e., Gaussian, Poisson, and impulse noise, the above fidelity terms lead to a maximum a posteriori (MAP) solution to such noise estimations. In \mbox{real-world} applications, images are generally contaminated by mixed or unknown noise rather than a single noise. The fidelity terms for single noise estimation become inapplicable since the distribution of mixed or unknown noise is difficult to be modeled mathematically. Therefore, one of the main purposes of this work is to design a universal fidelity term for mixed or unknown noise estimation.

\subsection{Analysis of Mixed Noise Distribution}
To reveal the essence of mixed noise distributions, we here consider generic and representative mixed noise, i.e., a mixture of Poisson, Gaussian, and impulse noise. Let us take an example to exhibit its distribution. Here, we impose Gaussian noise ($\sigma =10$) and a mixture of Poisson, Gaussian ($\sigma=10$) and random-valued impulse noise ($p=20\%$) on image `Lena' with size $512\times512$, respectively. We show original and two observed images in Fig. \ref{fig:noise}.

\begin{figure}[htb]
\centering
\begin{minipage}[t]{0.3\linewidth}
\centering
\includegraphics[width=1\textwidth]{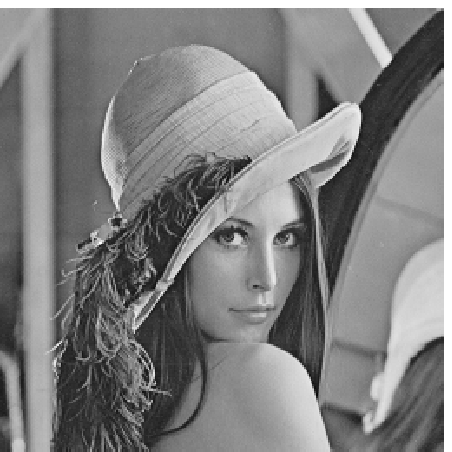}
\end{minipage}
\begin{minipage}[t]{0.3\linewidth}
\centering
\includegraphics[width=1\textwidth]{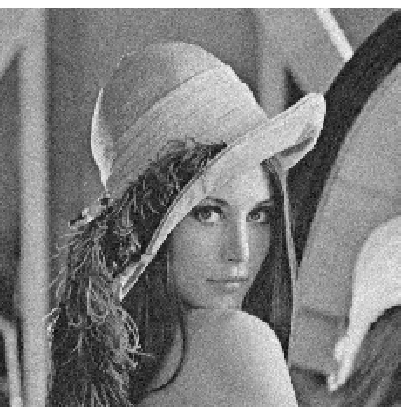}
\end{minipage}
\begin{minipage}[t]{0.3\linewidth}
\centering
\includegraphics[width=1\textwidth]{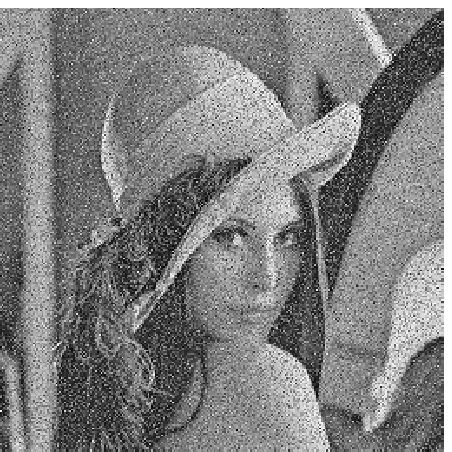}
\end{minipage}
\begin{minipage}[t]{0.3\linewidth}
\centering
\includegraphics[width=1\textwidth]{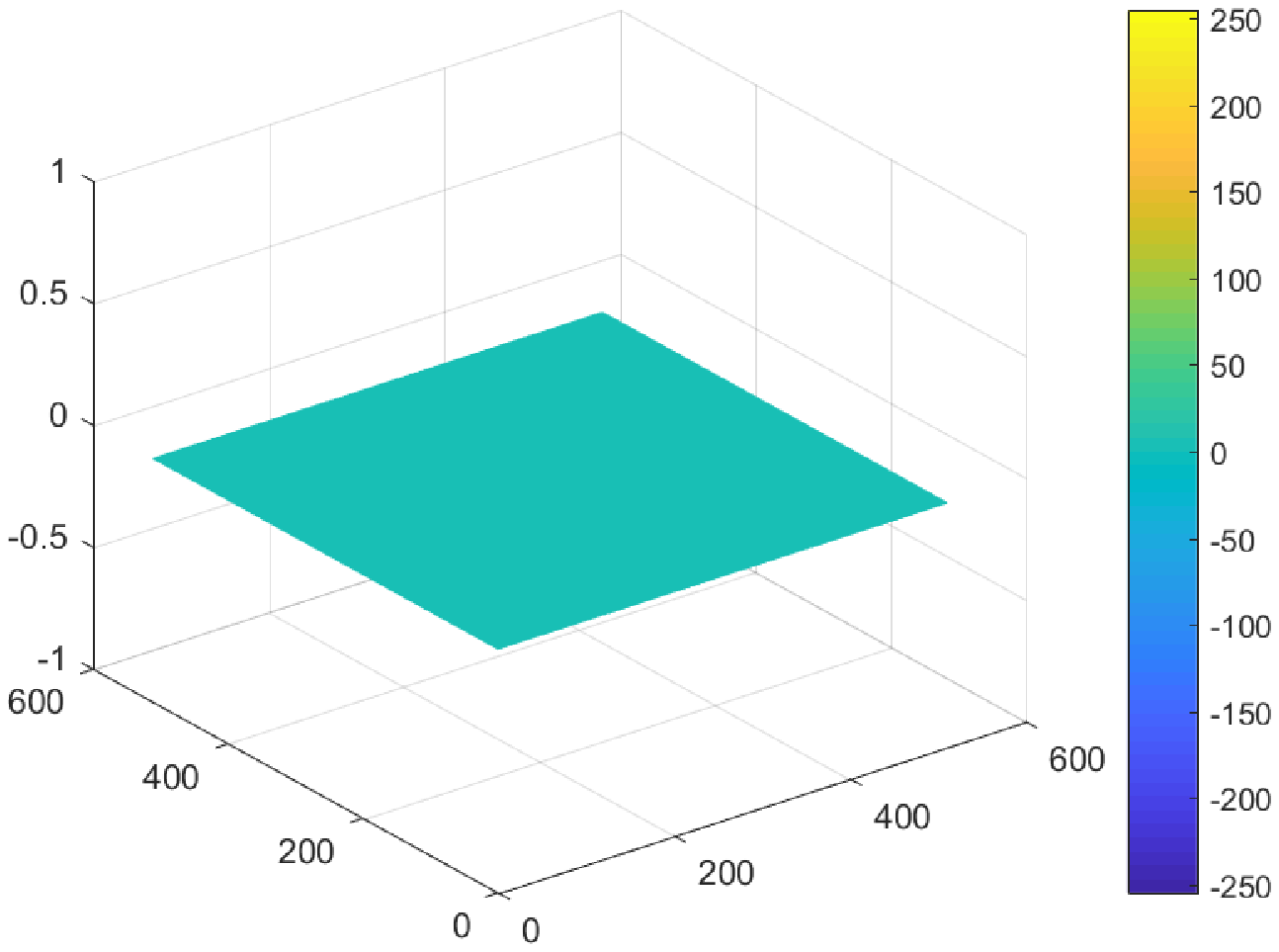}
\centerline{ \footnotesize (a)}
\end{minipage}
\begin{minipage}[t]{0.3\linewidth}
\centering
\includegraphics[width=1\textwidth]{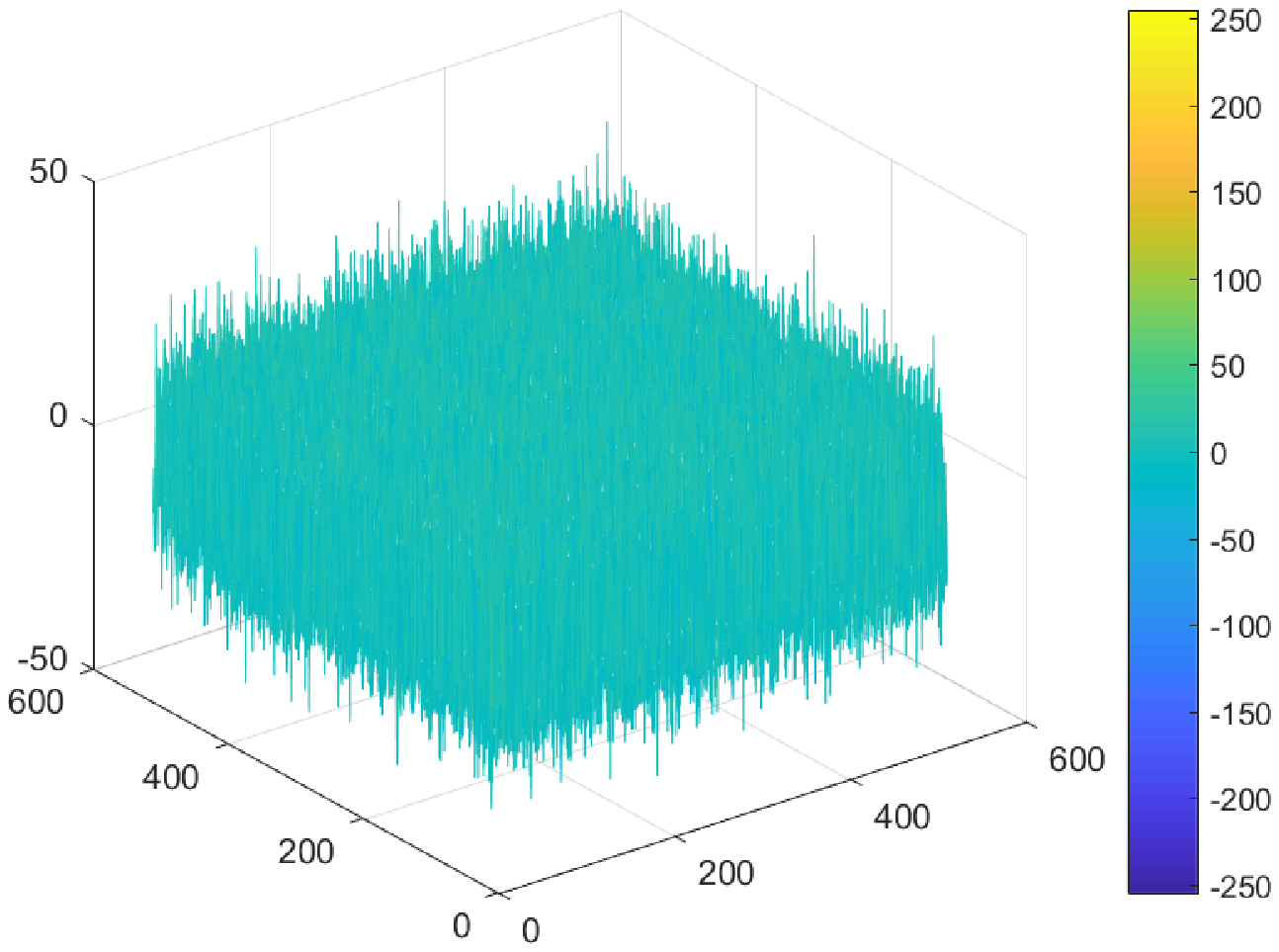}
\centerline{ \footnotesize (b)}
\end{minipage}
\begin{minipage}[t]{0.3\linewidth}
\centering
\includegraphics[width=1\textwidth]{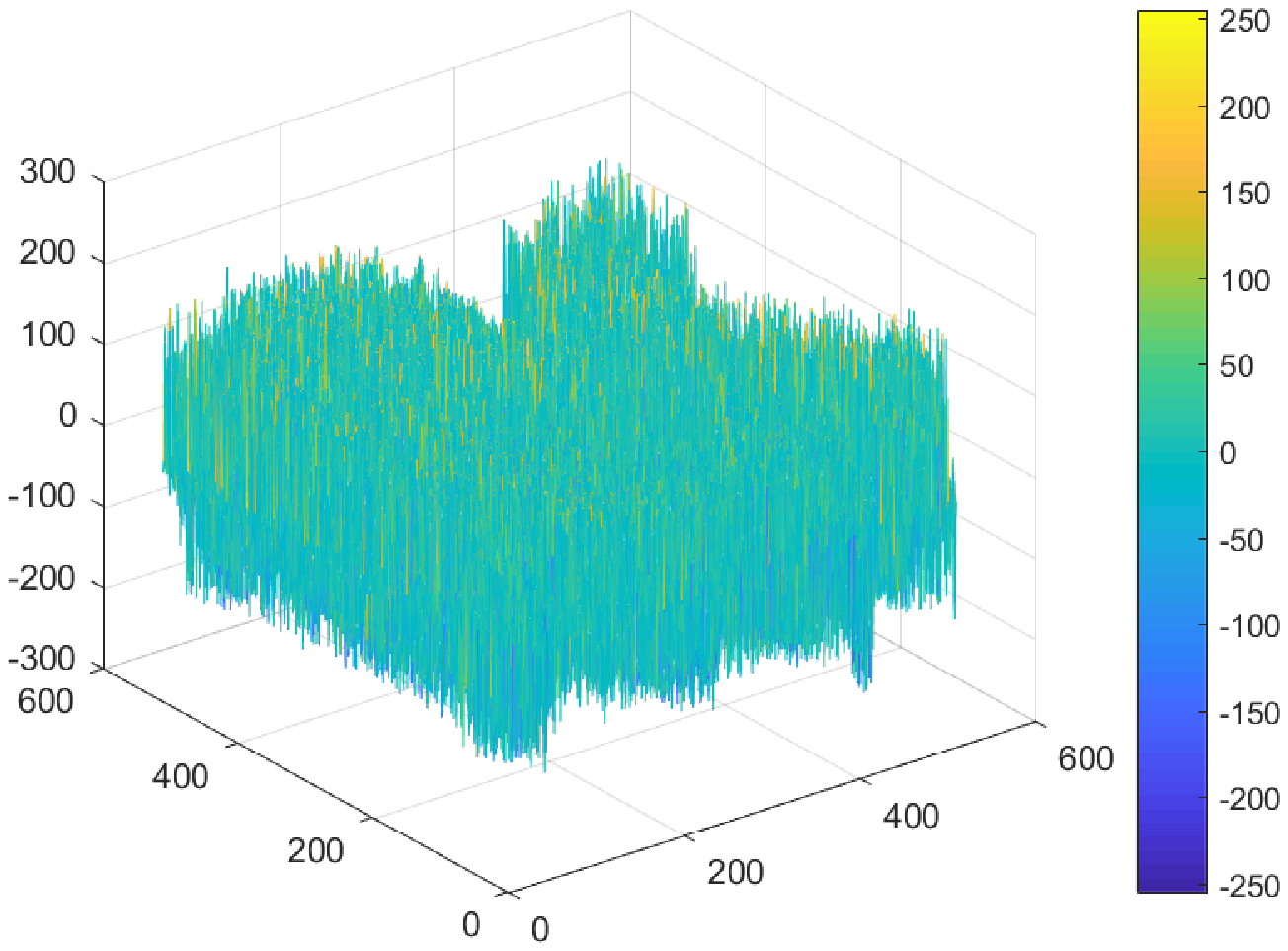}
\centerline{ \footnotesize (c)}
\end{minipage}
\caption{Noise-free image and two observed ones corrupted by Gaussian and mixed noise, respectively. The first row: (a) noise-free image; (b) observed image with Gaussian noise; and (c) observed image with mixed noise. The second row portrays noise included in three images.}
\label{fig:noise}
\end{figure}

As Fig. \ref{fig:noise}(b) shows, Gaussian noise is overall organized. As a common sense, Poisson distribution is a Gaussian-like one under the condition of enough samples. Therefore, due to impulse noise, mixed noise is disorganized as shown in Fig. \ref{fig:noise}(c). In Fig. \ref{fig:distribution}, we portray the distributions of Gaussian and mixed noise, respectively.

\begin{figure}[htb]
\setlength{\abovecaptionskip}{0pt}
\setlength{\belowcaptionskip}{0pt}
\centering
\begin{minipage}[t]{1\linewidth}
\centering
\includegraphics[width=1\textwidth]{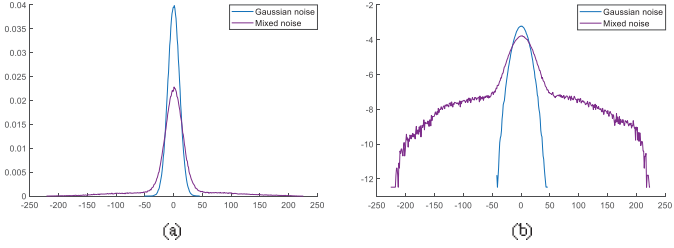}
\end{minipage}
\caption{Distributions of Gaussian and mixed noise in different domains. (a) linear domain; and (b) logarithmic domain.}
\label{fig:distribution}
\end{figure}

Fig. \ref{fig:distribution}(a) shows noise distribution in a linear domain. To illustrate a heavy tail intuitively, we present it in a logarithmic domain as shown in Fig. \ref{fig:distribution}(b). Clearly, Poisson noise leads to a Gaussian-like distribution. Nevertheless, impulse noise gives rise to a more irregular distribution with a heavy tail. Therefore, neither $\ell_{1}$ norm nor $\ell_{2}$ norm can precisely characterize the residual $\bm{R}$ in the sense of the MAP estimation.

\subsection{Residual-driven FCM with Weighted $\ell_{2}$-norm Fidelity}
Intuitively, if the fidelity term can be modified so as to make mixed noise distribution more Gaussian-like, we can still use $\ell_{2}$ norm to characterize residual $\bm{R}$. It means that mixed noise can be more accurately estimated. Therefore, we adopt robust estimation techniques \cite{Huber1973,Huber1981} to weaken the heavy tail, which makes mixed noise distribution more regular. In the sequel, we assign a proper weight $w_{jl} $ to each residual $r_{jl} $, which forms a weighted residual $w_{jl} r_{jl} $ that almost obeys a Gaussian distribution. Given Fig. \ref{fig:weightdistribution}, we use an example for showing the effect of weighting.

\begin{figure}[htb]
\setlength{\abovecaptionskip}{0pt}
\setlength{\belowcaptionskip}{0pt}
\centering
\begin{minipage}[t]{1\linewidth}
\centering
\includegraphics[width=1\textwidth]{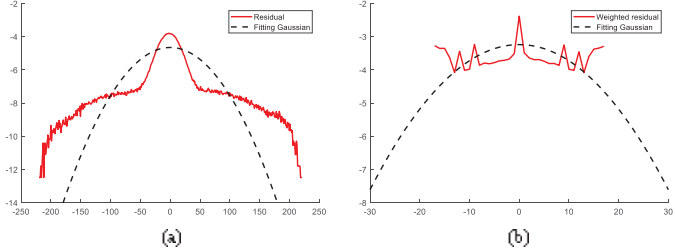}
\end{minipage}
\caption{Distributions of residual $r_{jl} $ and weighted residual $w_{jl} r_{jl} $, as well as the fitting Gaussian function in the logarithmic domain.}
\label{fig:weightdistribution}
\end{figure}

Fig. \ref{fig:weightdistribution}(a) shows the distribution of $r_{jl} $ and the fitting Gaussian function based on the variance of $r_{jl} $. Fig. \ref{fig:weightdistribution}(b) gives the distribution of $w_{jl} r_{jl} $ and the fitting Gaussian function based on the variance of $w_{jl} r_{jl} $. Clearly, the distribution of $w_{jl} r_{jl} $ in Fig. \ref{fig:weightdistribution}(b) is more Gaussian-like than that in Fig. \ref{fig:weightdistribution}(a), which means that $\ell_{2} $-norm fidelity can work on weighted residual $w_{jl} r_{jl} $ for a MAP-like solution of $\bm{R}$.

By analyzing Fig. \ref{fig:weightdistribution}, for $l=1,2,\cdots ,L$, we propose a weighted $\ell _{2} $-norm fidelity term for mixed or unknown noise estimation:
\begin{equation} \label{GrindEQ__6_}
\Gamma ({\bm R}_{l} )= \|{\bm W}_{l} \circ {\bm R}_{l} \|_{\ell _{2} }^{2} =\sum_{j=1}^{K}|w_{jl} r_{jl} |^{2}
\end{equation}
where $\circ $ performs element-by-element multiplication of ${\bm R}_{l} =(r_{1l} ,r_{2l} ,\cdots ,r_{Kl} )^{T} $ and ${\bm W}_{l} =(w_{1l} ,w_{2l} ,\cdots ,w_{Kl} )^{T} $. For $l=1,2,\cdots ,L$, ${\bm W}_{l} $ makes up a weight matrix ${\bm W}=[w_{jl}]_{K\times L} $. Each element $w_{jl} $ is assigned to location $(j,l)$. Since it is inversely proportional to residual $r_{jl} $, it can be automatically determined. In this work, we adopt the following expression:
\begin{equation} \label{GrindEQ__7_}
w_{jl} =e^{-\xi r_{jl}^{2} }
\end{equation}
where $\xi $ is a positive parameter, which aims to control the decreasing rate of $w_{jl} $.

By substituting \eqref{GrindEQ__6_} into \eqref{GrindEQ__4_} combined with \eqref{GrindEQ__5_}, we present RFCM with weighted $\ell_{2} $-norm fidelity (WRFCM) for image segmentation:
\begin{equation} \label{GrindEQ__8_}
\begin{array}{l} {J(\bm{U},\bm{V},\bm{R},\bm{W})}\\
{{\kern 5pt}=\sum\limits_{i=1}^{c}\sum\limits_{j=1}^{K}u_{ij}^{m} \|{\bm x}_{j} -{\bm r}_{j} -{\bm v}_{i} \|^{2}+\sum\limits_{l=1}^{L}\beta_{l} \|{\bm W}_{l} \circ {\bm R}_{l} \|_{\ell _{2} }^{2}  } \\ {{\kern 5pt}=\sum\limits_{i=1}^{c}\sum\limits_{j=1}^{K}u_{ij}^{m} \|{\bm x}_{j} -{\bm r}_{j} -{\bm v}_{i} \|^{2}+\sum\limits_{l=1}^{L}\beta _{l} \sum\limits_{j=1}^{K}|w_{jl} r_{jl} |^{2}   } \end{array}
\end{equation}

When coping with image segmentation problems, since each image pixel is closely related to its neighbors, using spatial information has a positive impact on FCM as shown in \cite{Ahmed2002,Zhang2019}. If there exists a small distance between a target pixel and its neighbors, they most likely belong to a same cluster. Therefore, we introduce spatial information into \eqref{GrindEQ__8_}, thus resulting in our final objective function:
\begin{equation} \label{GrindEQ__9_}
\begin{array}{l} {J({\bm U},{\bm V},{\bm R},{\bm W})=\sum\limits_{i=1}^{c}\sum\limits_{j=1}^{K}u_{ij}^{m} \left(\sum\limits_{n\in {\rm {\mathcal N}}_{j} }\frac{\|{\bm x}_{n} -{\bm r}_{n} -{\bm v}_{i} \|^{2} }{1+d_{nj} }  \right)} \\ {{\kern 1pt} {\kern 1pt} {\kern 1pt} {\kern 1pt} {\kern 1pt} {\kern 1pt} {\kern 1pt} {\kern 1pt} {\kern 1pt} {\kern 1pt} {\kern 1pt} {\kern 1pt} {\kern 1pt} {\kern 1pt} {\kern 1pt} {\kern 1pt} {\kern 1pt} {\kern 1pt} {\kern 1pt} {\kern 1pt} {\kern 1pt} {\kern 1pt} {\kern 1pt} {\kern 1pt} {\kern 1pt} {\kern 1pt} {\kern 1pt} {\kern 1pt} {\kern 1pt} {\kern 1pt} {\kern 1pt} {\kern 1pt} {\kern 1pt} {\kern 1pt} {\kern 1pt} {\kern 1pt} {\kern 1pt} {\kern 1pt} {\kern 1pt} {\kern 1pt} {\kern 1pt} {\kern 1pt} {\kern 1pt} {\kern 1pt} {\kern 1pt} {\kern 1pt} {\kern 1pt} {\kern 1pt} {\kern 1pt} {\kern 1pt} {\kern 1pt} {\kern 1pt} {\kern 1pt} {\kern 1pt} {\kern 1pt} {\kern 1pt} {\kern 1pt} {\kern 1pt} {\kern 1pt} {\kern 1pt} {\kern 1pt} {\kern 1pt} {\kern 1pt} {\kern 1pt} {\kern 1pt} {\kern 1pt} {\kern 1pt} {\kern 1pt} {\kern 1pt} {\kern 1pt} {\kern 1pt} {\kern 1pt} {\kern 1pt} {\kern 1pt} {\kern 1pt} {\kern 1pt} {\kern 1pt} {\kern 1pt} +\sum\limits_{l=1}^{L}\beta _{l} \sum\limits_{j=1}^{K}\sum\limits_{n\in {\rm {\mathcal N}}_{j} }\frac{|w_{nl} r_{nl} |^{2} }{1+d_{nj} } } \end{array}
\end{equation}

In \eqref{GrindEQ__9_}, an image pixel is sometimes loosely represented by its corresponding index even though this is not ambiguous. Thus, $n$ is a neighbor pixel of $j$, ${\rm {\mathcal N}}_{j} $ stands for a local window centralized in $j$, and $d_{nj} $ represents the Euclidean distance between $n$ and $j$.

\subsection{Minimization Algorithm}
Minimizing \eqref{GrindEQ__9_} involves four unknowns, i.e., $\bm U$, $\bm V$, $\bm R$ and $\bm W$. According to \eqref{GrindEQ__7_}, $\bm W$ is automatically determined by $\bm R$. Hence, we can design a two-step iterative algorithm to minimize \eqref{GrindEQ__9_}, which fixes $\bm W$ first to solve $\bm U$, $\bm V$ and $\bm R$, then uses $\bm R$ to update $\bm W$. The main task in each iteration is to solve the minimization problem in terms of $\bm U$, $\bm V$ and $\bm R$ when fixing $\bm W$. Assume that $\bm W$ is given. We can apply a Lagrangian multiplier method to minimize \eqref{GrindEQ__9_}. The Lagrangian function is expressed as:
\begin{footnotesize}
\begin{equation} \label{GrindEQ__10_}
\begin{array}{l} {{\rm {\mathcal L}}_{\Lambda } ({\bm U},{\bm V},{\bm R};{\bm W})=\sum\limits_{i=1}^{c}\sum\limits_{j=1}^{K}u_{ij}^{m} \left(\sum\limits_{n\in {\it {\mathcal N}}_{j} }\frac{\|{\bm x}_{n} -{\bm r}_{n} -{\bm v}_{i} \|^{2} }{1+d_{nj} }  \right)  } \\ { {\kern 15pt} +\sum\limits_{l=1}^{L}\beta_{l} \sum\limits_{j=1}^{K}\sum\limits_{n\in {\it {\mathcal N}}_{j} }\frac{|w_{nl} r_{nl} |^{2} }{1+d_{nj} } +\sum\limits_{j=1}^{K}\lambda_{j} \left(\sum\limits_{i=1}^{c}u_{ij}-1\right) } \end{array}
\end{equation}
\end{footnotesize}where $\Lambda =\{ \lambda_{j}:j=1,2,\cdots ,K\} $ is a set of Lagrangian multipliers. The two-step iterative algorithm for minimizing \eqref{GrindEQ__9_} is realized in Algorithm \ref{algorithm1}.


\begin{algorithm}[htb]
\caption{Two-step iterative algorithm}
\begin{algorithmic}
\STATE Given a threshold $\varepsilon$, input ${\bm W}^{(0)} $. For $t=0,1,\cdots $, iterate:
\STATE \quad\textbf{Step 1}: Find minimizers ${\bm U}^{(t+1)} $, ${\bm V}^{(t+1)} $, and ${\bm R}^{(t+1)} $:
\begin{footnotesize}
\begin{equation}\label{GrindEQ__11_}
\left({\bm U}^{(t+1)} ,{\bm V}^{(t+1)} ,{\bm R}^{(t+1)} \right)=\arg {\mathop{\min }\limits_{{\bm U},{\bm V},{\bm R}}} {\rm {\mathcal L}}_{\Lambda } ({\bm U},{\bm V},{\bm R};{\bm W}^{(t)} )
\end{equation}
\end{footnotesize}
\STATE \quad\textbf{Step 2}: Update the weight matrix ${\bm W}^{(t{\rm +}1)} $
\STATE If $\|{\bm U}^{(t{\rm +}1)} -{\bm U}^{(t)}\|<\varepsilon $, stop; else update $t$ such that \\
\quad\quad\quad\quad\quad\quad\quad\quad\quad\quad$0\le t\uparrow <+\infty$
\end{algorithmic}
\label{algorithm1}
\end{algorithm}

The minimization problem \eqref{GrindEQ__11_} can be divided into the following three subproblems:
\begin{equation} \label{GrindEQ__12_}
\left\{\begin{array}{l} {{\bm U}^{(t+1)} =\arg {\mathop{\min }\limits_{{\bm U}}} {\rm {\mathcal L}}_{\Lambda } ({\bm U},{\bm V}^{(t)} ,{\bm R}^{(t)} ;{\bm W}^{(t)} )} \\ {{\bm V}^{(t+1)} =\arg {\mathop{\min }\limits_{{\bm V}}} {\rm {\mathcal L}}_{\Lambda } ({\bm U}^{(t+1)} ,{\bm V},{\bm R}^{(t)} ;{\bm W}^{(t)} )} \\ {{\bm R}^{(t+1)} =\arg {\mathop{\min }\limits_{{\bm R}}} {\rm {\mathcal L}}_{\Lambda } ({\bm U}^{(t+1)} ,{\bm V}^{(t+1)} ,{\bm R};{\bm W}^{(t)} )} \end{array}\right.
\end{equation}

Each subproblem in \eqref{GrindEQ__12_} has a closed-form solution. We use an alternative optimization scheme similar to the one used in FCM to optimize $\bm{U}$ and $\bm{V}$. The following result is needed to obtain the iterative updates of $\bm{U}$ and $\bm{V}$.

\begin{thm}\label{lem1}
Consider the first two subproblems of \eqref{GrindEQ__12_}. By applying the Lagrangian multiplier method to solve them, the iterative solutions are presented as:
\begin{equation} \label{GrindEQ__13_}
u_{ij}^{(t+1)} =\frac{\left(\sum\limits_{n\in {\rm {\mathcal N}}_{j} }\frac{\|{\bm x}_{n} -{\bm r}_{n}^{(t)} -{\bm v}_{i}^{(t)} \|^{2} }{1+d_{nj} }  \right)^{-1/(m-1)} }{\sum\limits_{q=1}^{c}\left(\sum\limits_{n\in {\rm {\mathcal N}}_{j} }\frac{\|{\bm x}_{n} -{\bm r}_{n}^{(t)} -{\bm v}_{q}^{(t)} \|^{2} }{1+d_{nj} }  \right)^{-1/(m-1)}  }
\end{equation}
\begin{equation} \label{GrindEQ__14_}
{\bm v}_{i}^{(t+1)} =\frac{\sum\limits_{j=1}^{K}\left(\left(u_{ij}^{(t+1)} \right)^{m} \sum\limits_{n\in {\rm {\mathcal N}}_{j} }\frac{{\bm x}_{n} -{\bm r}_{n}^{(t)} }{1+d_{nj} }  \right) }{\sum\limits_{j=1}^{K}\left(\left(u_{ij}^{(t+1)} \right)^{m} \sum\limits_{n\in {\rm {\mathcal N}}_{j} }\frac{1}{1+d_{nj} }  \right) }
\end{equation}
\end{thm}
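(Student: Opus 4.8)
The plan is to treat the first two subproblems of \eqref{GrindEQ__12_} as first-order optimality problems and to read off closed-form minimizers from the stationarity conditions of the Lagrangian \eqref{GrindEQ__10_}. Throughout, it helps to abbreviate the distance-weighted neighborhood cost by $D_{ij}:=\sum_{n\in\mathcal{N}_j}\frac{\|\bm{x}_n-\bm{r}_n^{(t)}-\bm{v}_i^{(t)}\|^2}{1+d_{nj}}$, so that once $\bm{V}$ and $\bm{R}$ are frozen at their current iterates the $\bm{U}$-dependent part of $\mathcal{L}_\Lambda$ collapses to $\sum_{i,j}u_{ij}^m D_{ij}+\sum_j\lambda_j(\sum_i u_{ij}-1)$, and the weighted-residual term becomes an additive constant that can be discarded in each of the two subproblems.

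For the partition matrix I would differentiate $\mathcal{L}_\Lambda$ with respect to a single entry $u_{ij}$, which gives $m u_{ij}^{m-1}D_{ij}+\lambda_j=0$. Since $m>1$ and $D_{ij}>0$, this solves to $u_{ij}=\big(-\lambda_j/(m D_{ij})\big)^{1/(m-1)}$, so $u_{ij}$ is proportional to $D_{ij}^{-1/(m-1)}$ with a factor depending only on the column index $j$. Substituting this form into the normalization constraint $\sum_{i=1}^c u_{ij}=1$ eliminates the multiplier $\lambda_j$ and fixes that factor as $1/\sum_{q=1}^c D_{qj}^{-1/(m-1)}$, which reproduces exactly \eqref{GrindEQ__13_} after restoring the definition of $D_{ij}$.

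For the prototypes there is no constraint, so I would set the gradient with respect to each scalar component $v_{il}$ to zero. Only the first sum of $\mathcal{L}_\Lambda$ depends on $\bm{V}$, and $\partial\mathcal{L}_\Lambda/\partial v_{il}=-2\sum_{j=1}^K u_{ij}^m\sum_{n\in\mathcal{N}_j}\frac{x_{nl}-r_{nl}-v_{il}}{1+d_{nj}}=0$. This equation is linear in $v_{il}$; isolating $v_{il}$ and collecting the coefficient $\sum_j u_{ij}^m\sum_{n}\frac{1}{1+d_{nj}}$ yields \eqref{GrindEQ__14_} channel by channel, after which reassembling the $L$ channels recovers the stated vector form.

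The differentiation itself is routine; the step deserving the most care is the elimination of $\lambda_j$ in the $\bm{U}$-update, where one must verify that the $j$-dependent factor extracted from $(-\lambda_j/m)^{1/(m-1)}$ is common to all clusters $i$ so that it factors cleanly out of the normalization sum. It is also worth confirming that these stationary points are genuine minimizers: for each fixed $j$ the objective $\sum_i u_{ij}^m D_{ij}$ is strictly convex on the probability simplex when $m>1$ and $D_{ij}>0$, and the $\bm{V}$-objective is a positive-definite quadratic in each $v_{il}$, so in both subproblems the unique critical point attains the minimum.
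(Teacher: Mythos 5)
Your proposal is correct and follows essentially the same route as the paper's own proof in the Appendix: introduce $D_{ij}$, zero the gradient of the Lagrangian in $u_{ij}$, eliminate $\lambda_j$ via the normalization constraint $\sum_{i=1}^{c}u_{ij}=1$, and then zero the (unconstrained) gradient in $\bm{v}_i$ to solve the resulting linear equation. Your added observations --- working channel by channel for $\bm{v}_i$ and confirming via convexity that the stationary points are minimizers --- are harmless refinements of the same argument rather than a different approach.
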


\begin{proof}
See Appendix.
\end{proof}

In the last subproblem of \eqref{GrindEQ__12_}, both ${\bm r}_{j} $ and ${\bm r}_{n} $ appear simultaneously. Since ${\bm r}_{n} $ is dependent to ${\bm r}_{j} $, it should not be considered as a constant vector. In other words, $n$ is one of neighbors of $j$ while $j$ is one of neighbors of $n$ symmetrically. Thus, $n\in {\rm {\mathcal N}}_{j} $ is equivalent to $j\in {\rm {\mathcal N}}_{n} $. Thus we have:
\begin{footnotesize}
\begin{equation} \label{GrindEQ__15_}
\sum _{j=1}^{K}u_{ij}^{m} \left(f({\bm r}_{j} )+\sum _{\begin{array}{l} {n\in {\rm {\mathcal N}}_{j} } \\ {{\kern 1pt} {\kern 1pt} n\ne j} \end{array}}f({\bm r}_{n} ) \right) =\sum _{j=1}^{K}\sum\limits_{n\in {\rm {\mathcal N}}_{j} }u_{in}^{m} (f({\bm r}_{j} ))
\end{equation}
\end{footnotesize}where $f$ represents a function in terms of ${\bm r}_{j} $ or ${\bm r}_{n} $. By \eqref{GrindEQ__15_}, we rewrite \eqref{GrindEQ__9_} as
\begin{equation} \label{GrindEQ__16_}
\begin{array}{l} {J({\bm U},{\bm V},{\bm R},{\bm W})=\sum\limits_{i=1}^{c}\sum\limits_{j=1}^{K}\sum\limits_{n\in {\rm {\mathcal N}}_{j} }\frac{u_{in}^{m} \|{\bm x}_{j} -{\bm r}_{j} -{\bm v}_{i} \|^{2} }{1+d_{nj} }}\\ {{\kern 1pt} {\kern 1pt} {\kern 1pt} {\kern 1pt} {\kern 1pt} {\kern 1pt} {\kern 1pt} {\kern 1pt} {\kern 1pt} {\kern 1pt} {\kern 1pt} {\kern 1pt} {\kern 1pt} {\kern 1pt} {\kern 1pt} {\kern 1pt} {\kern 1pt} {\kern 1pt} {\kern 1pt} {\kern 1pt} {\kern 1pt} {\kern 1pt}{\kern 1pt} {\kern 1pt} {\kern 1pt} {\kern 1pt} {\kern 1pt} {\kern 1pt} {\kern 1pt} {\kern 1pt} {\kern 1pt} {\kern 1pt} {\kern 1pt} {\kern 1pt} {\kern 1pt} {\kern 1pt} {\kern 1pt} {\kern 1pt} {\kern 1pt} {\kern 1pt} {\kern 1pt} {\kern 1pt}{\kern 1pt} {\kern 1pt} {\kern 1pt} {\kern 1pt} {\kern 1pt} {\kern 1pt} {\kern 1pt} {\kern 1pt} {\kern 1pt} {\kern 1pt} {\kern 1pt} {\kern 1pt} {\kern 1pt} {\kern 1pt} {\kern 1pt} {\kern 1pt} {\kern 1pt} {\kern 1pt} {\kern 1pt} {\kern 1pt} {\kern 1pt} {\kern 1pt} {\kern 1pt}{\kern 1pt} {\kern 1pt} {\kern 1pt}+\sum\limits_{l=1}^{L}\beta _{l} \sum\limits_{j=1}^{K}\sum\limits_{n\in {\rm {\mathcal N}}_{j} }\frac{|w_{jl} r_{jl} |^{2} }{1+d_{nj} } }  \end{array}
\end{equation}

According to the two-step iterative algorithm, we assume that ${\bm W}$ in \eqref{GrindEQ__16_} is fixed in advance. When $\bm{U}$ and $\bm{V}$ are updated, the last subproblem of \eqref{GrindEQ__12_} is separable and can be decomposed into $K\times L$ subproblems:
\begin{footnotesize}
\begin{equation} \label{GrindEQ__17_}
\begin{array}{l} {r_{jl}^{(t+1)} =\arg {\mathop{\min }\limits_{r_{jl} }} \sum\limits_{i=1}^{c}\left(\sum\limits_{n\in {\rm {\mathcal N}}_{j} }\frac{\left(u_{in}^{(t+1)} \right)^{m} \|x_{jl} -r_{jl} -v_{il}^{(t+1)} \|^{2} }{1+d_{nj} }  \right) }\\{{\kern 60pt} +\sum\limits_{n\in {\rm {\mathcal N}}_{j} }\frac{\beta _{l} |w_{jl}^{(t)} r_{jl} |^{2} }{1+d_{nj} }}  \end{array}
\end{equation}
\end{footnotesize}By zeroing the gradient of the energy function in \eqref{GrindEQ__17_} in terms of $r_{jl} $, the iterative solution to \eqref{GrindEQ__17_} is expressed as:
\begin{equation} \label{GrindEQ__18_}
r_{jl}^{(t+1)} =\frac{\sum\limits_{i=1}^{c}\sum\limits_{n\in {\rm {\mathcal N}}_{j} }\frac{\left(u_{in}^{(t{\rm +}1)} \right)^{m} \left(x_{jl} -v_{il}^{(t+1)} \right)}{1+d_{nj} }   }{\sum\limits_{i=1}^{c}\sum\limits_{n\in {\rm {\mathcal N}}_{j} }\frac{\left(u_{in}^{(t{\rm +}1)} \right)^{m} }{1+d_{nj} }   +\sum\limits_{n\in {\rm {\mathcal N}}_{j} }\frac{2\beta_{l} \left(w_{jl}^{(t)} \right)^{2} }{1+d_{nj}}}
\end{equation}

To show the impact of weighted $\ell_{2} $-norm fidelity on FCM, we show an example, as shown in Fig. \ref{fig:fidelity}. We impose a mixture of Poisson, Gaussian, and impulse noise ($\sigma =30,$ $p=20\% $) on a noise-free image in Fig. \ref{fig:fidelity}(a). We set $c$ to 4. The settings of $\xi $ and ${\bm \beta}$ are discussed in the later section.
\begin{figure}[htb]
\centering
\begin{minipage}[t]{0.24\linewidth}
\centering
\includegraphics[width=1\textwidth]{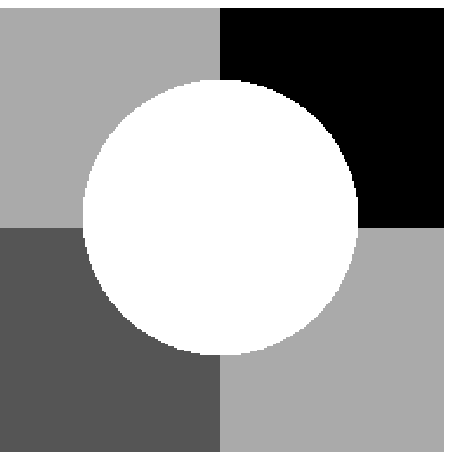}
\centerline{\footnotesize (a)}
\end{minipage}
\begin{minipage}[t]{0.24\linewidth}
\centering
\includegraphics[width=1\textwidth]{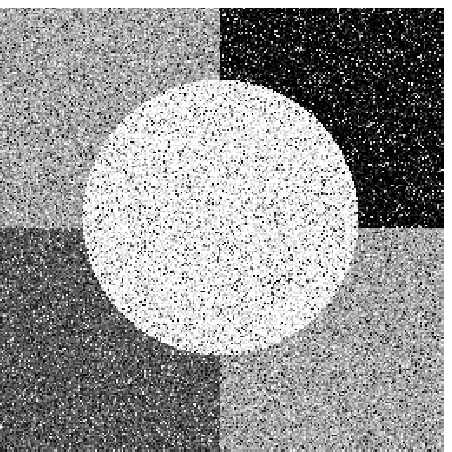}
\centerline{\footnotesize (b)}
\end{minipage}
\begin{minipage}[t]{0.24\linewidth}
\centering
\includegraphics[width=1\textwidth]{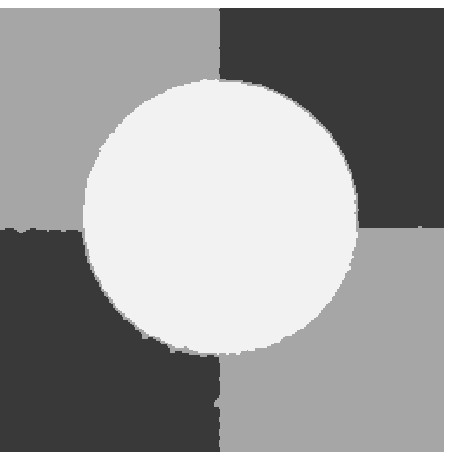}
\centerline{ \footnotesize (c)}
\end{minipage}
\begin{minipage}[t]{0.24\linewidth}
\centering
\includegraphics[width=1\textwidth]{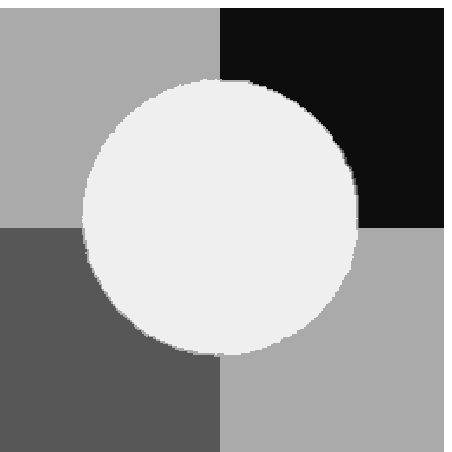}
\centerline{ \footnotesize (d)}
\end{minipage}
\begin{minipage}[t]{0.24\linewidth}
\centering
\includegraphics[width=1\textwidth]{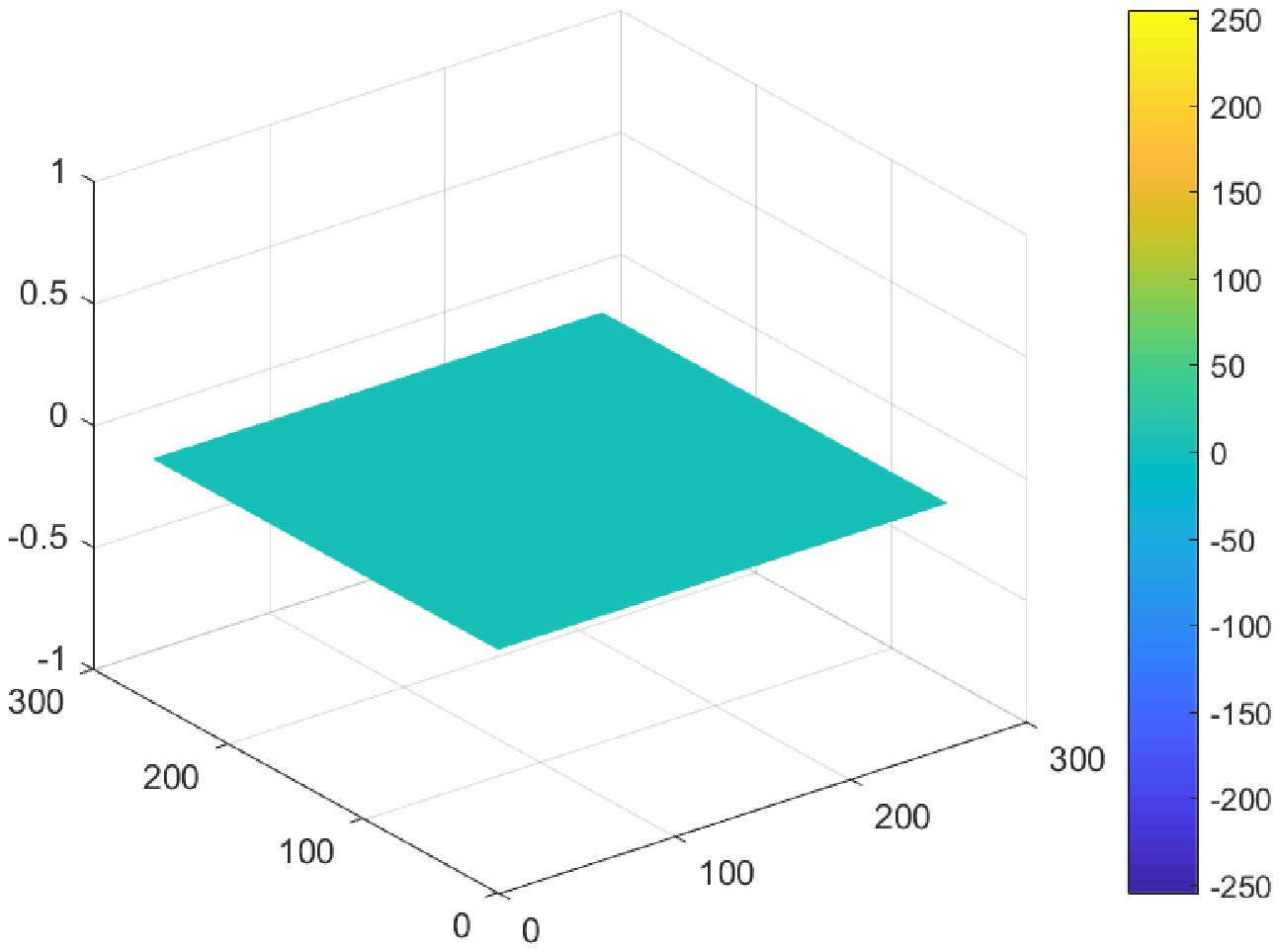}
\centerline{\footnotesize (e)}
\end{minipage}
\begin{minipage}[t]{0.24\linewidth}
\centering
\includegraphics[width=1\textwidth]{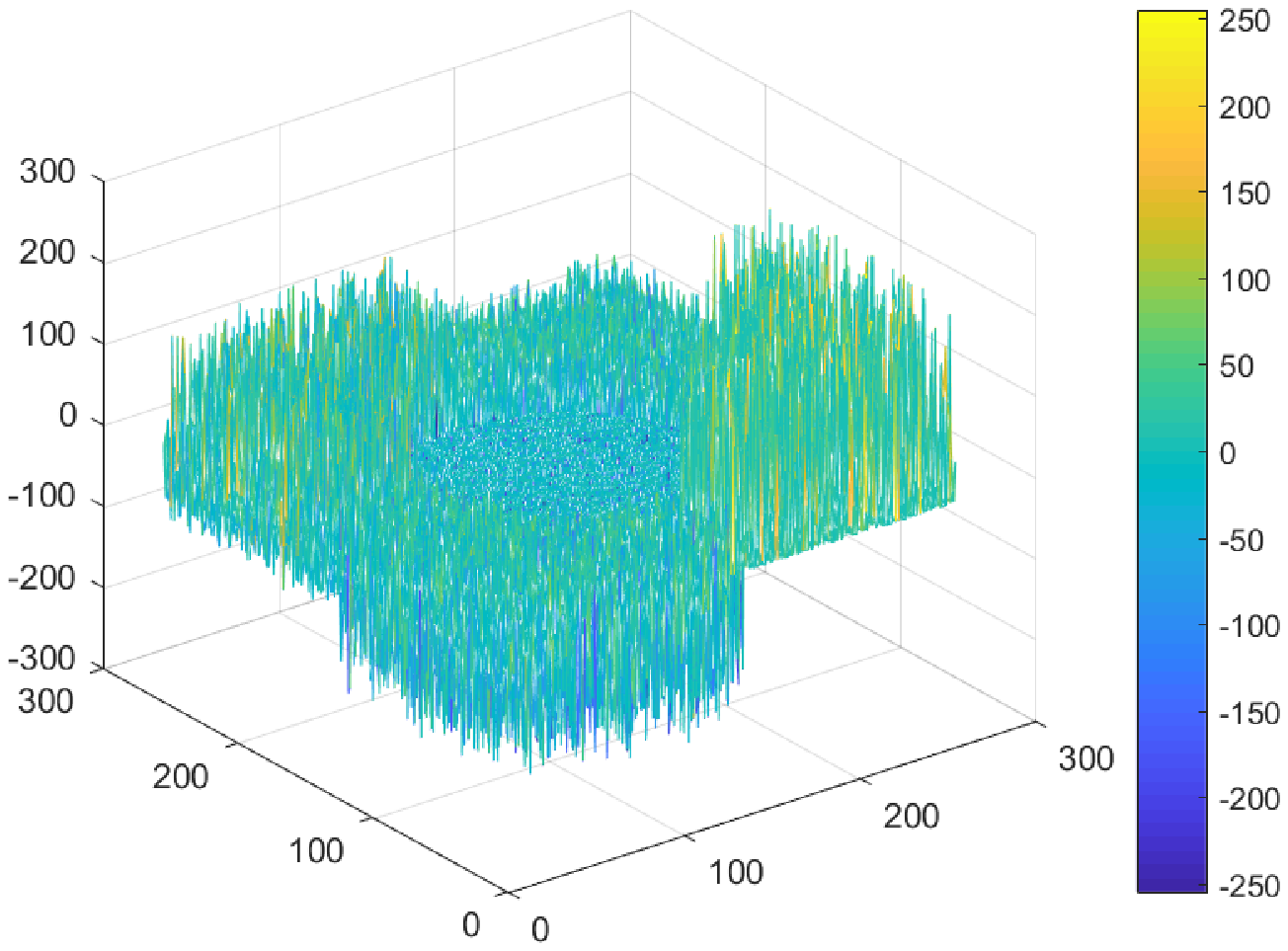}
\centerline{\footnotesize (f)}
\end{minipage}
\begin{minipage}[t]{0.24\linewidth}
\centering
\includegraphics[width=1\textwidth]{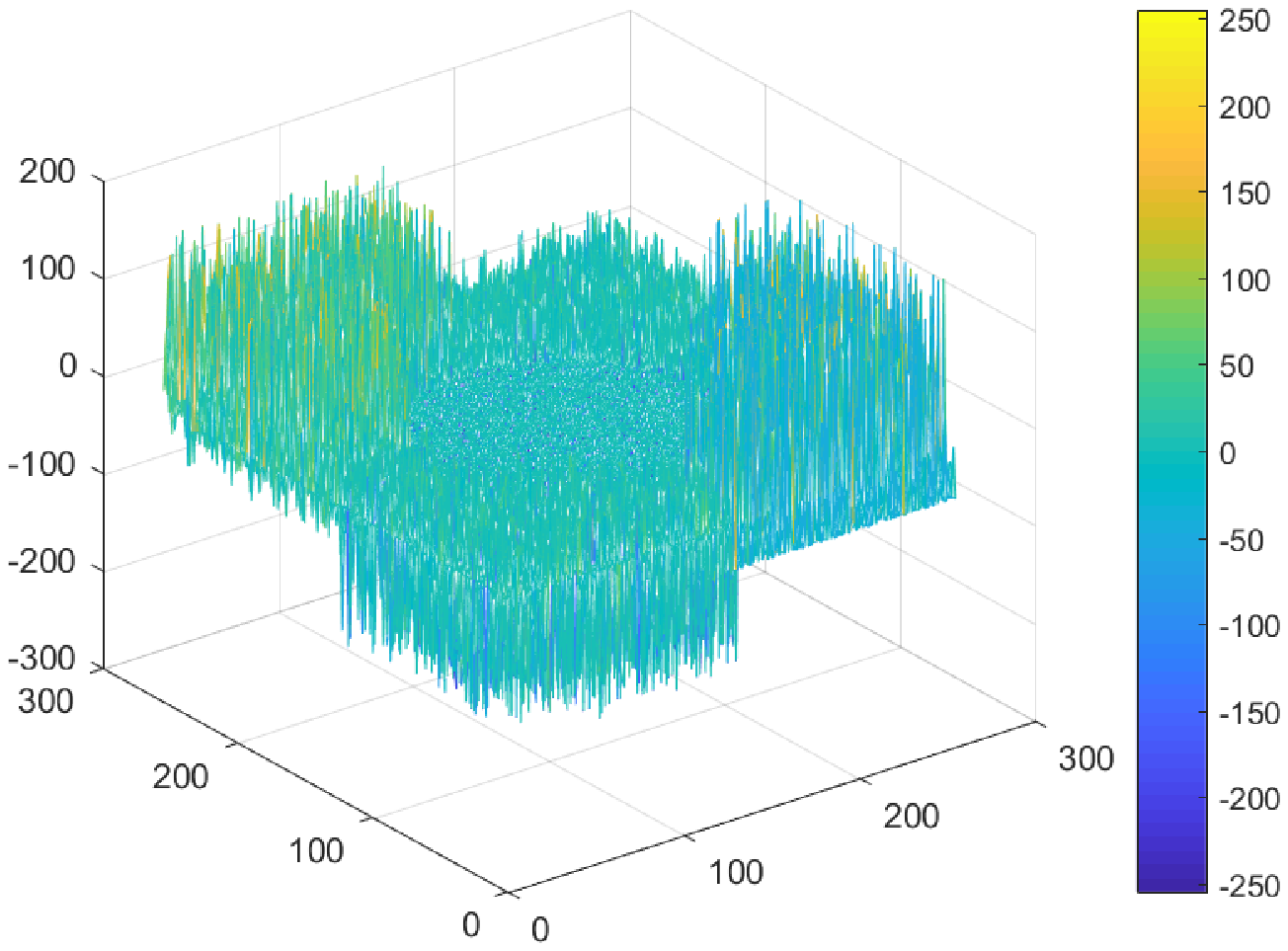}
\centerline{ \footnotesize (g)}
\end{minipage}
\begin{minipage}[t]{0.24\linewidth}
\centering
\includegraphics[width=1\textwidth]{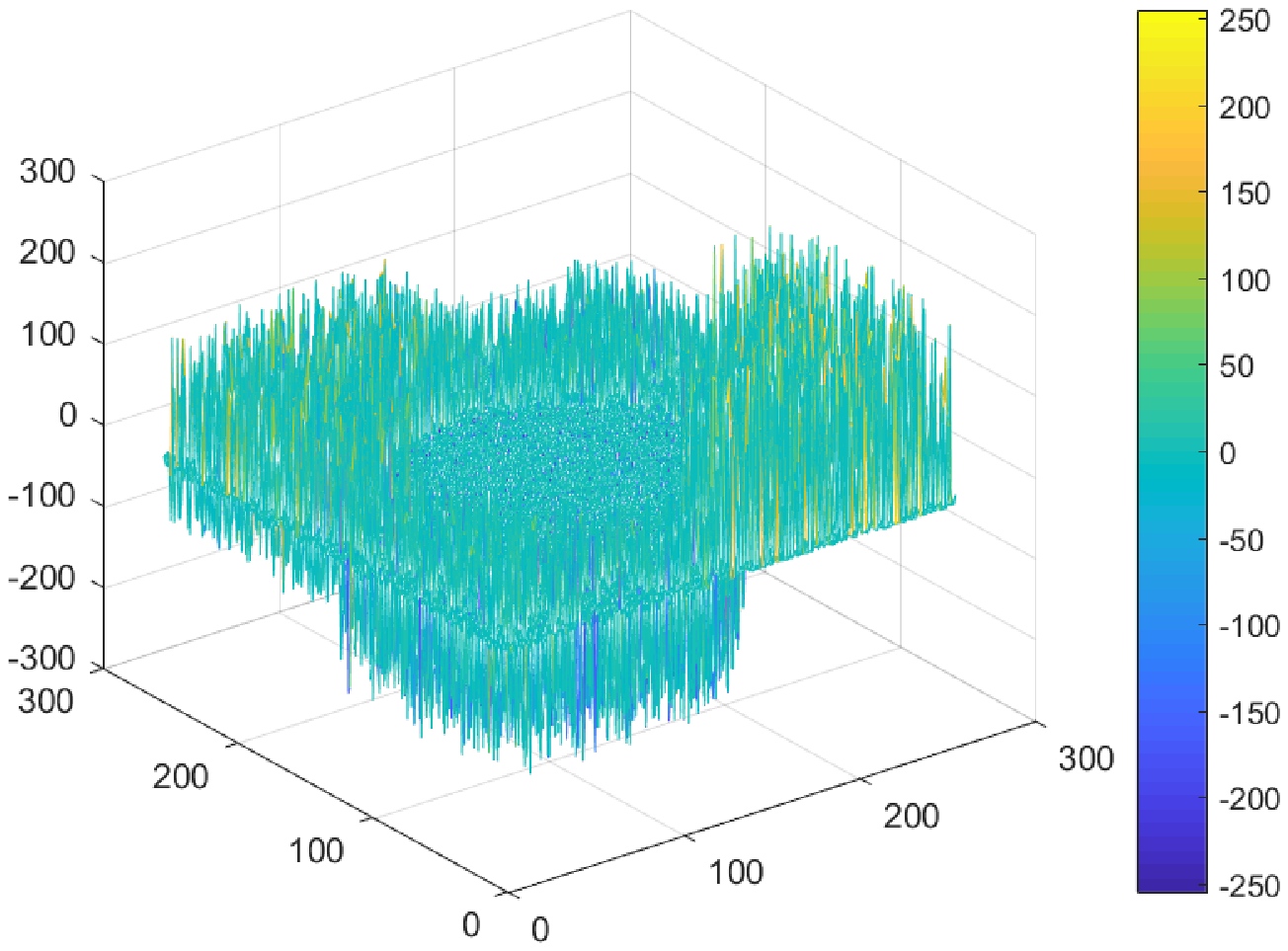}
\centerline{ \footnotesize (h)}
\end{minipage}
\caption{Noise estimation comparison between DSFCM\_N and WRFCM. (a) noise-free image; (b) observed image; (c) segmented image of DSFCM\_N; (d) segmented image of WRFCM; (e) noise in the noise-free image; (f) noise in the observed image; (g) noise estimation of DSFCM\_N; and (h) noise estimation of WRFCM.}
\label{fig:fidelity}
\end{figure}

As shown in Fig. \ref{fig:fidelity}, the noise estimation of DSFCM\_N in Fig. \ref{fig:fidelity}(g) is far from the true one in Fig. \ref{fig:fidelity}(f). However, WRFCM achieves a better noise estimation result as shown in Fig. \ref{fig:fidelity}(h). In addition, it has better performance for \mbox{noise-suppression} and feature-preserving than DSFCM\_N, which can be visually observed from Fig. \ref{fig:fidelity}(c) and (d).

Algorithm \ref{algorithm1} is terminated when $\|{\bm U}^{(t{\rm +}1)} -{\bm U}^{(t)} \|<\varepsilon$. Based on optimal $\bm U$ and $\bm V$, a segmented image $\widehat{{\bm X}}$ is obtained. WRFCM for minimizing \eqref{GrindEQ__9_} is realized in Algorithm \ref{algorithm2}.
\begin{algorithm}[htb]
\caption{Residual-driven FCM with weighted $\ell_{2}$-norm fidelity (WRFCM)}
\begin{algorithmic}[1]
\REQUIRE Observed image $\bm{X}$, fuzzification exponent $m$, number of clusters $c$, and threshold $\varepsilon$.
\ENSURE Segmented image $\widehat{{\bm X}}$.
\STATE Initialize ${\bm W}^{(0)} $ as a matrix of ones and generate randomly prototypes ${\bm V}^{(0)} $
\STATE $t\leftarrow 0$
\REPEAT
\STATE Calculate the partition matrix ${\bm U}^{(t+1)} $ via \eqref{GrindEQ__13_}
\STATE Calculate the prototypes ${\bm V}^{(t+1)} $ via \eqref{GrindEQ__14_}
\STATE Calculate the residual ${\bm R}^{(t{\rm +}1)} $ via \eqref{GrindEQ__18_}
\STATE Update the weight matrix ${\bm W}^{(t{\rm +}1)} $ via \eqref{GrindEQ__7_}
\STATE $t\leftarrow t+1$
\UNTIL {$ \|\bm U^{(t+1)}-\bm U^{(t)}\|< \varepsilon $}
\STATE \textbf{return} $\bm U$, $\bm V$, $\bm R$ and $\bm W$\\
\STATE Generate the segmented image $\widehat{{\bm X}}$ based on $\bm U$ and $\bm V$
\end{algorithmic}
\label{algorithm2}
\end{algorithm}

\subsection{Convergence Analysis}
In WRFCM, we set $\|{\bm U}^{(t{\rm +}1)} -{\bm U}^{(t)} \|<\varepsilon $ as the termination condition. In order to analyze the convergence of WRFCM, we take Fig. \ref{fig:fidelity} as a case study. We set $\varepsilon =1\times 10^{-6} $. In Fig. \ref{fig:convergence}, we draw the curves of $\theta=\|{\bm U}^{(t{\rm +}1)} -{\bm U}^{(t)} \|$ and $J$ versus iteration step $t$, respectively.
\begin{figure}[htb]
\setlength{\abovecaptionskip}{0pt}
\setlength{\belowcaptionskip}{0pt}
\centering
\begin{minipage}[t]{1\linewidth}
\centering
\includegraphics[width=1\textwidth]{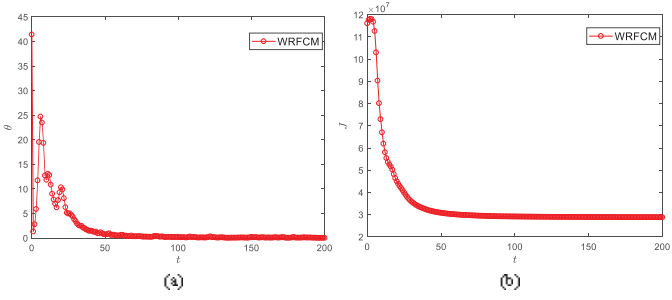}
\end{minipage}
\caption{Convergence of WRFCM. (a) $\theta $ and (b) $J$ versus $t$.}
\label{fig:convergence}
\end{figure}

As Fig. \ref{fig:convergence} indicates, since the prototypes are randomly initialized, the convergence of WRFCM oscillates slightly at the beginning. Nevertheless, it reaches stability after a few iterations. In addition, even though $\theta$ exhibits an oscillating process, $J$ keeps decreasing until the iteration stops. To sum up, WRFCM has outstanding convergence since the weight $\ell_{2}$-norm fidelity makes mixed noise distribution estimated accurately so that the residual is gradually separated from observed data as iterations proceed.

\section{Experimental Studies}\label{sec:experiments}
In this section, to show the performance, efficiency and robustness of \mbox{WRFCM}, we provide numerical experiments on synthesis, medical, and other real-world images. To highlight the superiority and improvement of WRFCM over conventional FCM, we also compare it with seven FCM variants, i.e., FCM\_S1 \cite{Chen2004}, FCM\_S2 \cite{Chen2004}, FLICM \cite{Krinidis2010}, \mbox{KWFLICM} \cite{Gong2013}, FRFCM \cite{Lei2018}, WFCM \cite{Wang2019}, and DSFCM\_N \cite{Zhang2019}. They are the most representative ones in the field.

\subsection{Evaluation Indicators}
To quantitatively evaluate the performance of WRFCM, we adopt three objective evaluation indicators, i.e., segmentation accuracy (SA) \cite{Li2011}, Matthews correlation coefficient (MCC) \cite{Thanh2019Blood}, and Sorensen-Dice similarity (SDS) \cite{Thanh2019,Taha2015}. Note that a single one cannot fully reflect true segmentation results. SA is defined as:
$${\rm SA}=\sum _{i=1}^{c}|S_{i} \cap G_{i} | /K$$
where $S_{i} $ and $G_{i} $ are the $i$-th cluster in a segmented image and its ground truth, respectively. $|\cdot |$ denotes the cardinality of a set. MCC is computed as:
\begin{small}
\begin{equation*}
{\rm MCC}=\frac{T_P\cdot T_N-F_P\cdot F_N}{\sqrt{\left(T_P+F_P\right)\cdot \left(T_P+F_N\right)\cdot \left(T_N+F_P\right)\cdot \left(T_P+F_N\right)} }
\end{equation*}
\end{small}where $T_P$, $F_P$, $T_N$, and $F_N$ are the number of true positive, false positive, true negative, and false negative, respectively. SDS is formulated as:
$${\rm SDS}=\frac{2T_P}{2T_P+F_P+F_N}$$

\subsection{Dataset Descriptions}
Tested images except for synthetic ones come from four publicly available databases including a medical one and three \mbox{real-world} ones. The details are outlined as follows:
\begin{itemize}
\item[1)]  BrianWeb\footnote{\url{http://www.bic.mni.mcgill.ca/brainweb/}}: This is an online interface to a 3D MRI simulated brain database. The parameter settings are fixed to 3 modalities, 5 slice thicknesses, 6 levels of noise, and 3 levels of intensity non-uniformity. BrianWeb provides golden standard segmentation.

\item[2)]  Berkeley Segmentation Data Set (BSDS)\footnote{\url{https://www2.eecs.berkeley.edu/Research/Projects/CS/vision/grouping/resources.html}}  \cite{Arbelaez2011}: This database contains 200 training, 100 validation and 200 testing images. Golden standard segmentation is annotated by different subjects for each image of size $321\times481$ or $481\times321$.

\item[3)]  Microsoft Research Cambridge Object Recognition Image Database (MSRC)\footnote{\url{http://research.microsoft.com/vision/cambridge/recognition/}}: This database contains 591 images and 23 object classes. Golden standard segmentation is provided.

\item[4)]  NASA Earth Observation Database (NEO)\footnote{\url{http://neo.sci.gsfc.nasa.gov/}}: This database continually provides information collected by NASA satellites about Earth's ocean, atmosphere, and land surfaces. Due to bit errors appearing in satellite measurements, sampled images of size $1440 \times 720$ contain unknown noise. Therefore, their ground truth is unknown.
\end{itemize}

\subsection{Parameter Settings}
Prior to numerical simulations, we report the parameter settings of WRFCM and seven comparative algorithms. Since spatial information is used in all algorithm, a local window of size $3\times 3$ is selected for all. We set $m=2$ and $\varepsilon =1\times 10^{-6} $ across all algorithms. The setting of $c$ is presented in each experiment.

Except $m$, $\varepsilon$ and $c$, FLICM and KWFLICM are free of any other parameters. However, the remaining algorithms involve different parameters. In FCM\_S1 and FCM\_S2, $\alpha $ is set to 3.8, which controls the impact of spatial information on FCM  by following \cite{Chen2004}. In FRFCM, an observed image is taken as a mask image. A marker image is produced by a $3\times 3$ structuring element. WFCM requires one parameter $\mu \in [0.55,0.65]$ only, which constrains the neighbor term. For DSFCM\_N, $\lambda $ is set based on the standard deviation of each channel of image data.

As to WRFCM, it requires two parameters, i.e., $\xi $ in \eqref{GrindEQ__7_} and ${\bm \beta }$ in \eqref{GrindEQ__9_}. By analyzing mixed noise distributions, $\xi $ is experimentally set to 0.0008. Since the standard deviation of image data is related to noise levels to some extent \cite{Zhang2019}, we can set ${\bm \beta }$ in virtue of the standard deviation of each channel. Based on massive experiments, ${\bm \beta }=\{\beta _{l} :l=1,2,\cdots ,L\}$ is recommended to be chosen as follows:
$$
\beta_{l} =\frac{\phi \cdot \delta _{l} }{100} \,\, \textrm{for} \,\, \phi \in [5,10]
$$
where $\delta_{l} $ is the standard deviation of the $l$-th channel of $\bm{X}$. In fact, ${\bm \beta }$ is equivalently replaced by $\phi $. Here, we give an example to show the setting of $\phi $, refer to Fig. \ref{fig:parameter}. We impose a mixture of Poisson, Gaussian, and impulse noise on the first three synthetic images in the second row of \mbox{Fig. \ref{fig:syn}} respectively. The noise level is $\sigma =30$ and $p=20\%$.

\begin{figure}[htb]
\setlength{\abovecaptionskip}{0pt}
\setlength{\belowcaptionskip}{0pt}
\centering
\begin{minipage}[t]{1\linewidth}
\centering
\includegraphics[width=1\textwidth]{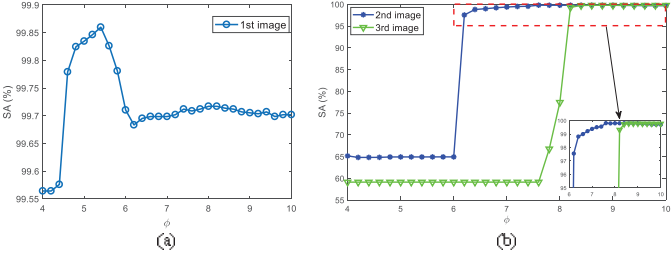}
\end{minipage}
\caption{SA values versus $\phi$.}
\label{fig:parameter}
\end{figure}

As Fig. \ref{fig:parameter}(a) indicates, when coping with the first image, the SA value reaches its maximum gradually as the value of $\phi $ increases. Afterwards, it decreases rapidly and tends to be stable. As shown in Fig. \ref{fig:parameter}(b), for the other two images, after the SA value reaches its maximum, it has no apparent changes, implying that the segmentation performance is rather stable. In conclusion, for image segmentation, \mbox{WRFCM} can produce better and better performance as parameter $\phi$ increases from a small value.

\subsection{Experimental Results and Analysis}
\subsubsection{Results on Synthetic Images}
In the first experiment, we representatively choose five synthetic images of size $256\times 256$, as shown in the second row of \mbox{Fig. \ref{fig:syn}}. A mixture of Poisson, Gaussian, and impulse noise is considered for all cases. To be specific, Poisson noise is first added. Then we add Gaussian noise with $\sigma =30$. Finally, the random-valued impulse noise with $p=20\% $ is added since it is more difficult to detect than salt and pepper impulse noise. For five images, we set $c$ to 4, 4, 4, 3, and 3, respectively. The segmentation results are given in Fig. \ref{fig:syn} and Table \ref{tab:syn}. The best values are in bold.
\begin{figure}[htb]
\centering
\begin{minipage}[t]{0.18\linewidth}
\centering
\includegraphics[width=1\textwidth]{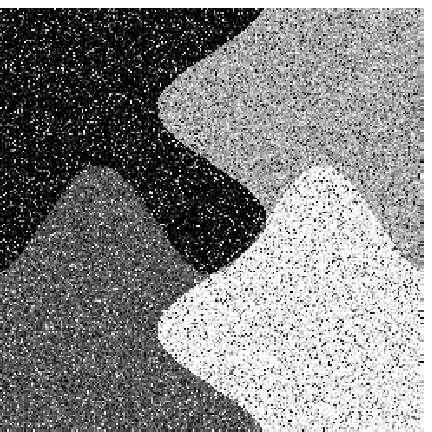}
\end{minipage}
\begin{minipage}[t]{0.18\linewidth}
\centering
\includegraphics[width=1\textwidth]{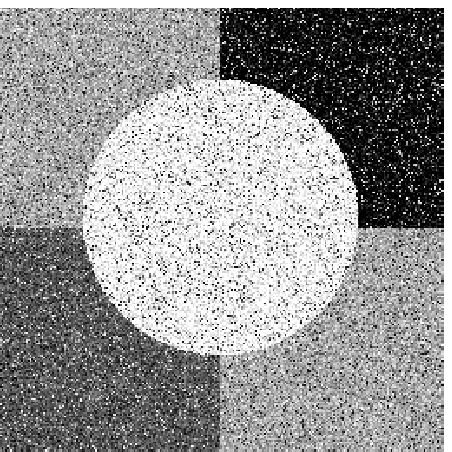}
\end{minipage}
\begin{minipage}[t]{0.18\linewidth}
\centering
\includegraphics[width=1\textwidth]{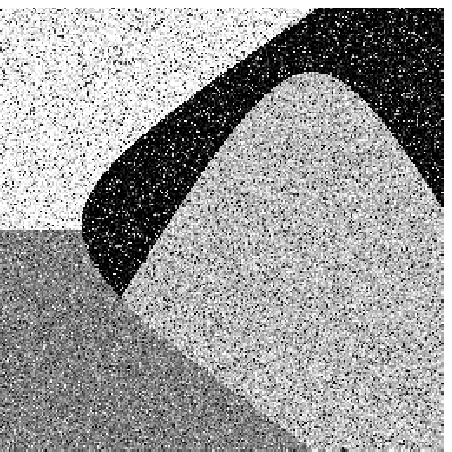}
\end{minipage}
\begin{minipage}[t]{0.18\linewidth}
\centering
\includegraphics[width=1\textwidth]{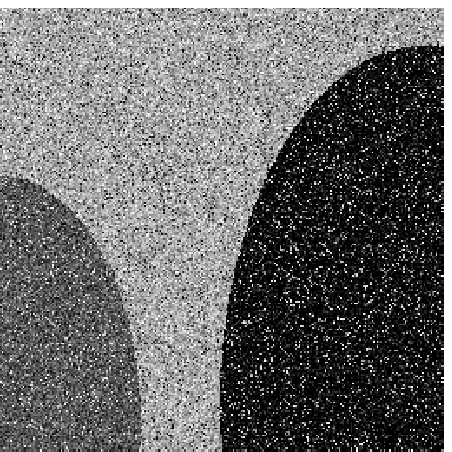}
\end{minipage}
\begin{minipage}[t]{0.18\linewidth}
\centering
\includegraphics[width=1\textwidth]{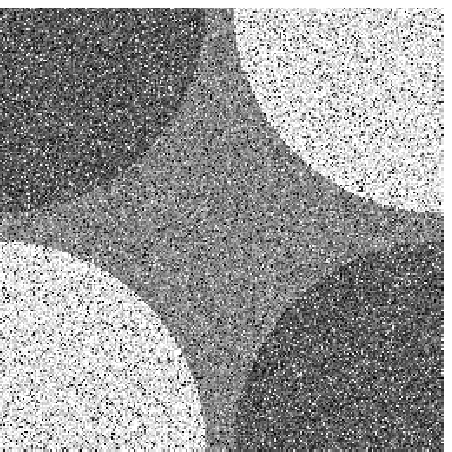}
\end{minipage}\\
\begin{minipage}[t]{0.18\linewidth}
\centering
\includegraphics[width=1\textwidth]{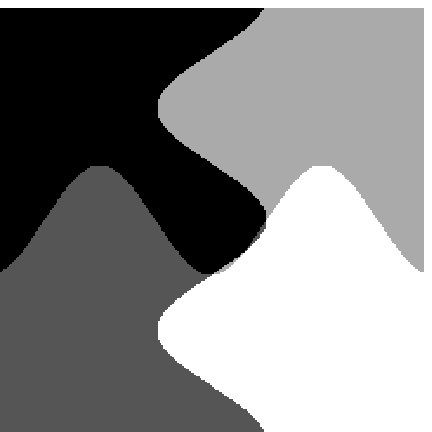}
\end{minipage}
\begin{minipage}[t]{0.18\linewidth}
\centering
\includegraphics[width=1\textwidth]{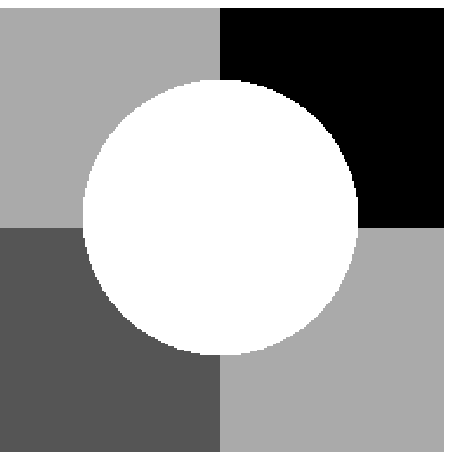}
\end{minipage}
\begin{minipage}[t]{0.18\linewidth}
\centering
\includegraphics[width=1\textwidth]{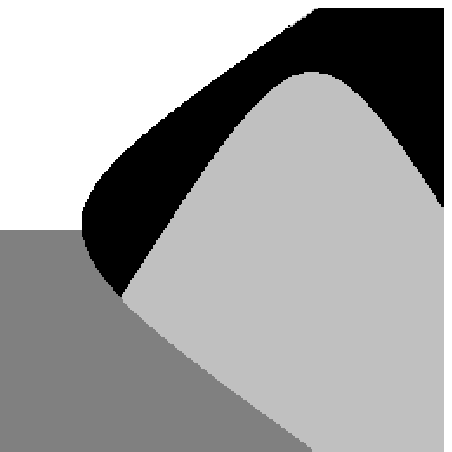}
\end{minipage}
\begin{minipage}[t]{0.18\linewidth}
\centering
\includegraphics[width=1\textwidth]{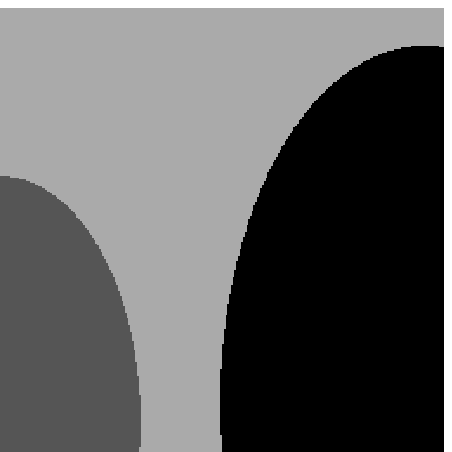}
\end{minipage}
\begin{minipage}[t]{0.18\linewidth}
\centering
\includegraphics[width=1\textwidth]{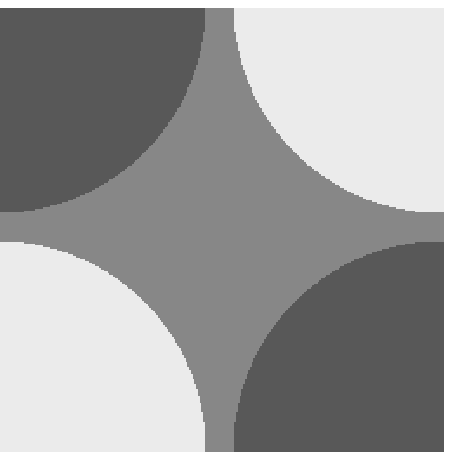}
\end{minipage}\\
\begin{minipage}[t]{0.18\linewidth}
\centering
\includegraphics[width=1\textwidth]{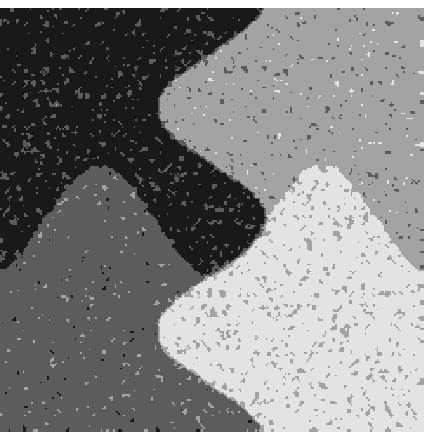}
\end{minipage}
\begin{minipage}[t]{0.18\linewidth}
\centering
\includegraphics[width=1\textwidth]{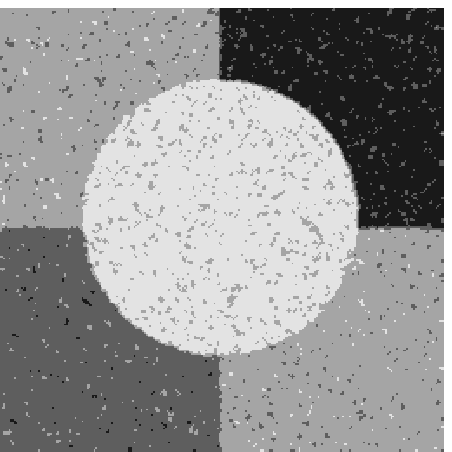}
\end{minipage}
\begin{minipage}[t]{0.18\linewidth}
\centering
\includegraphics[width=1\textwidth]{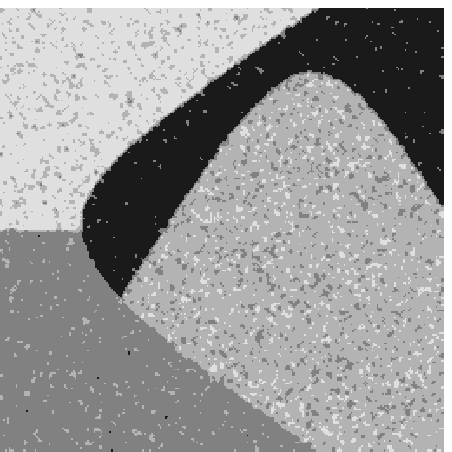}
\end{minipage}
\begin{minipage}[t]{0.18\linewidth}
\centering
\includegraphics[width=1\textwidth]{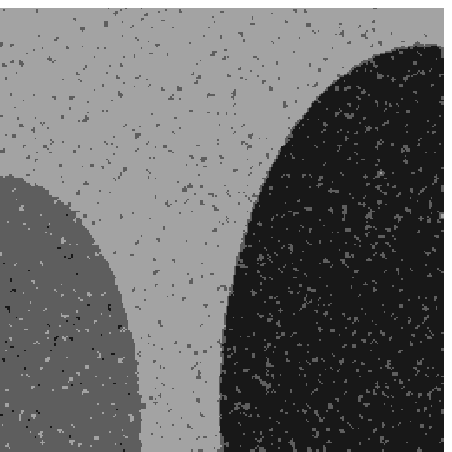}
\end{minipage}
\begin{minipage}[t]{0.18\linewidth}
\centering
\includegraphics[width=1\textwidth]{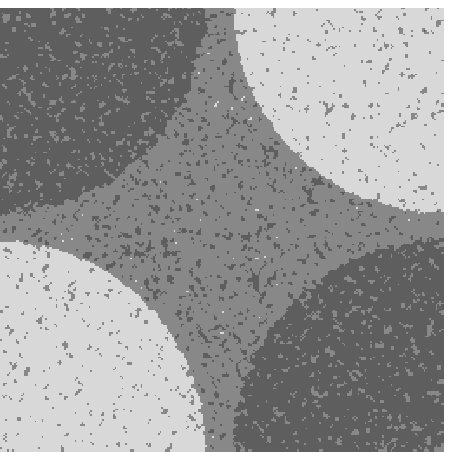}
\end{minipage}\\
\begin{minipage}[t]{0.18\linewidth}
\centering
\includegraphics[width=1\textwidth]{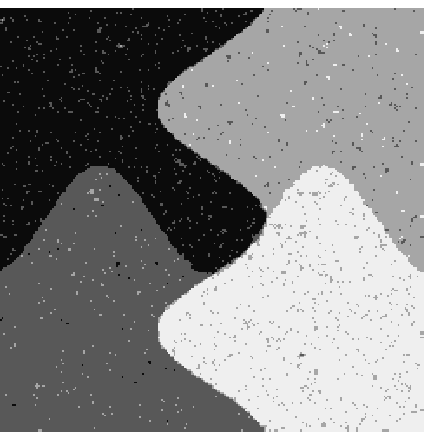}
\end{minipage}
\begin{minipage}[t]{0.18\linewidth}
\centering
\includegraphics[width=1\textwidth]{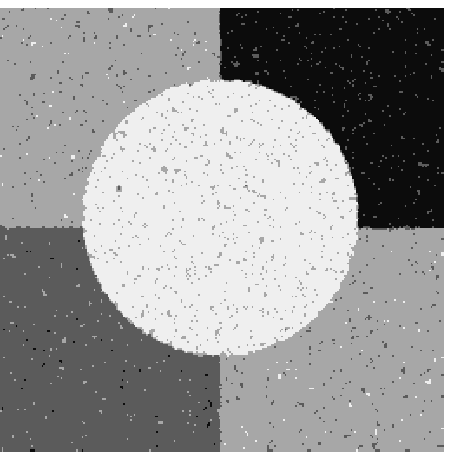}
\end{minipage}
\begin{minipage}[t]{0.18\linewidth}
\centering
\includegraphics[width=1\textwidth]{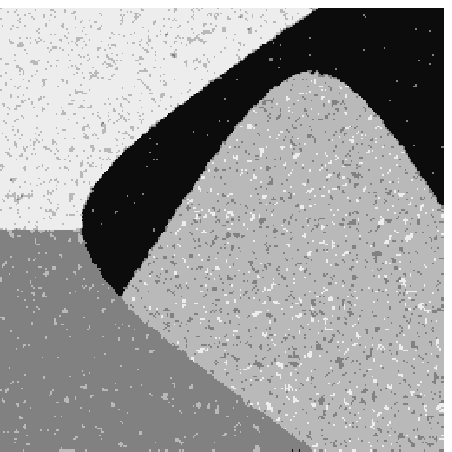}
\end{minipage}
\begin{minipage}[t]{0.18\linewidth}
\centering
\includegraphics[width=1\textwidth]{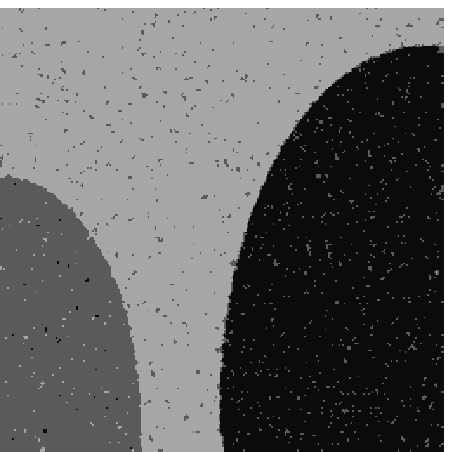}
\end{minipage}
\begin{minipage}[t]{0.18\linewidth}
\centering
\includegraphics[width=1\textwidth]{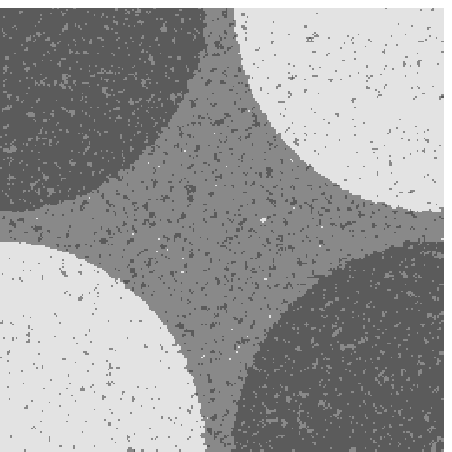}
\end{minipage}\\
\begin{minipage}[t]{0.18\linewidth}
\centering
\includegraphics[width=1\textwidth]{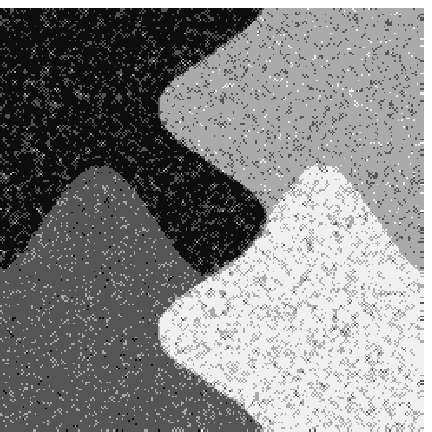}
\end{minipage}
\begin{minipage}[t]{0.18\linewidth}
\centering
\includegraphics[width=1\textwidth]{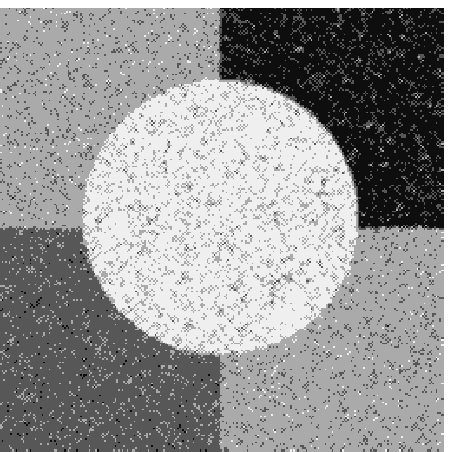}
\end{minipage}
\begin{minipage}[t]{0.18\linewidth}
\centering
\includegraphics[width=1\textwidth]{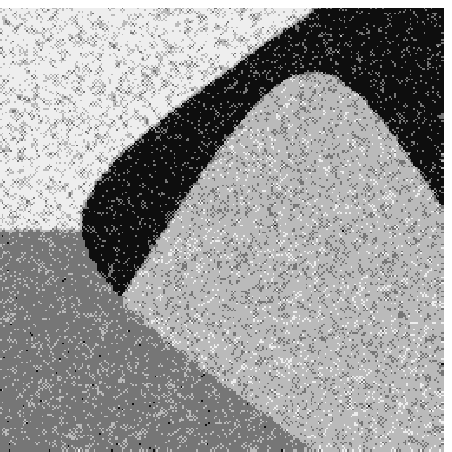}
\end{minipage}
\begin{minipage}[t]{0.18\linewidth}
\centering
\includegraphics[width=1\textwidth]{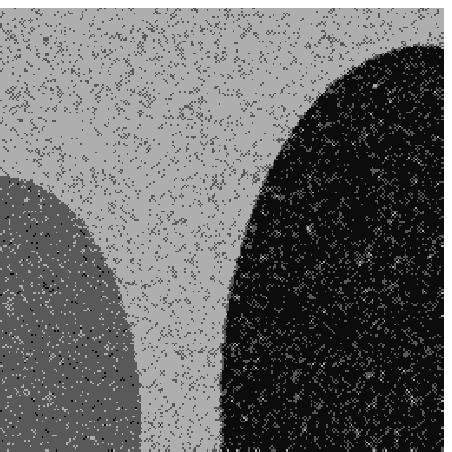}
\end{minipage}
\begin{minipage}[t]{0.18\linewidth}
\centering
\includegraphics[width=1\textwidth]{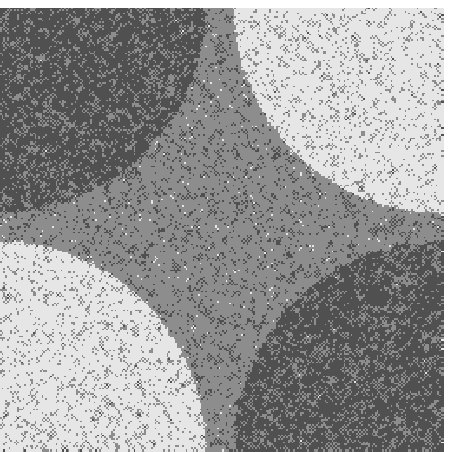}
\end{minipage}\\
\begin{minipage}[t]{0.18\linewidth}
\centering
\includegraphics[width=1\textwidth]{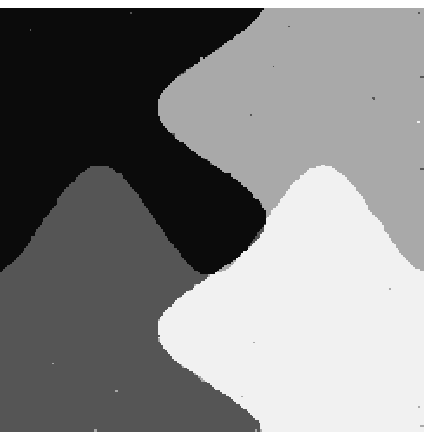}
\end{minipage}
\begin{minipage}[t]{0.18\linewidth}
\centering
\includegraphics[width=1\textwidth]{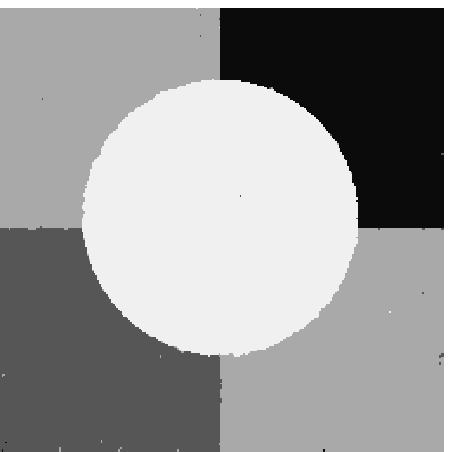}
\end{minipage}
\begin{minipage}[t]{0.18\linewidth}
\centering
\includegraphics[width=1\textwidth]{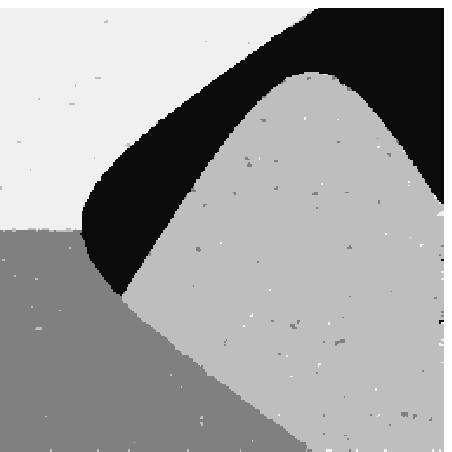}
\end{minipage}
\begin{minipage}[t]{0.18\linewidth}
\centering
\includegraphics[width=1\textwidth]{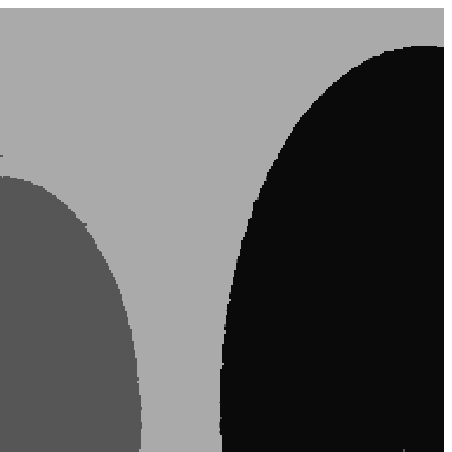}
\end{minipage}
\begin{minipage}[t]{0.18\linewidth}
\centering
\includegraphics[width=1\textwidth]{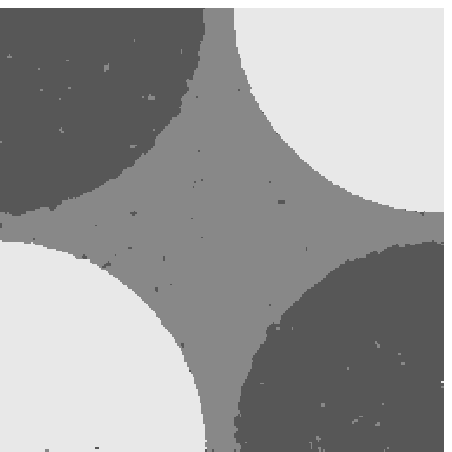}
\end{minipage}\\
\begin{minipage}[t]{0.18\linewidth}
\centering
\includegraphics[width=1\textwidth]{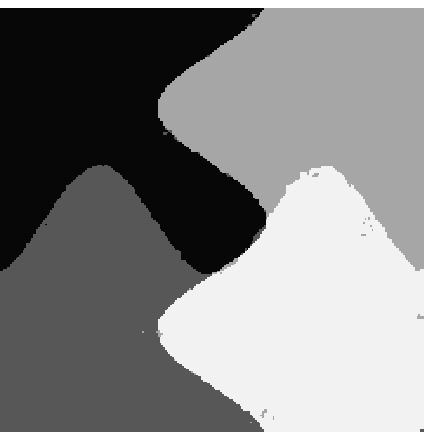}
\end{minipage}
\begin{minipage}[t]{0.18\linewidth}
\centering
\includegraphics[width=1\textwidth]{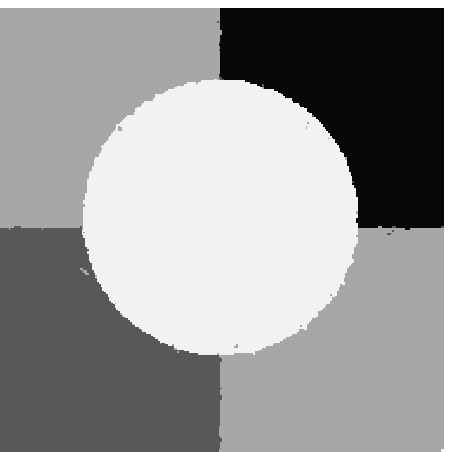}
\end{minipage}
\begin{minipage}[t]{0.18\linewidth}
\centering
\includegraphics[width=1\textwidth]{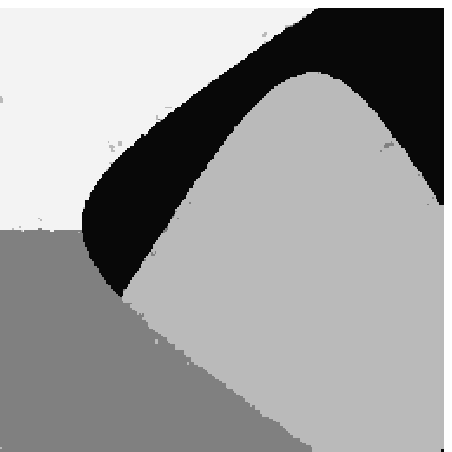}
\end{minipage}
\begin{minipage}[t]{0.18\linewidth}
\centering
\includegraphics[width=1\textwidth]{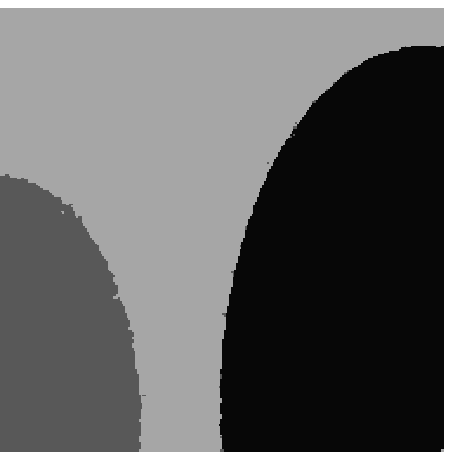}
\end{minipage}
\begin{minipage}[t]{0.18\linewidth}
\centering
\includegraphics[width=1\textwidth]{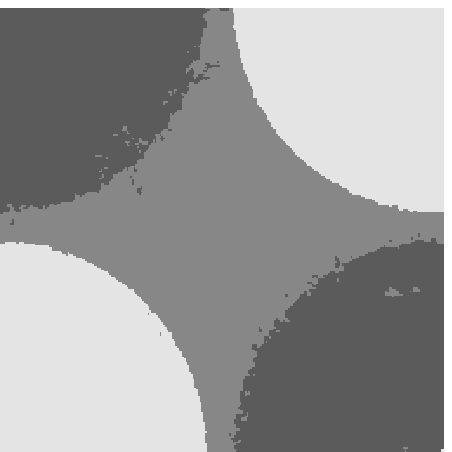}
\end{minipage}\\
\begin{minipage}[t]{0.18\linewidth}
\centering
\includegraphics[width=1\textwidth]{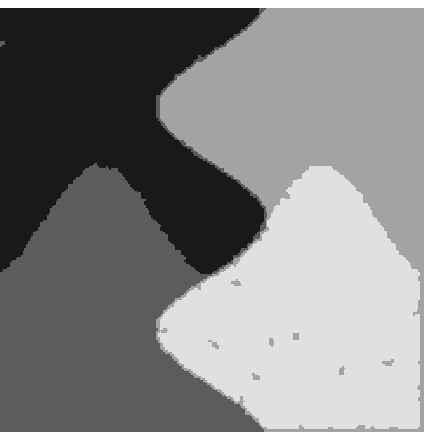}
\end{minipage}
\begin{minipage}[t]{0.18\linewidth}
\centering
\includegraphics[width=1\textwidth]{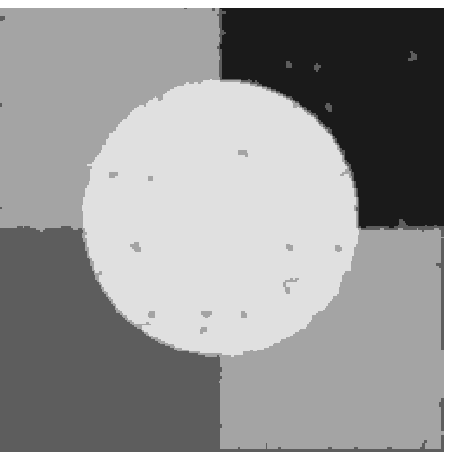}
\end{minipage}
\begin{minipage}[t]{0.18\linewidth}
\centering
\includegraphics[width=1\textwidth]{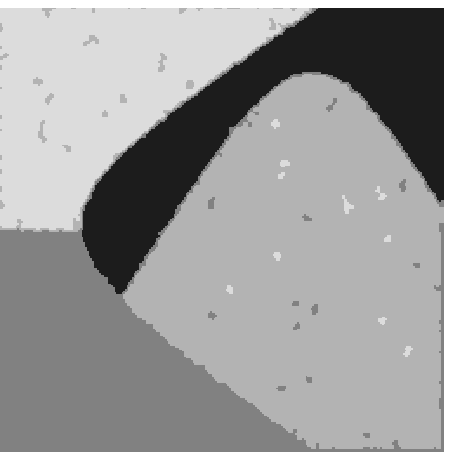}
\end{minipage}
\begin{minipage}[t]{0.18\linewidth}
\centering
\includegraphics[width=1\textwidth]{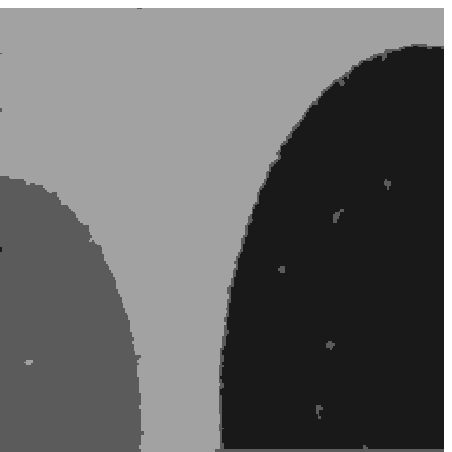}
\end{minipage}
\begin{minipage}[t]{0.18\linewidth}
\centering
\includegraphics[width=1\textwidth]{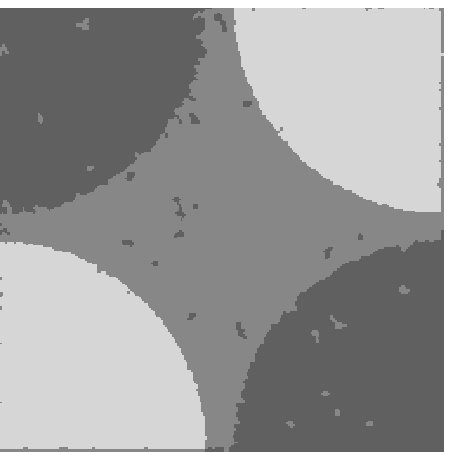}
\end{minipage}\\
\begin{minipage}[t]{0.18\linewidth}
\centering
\includegraphics[width=1\textwidth]{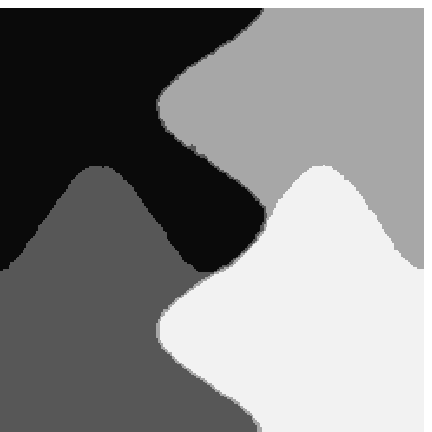}
\end{minipage}
\begin{minipage}[t]{0.18\linewidth}
\centering
\includegraphics[width=1\textwidth]{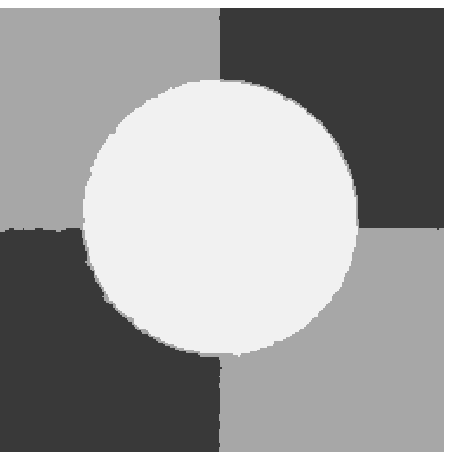}
\end{minipage}
\begin{minipage}[t]{0.18\linewidth}
\centering
\includegraphics[width=1\textwidth]{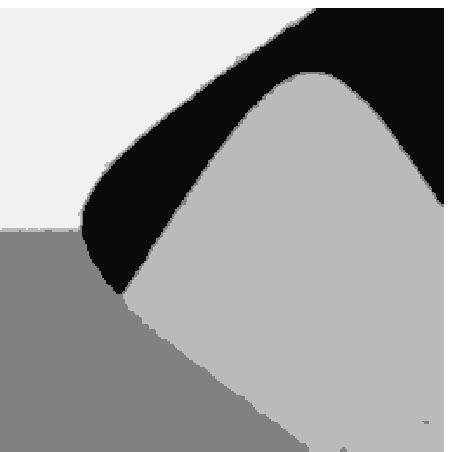}
\end{minipage}
\begin{minipage}[t]{0.18\linewidth}
\centering
\includegraphics[width=1\textwidth]{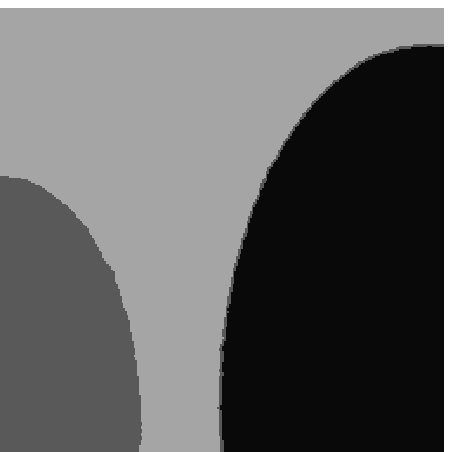}
\end{minipage}
\begin{minipage}[t]{0.18\linewidth}
\centering
\includegraphics[width=1\textwidth]{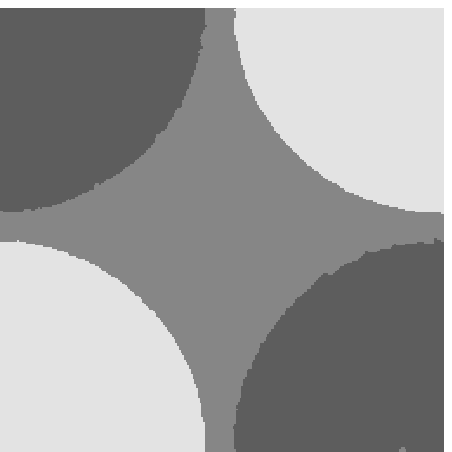}
\end{minipage}\\
\begin{minipage}[t]{0.18\linewidth}
\centering
\includegraphics[width=1\textwidth]{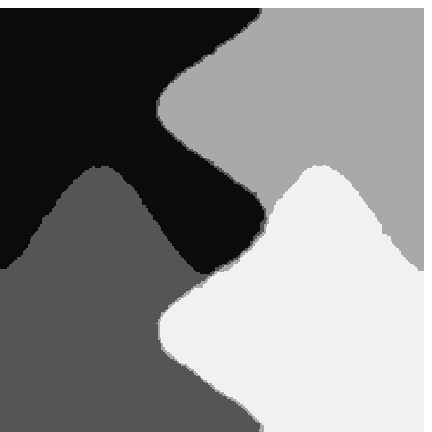}
\end{minipage}
\begin{minipage}[t]{0.18\linewidth}
\centering
\includegraphics[width=1\textwidth]{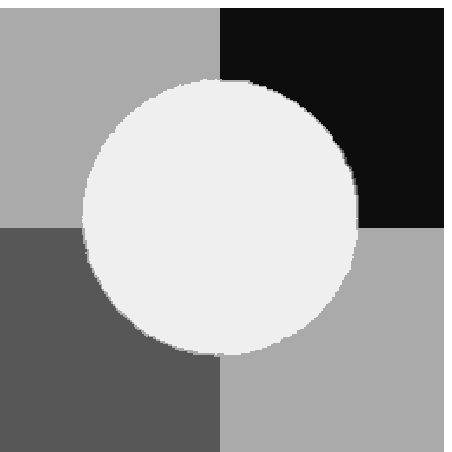}
\end{minipage}
\begin{minipage}[t]{0.18\linewidth}
\centering
\includegraphics[width=1\textwidth]{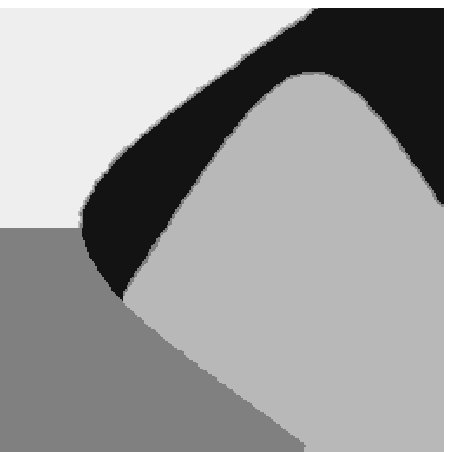}
\end{minipage}
\begin{minipage}[t]{0.18\linewidth}
\centering
\includegraphics[width=1\textwidth]{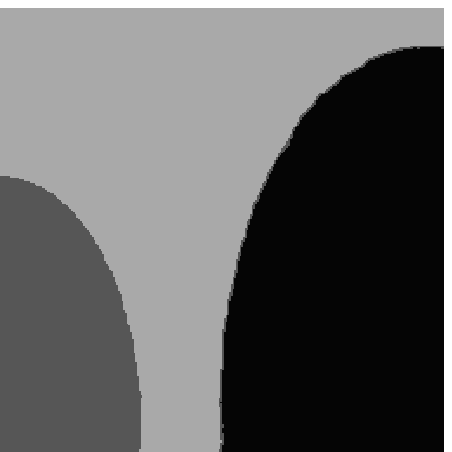}
\end{minipage}
\begin{minipage}[t]{0.18\linewidth}
\centering
\includegraphics[width=1\textwidth]{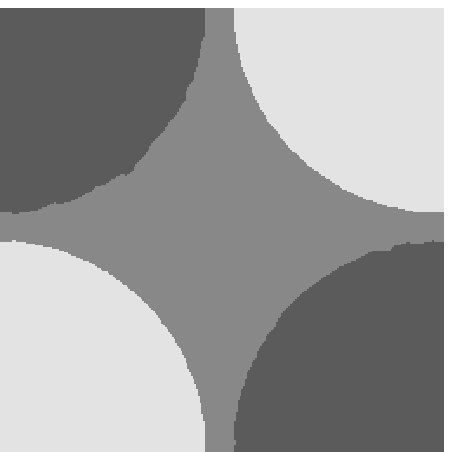}
\end{minipage}
\caption{Segmentation results for five synthetic images. The parameters: $\phi_{1} ={\rm 5.58}$, $\phi_{2} ={\rm 7.45}$, $\phi_{3} ={\rm 8.17}$, $\phi_{4} ={\rm 5.79}$, and $\phi_{5} =9.99$. From top to bottom: noisy images, noise-free images, and results of FCM\_S1, FCM\_S2, FLICM, KWFLICM, FRFCM, WFCM, DSFCM\_N, and WRFCM.}
\label{fig:syn}
\end{figure}

As Fig. \ref{fig:syn} indicates, FCM\_S1, FCM\_S2 and FLICM achieve poor results in presence of such a high level of mixed noise. Compared with them, KWFLICM, FRFCM and WFCM suppress the vast majority of mixed noise. Yet they cannot completely remove it. DSFCM\_N visually outperforms other peers mentioned above. However, it generates several topology changes such as merging and splitting. By taking the second synthetic image as a case, we find that DSFCM\_N produces some unclear contours and shadows. Superior to seven peers, WRFCM not only removes all the noise but also preserves more image features.

\begin{table*}[htbp]
\setlength{\abovecaptionskip}{0pt}
\setlength{\belowcaptionskip}{0pt}
  \centering
  \caption{Segmentation performance (\%) on synthetic images}
  \scriptsize
    \begin{tabular}{c|ccc|ccc|ccc|ccc|ccc}
    \toprule
    \multirow{2}[3]{*}{Algorithm} &
      \multicolumn{3}{c|}{Fig. 8 column 1} &
      \multicolumn{3}{c|}{Fig. 8 column 2} &
      \multicolumn{3}{c|}{Fig. 8 column 3} &
      \multicolumn{3}{c|}{Fig. 8 column 4} &
      \multicolumn{3}{c}{Fig. 8 column 5}
      \\
\cmidrule{2-16}     &
      SA &
      SDS &
      MCC &
      SA &
      SDS &
      MCC &
      SA &
      SDS &
      MCC &
      SA &
      SDS &
      MCC &
      SA &
      SDS &
      MCC
      \\
      \midrule
    FCM\_S1 &
      92.902 &
      98.187 &
      96.362 &
      92.625 &
      98.414 &
      95.528 &
      87.289 &
      99.582 &
      97.606 &
      94.453 &
      97.405 &
      95.254 &
      90.178 &
      97.128 &
      95.740
      \\
    FCM\_S2 &
      96.157 &
      98.999 &
      97.991 &
      96.292 &
      99.127 &
      97.520 &
      92.345 &
      99.791 &
      98.808 &
      97.214 &
      84.356 &
      70.353 &
      92.737 &
      98.518 &
      97.769
      \\
    FLICM &
      85.081 &
      90.145 &
      95.082 &
      85.667 &
      95.894 &
      88.576 &
      81.502 &
      83.077 &
      54.764 &
      88.031 &
      92.855 &
      87.353 &
      82.401 &
      91.770 &
      88.350
      \\
    KWFLICM &
      99.706 &
      99.858 &
      99.715 &
      99.730 &
      99.904 &
      99.725 &
      99.310 &
      \textbf{99.938} &
      \textbf{99.648} &
      99.878 &
      99.880 &
      99.776 &
      99.240 &
      99.852 &
      99.774
      \\
    FRFCM &
      99.652 &
      99.920 &
      99.839 &
      99.675 &
      99.895 &
      99.698 &
      99.629 &
      99.924 &
      99.568 &
      99.751 &
      85.222 &
      72.048 &
      98.883 &
      99.726 &
      99.581
      \\
    WFCM &
      97.827 &
      99.325 &
      98.652 &
      98.079 &
      99.363 &
      98.197 &
      96.645 &
      99.735 &
      98.485 &
      98.570 &
      99.106 &
      98.353 &
      97.434 &
      98.515 &
      97.766
      \\
    DSFCM\_N &
      98.954 &
      99.545 &
      99.086 &
      99.226 &
      99.757 &
      99.303 &
      98.503 &
      99.756 &
      98.608 &
      99.205 &
      85.053 &
      71.730 &
      99.655 &
      99.863 &
      99.791
      \\
    WRFCM &
      \textbf{99.859} &
      \textbf{99.937} &
      \textbf{99.843} &
      \textbf{99.802} &
      \textbf{99.958} &
      \textbf{99.792} &
      \textbf{99.785} &
      99.931 &
      99.565 &
      \textbf{99.934} &
      \textbf{99.893} &
      \textbf{99.814} &
      \textbf{99.677} &
      \textbf{99.907} &
      \textbf{99.858}
      \\
    \bottomrule
    \end{tabular}%
  \label{tab:syn}%
\end{table*}%

Table \ref{tab:syn} shows the segmentation results of all algorithms quantitatively. It assembles the values of all three indictors. Clearly, WRFCM achieves better SA results for all images than other peers. In particular, its SA value comes up to 99.934\% for the fourth synthetic image. In most cases, it also gets better SDS and MCC results than its seven peers. For the third synthetic image, WRFCM is only slightly inferior to KWFLICM. Among its seven peers, KWFLICM obtains generally better results. In the light of Fig. \ref{fig:syn} and Table \ref{tab:syn}, we conclude that WRFCM performs better than its peers.
\subsubsection{Results on Medical Images}
Next, we representatively segment five medical images from BrianWeb. They are represented as five slices in the axial plane with a sequence of 70, 80, 90, 100 and 110, which are generated by T1 modality with slice thickness of 1mm resolution, 9\% noise and 20\% intensity non-uniformity. Here, we set $c=4$ for all cases. The comparison between WRFCM and its peers are shown in Fig. \ref{fig:med} and Table \ref{tab:med}. The best values are in bold.
\begin{figure}[htb]
\centering
\begin{minipage}[t]{0.187\linewidth}
\centering
\includegraphics[width=1\textwidth]{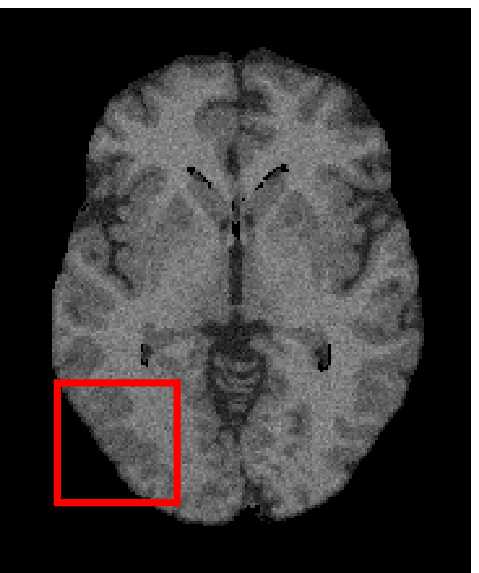}
\end{minipage}
\begin{minipage}[t]{0.187\linewidth}
\centering
\includegraphics[width=1\textwidth]{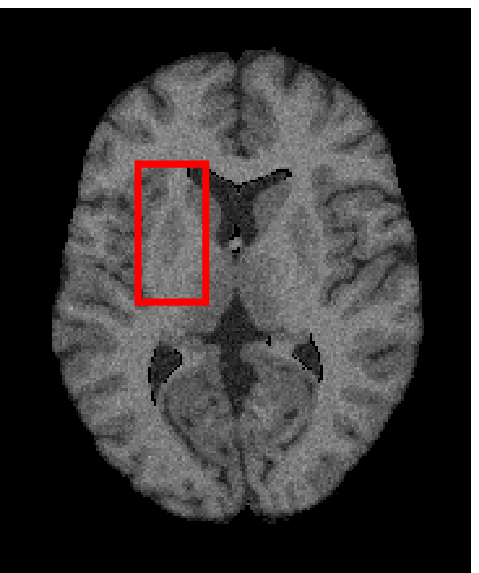}
\end{minipage}
\begin{minipage}[t]{0.187\linewidth}
\centering
\includegraphics[width=1\textwidth]{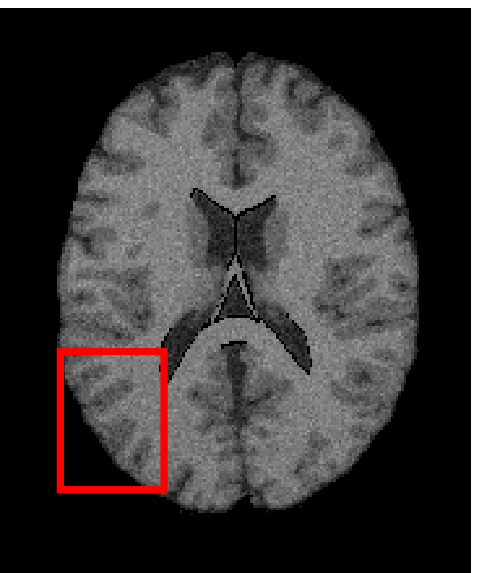}
\end{minipage}
\begin{minipage}[t]{0.187\linewidth}
\centering
\includegraphics[width=1\textwidth]{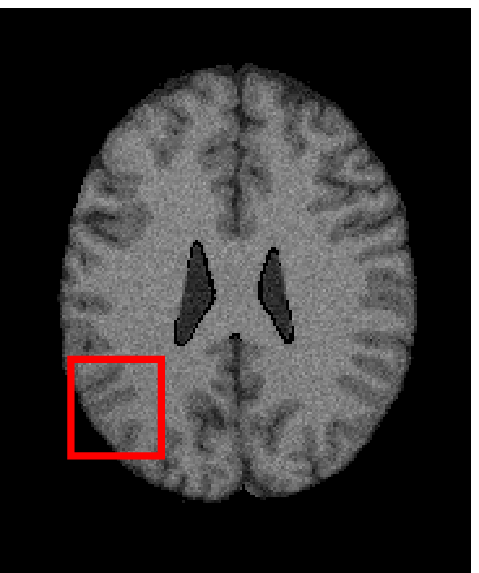}
\end{minipage}
\begin{minipage}[t]{0.187\linewidth}
\centering
\includegraphics[width=1\textwidth]{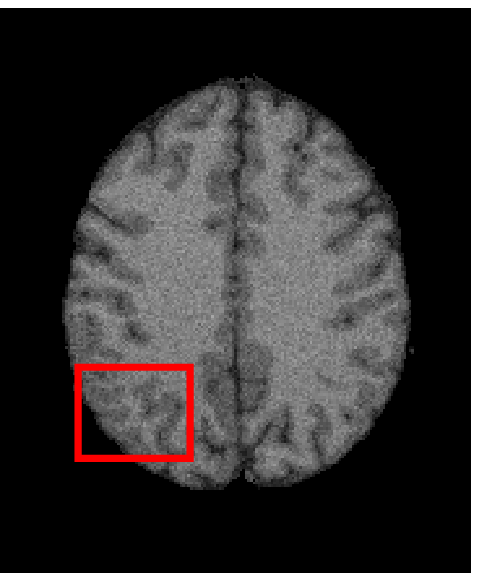}
\end{minipage}\\
\begin{minipage}[t]{0.187\linewidth}
\centering
\includegraphics[width=1\textwidth]{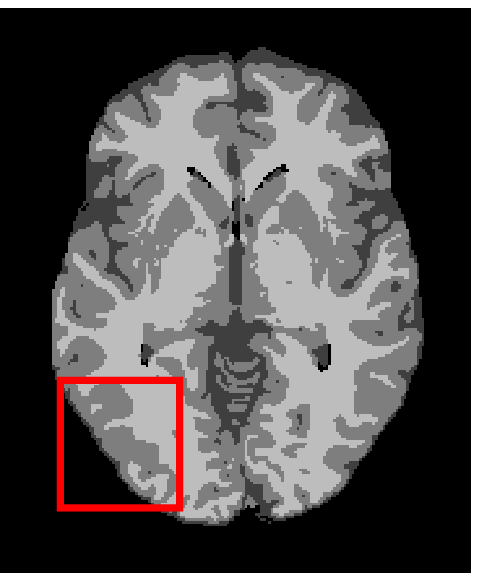}
\end{minipage}
\begin{minipage}[t]{0.187\linewidth}
\centering
\includegraphics[width=1\textwidth]{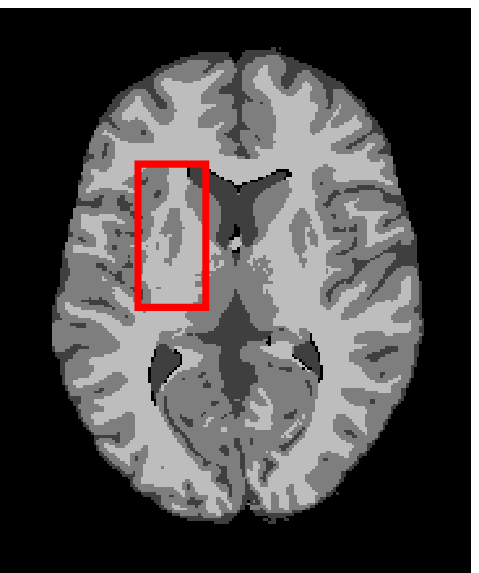}
\end{minipage}
\begin{minipage}[t]{0.187\linewidth}
\centering
\includegraphics[width=1\textwidth]{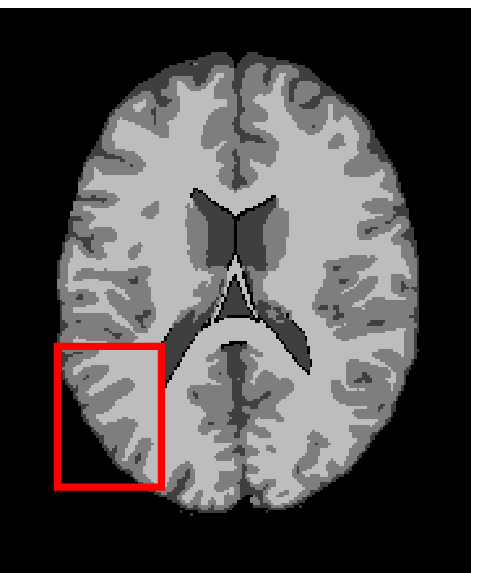}
\end{minipage}
\begin{minipage}[t]{0.187\linewidth}
\centering
\includegraphics[width=1\textwidth]{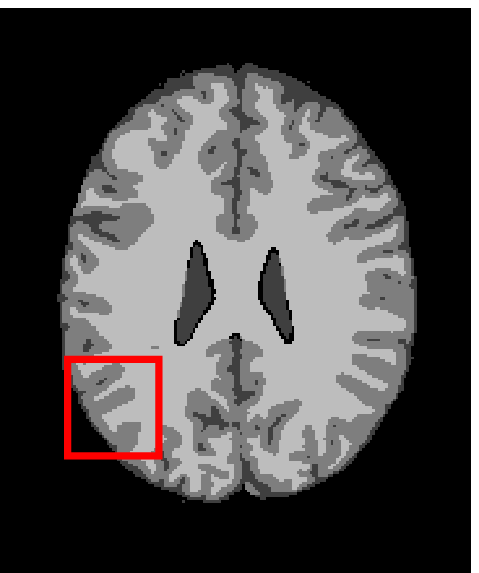}
\end{minipage}
\begin{minipage}[t]{0.187\linewidth}
\centering
\includegraphics[width=1\textwidth]{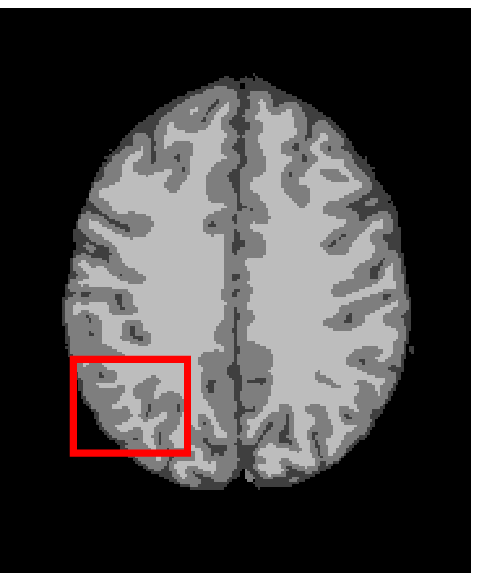}
\end{minipage}\\
\begin{minipage}[t]{0.187\linewidth}
\centering
\includegraphics[width=1\textwidth]{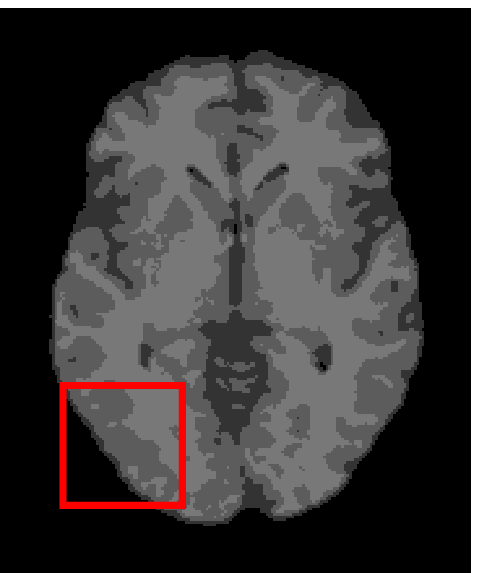}
\end{minipage}
\begin{minipage}[t]{0.187\linewidth}
\centering
\includegraphics[width=1\textwidth]{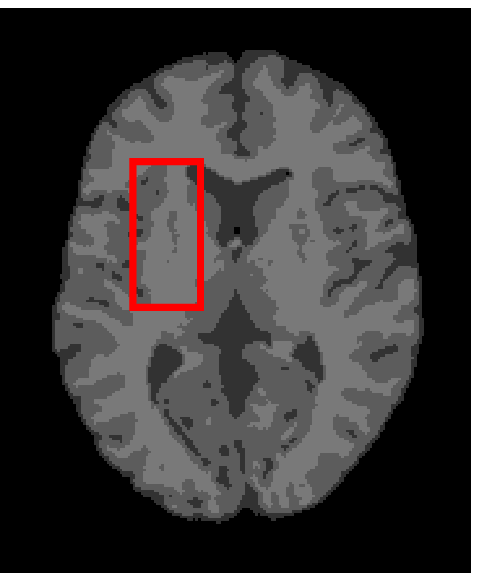}
\end{minipage}
\begin{minipage}[t]{0.187\linewidth}
\centering
\includegraphics[width=1\textwidth]{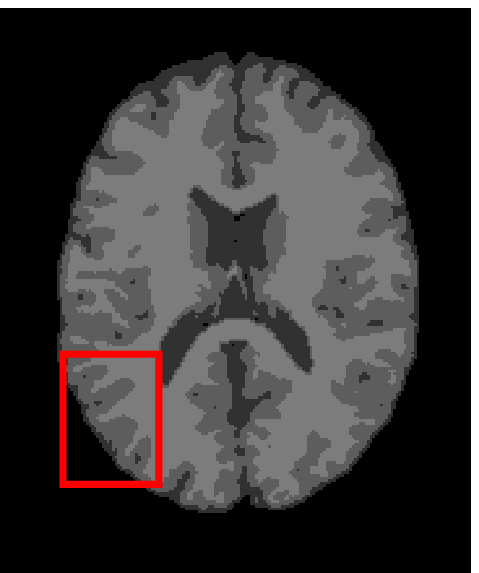}
\end{minipage}
\begin{minipage}[t]{0.187\linewidth}
\centering
\includegraphics[width=1\textwidth]{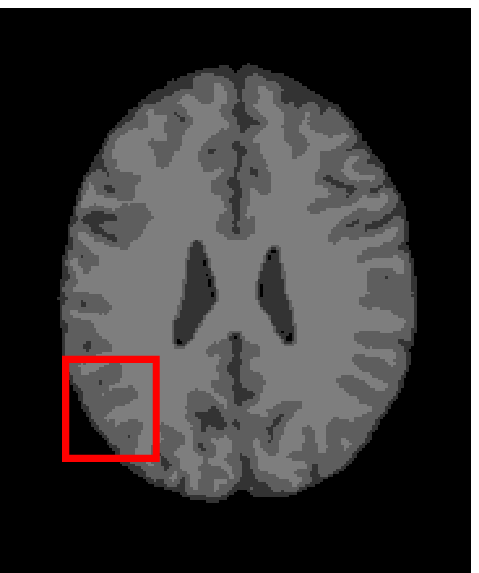}
\end{minipage}
\begin{minipage}[t]{0.187\linewidth}
\centering
\includegraphics[width=1\textwidth]{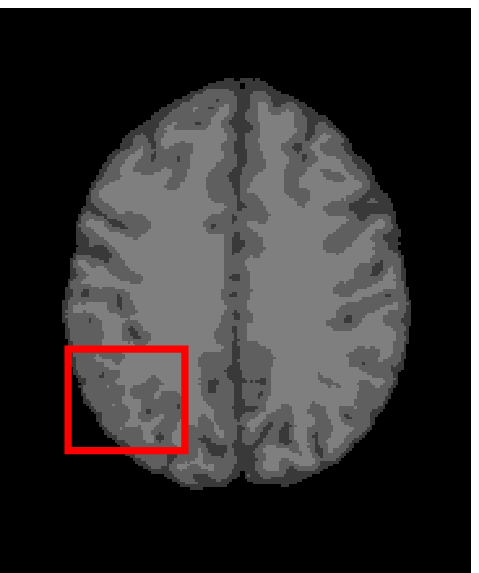}
\end{minipage}\\
\begin{minipage}[t]{0.187\linewidth}
\centering
\includegraphics[width=1\textwidth]{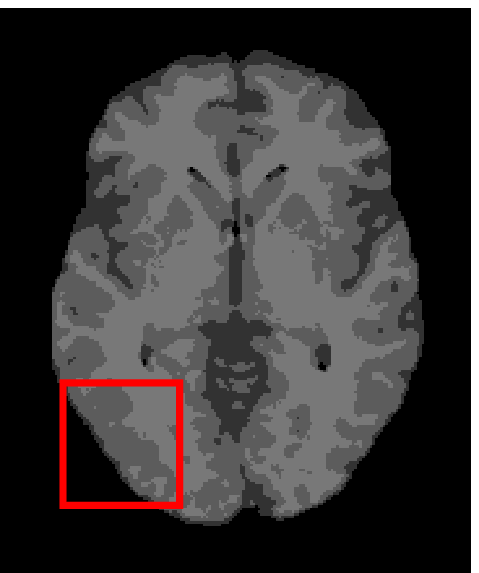}
\end{minipage}
\begin{minipage}[t]{0.187\linewidth}
\centering
\includegraphics[width=1\textwidth]{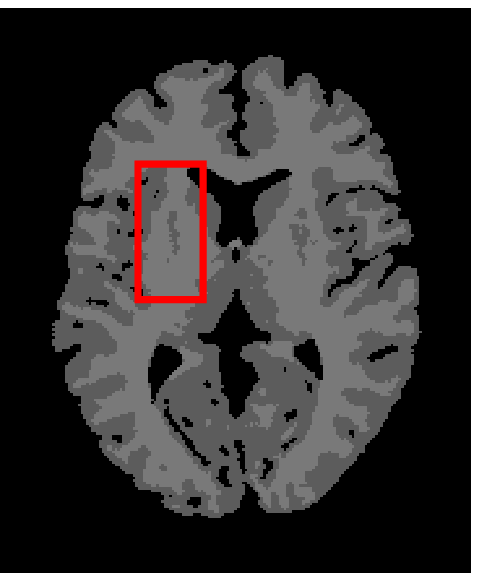}
\end{minipage}
\begin{minipage}[t]{0.187\linewidth}
\centering
\includegraphics[width=1\textwidth]{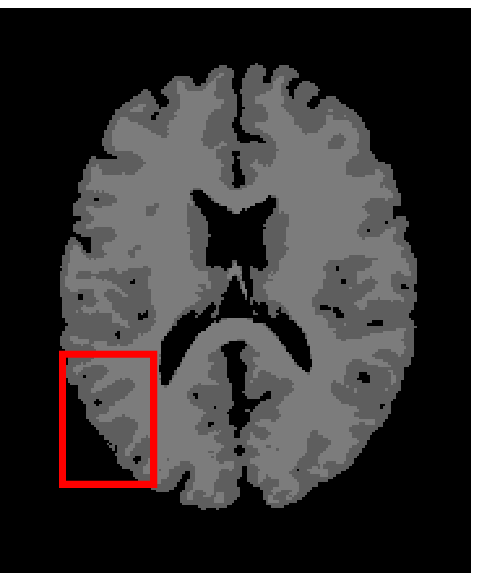}
\end{minipage}
\begin{minipage}[t]{0.187\linewidth}
\centering
\includegraphics[width=1\textwidth]{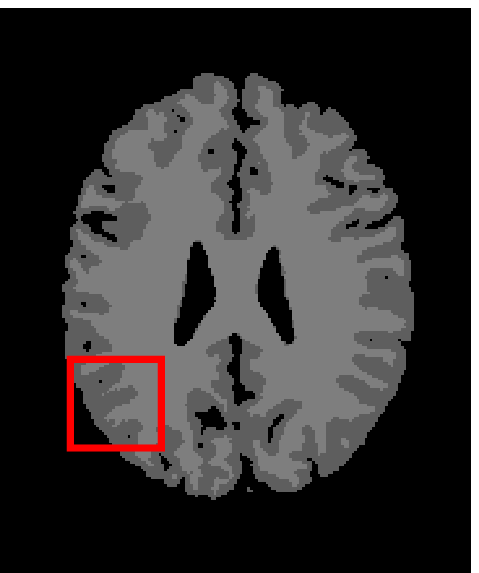}
\end{minipage}
\begin{minipage}[t]{0.187\linewidth}
\centering
\includegraphics[width=1\textwidth]{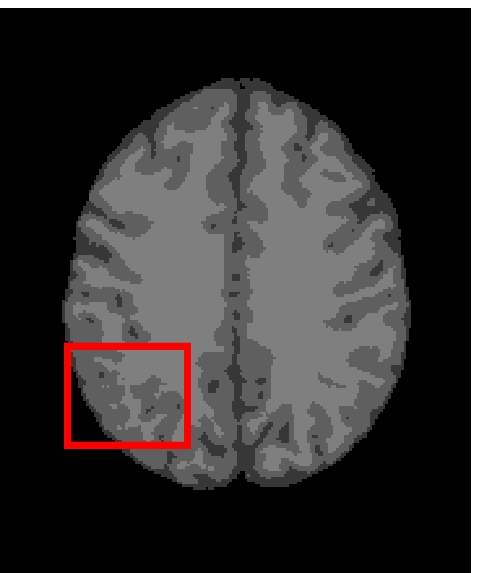}
\end{minipage}\\
\begin{minipage}[t]{0.187\linewidth}
\centering
\includegraphics[width=1\textwidth]{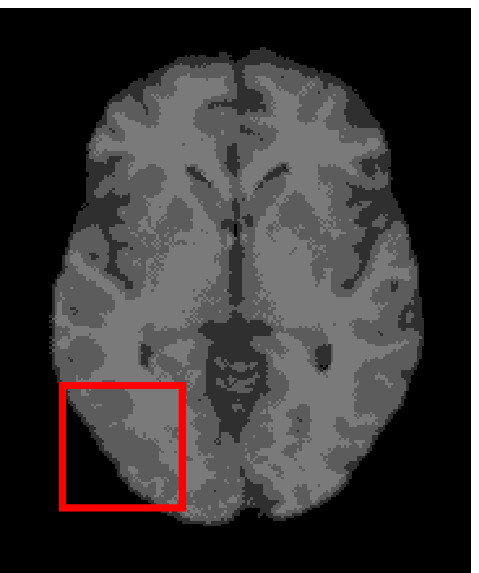}
\end{minipage}
\begin{minipage}[t]{0.187\linewidth}
\centering
\includegraphics[width=1\textwidth]{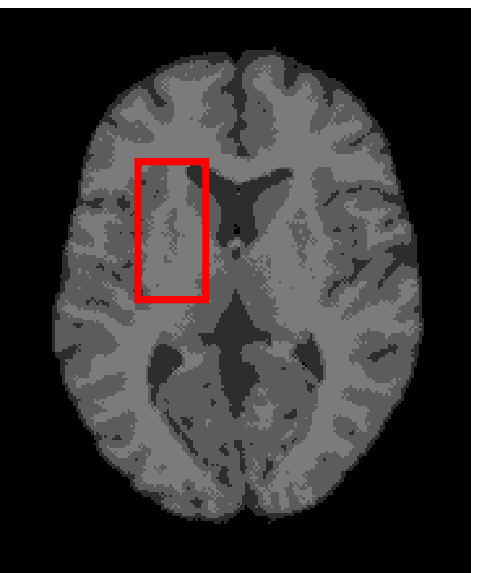}
\end{minipage}
\begin{minipage}[t]{0.187\linewidth}
\centering
\includegraphics[width=1\textwidth]{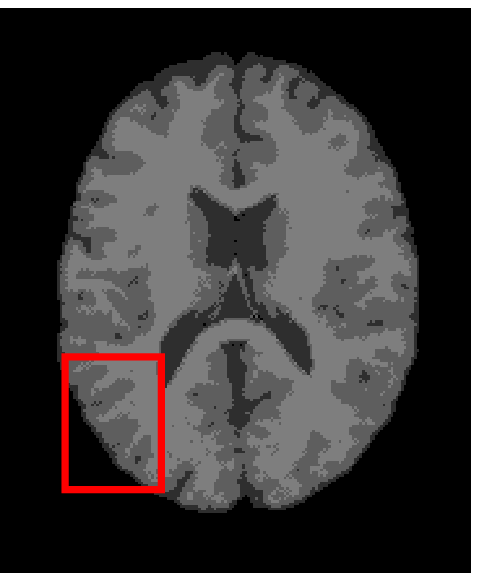}
\end{minipage}
\begin{minipage}[t]{0.187\linewidth}
\centering
\includegraphics[width=1\textwidth]{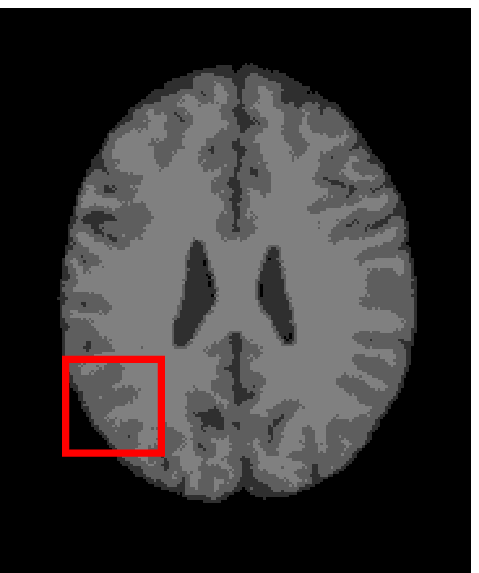}
\end{minipage}
\begin{minipage}[t]{0.187\linewidth}
\centering
\includegraphics[width=1\textwidth]{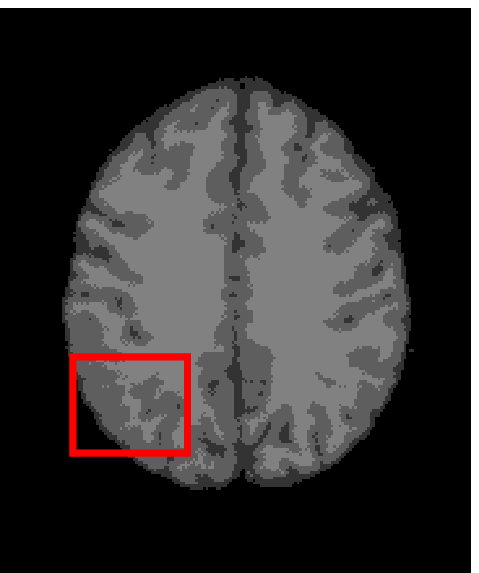}
\end{minipage}\\
\begin{minipage}[t]{0.187\linewidth}
\centering
\includegraphics[width=1\textwidth]{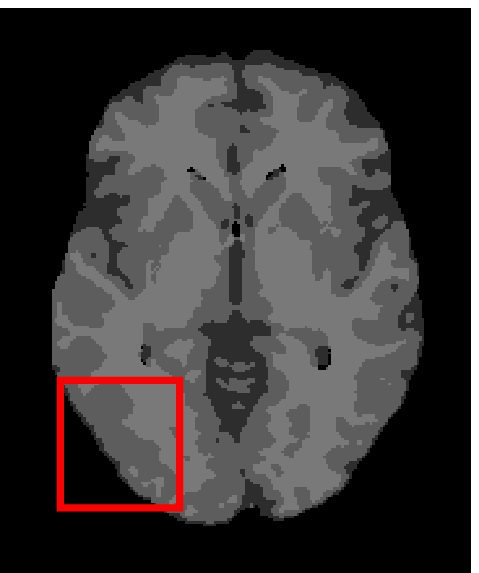}
\end{minipage}
\begin{minipage}[t]{0.187\linewidth}
\centering
\includegraphics[width=1\textwidth]{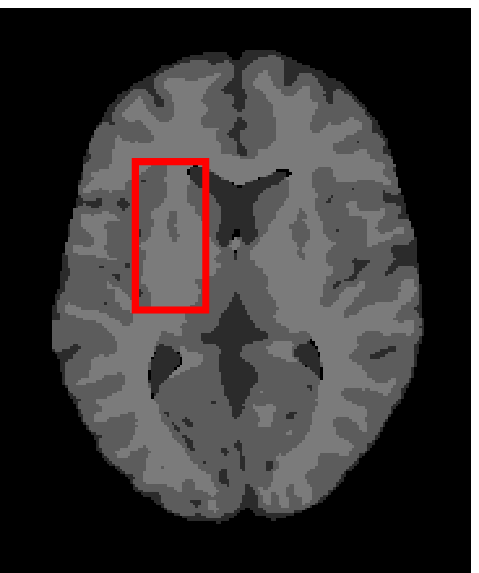}
\end{minipage}
\begin{minipage}[t]{0.187\linewidth}
\centering
\includegraphics[width=1\textwidth]{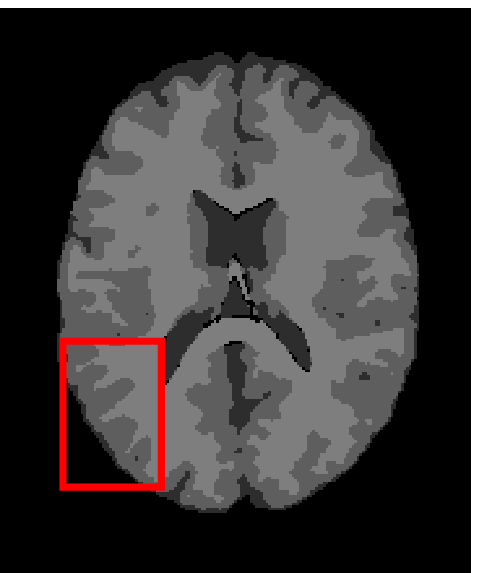}
\end{minipage}
\begin{minipage}[t]{0.187\linewidth}
\centering
\includegraphics[width=1\textwidth]{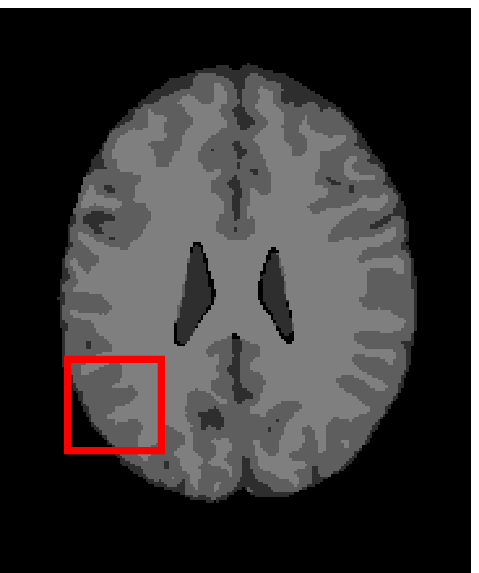}
\end{minipage}
\begin{minipage}[t]{0.187\linewidth}
\centering
\includegraphics[width=1\textwidth]{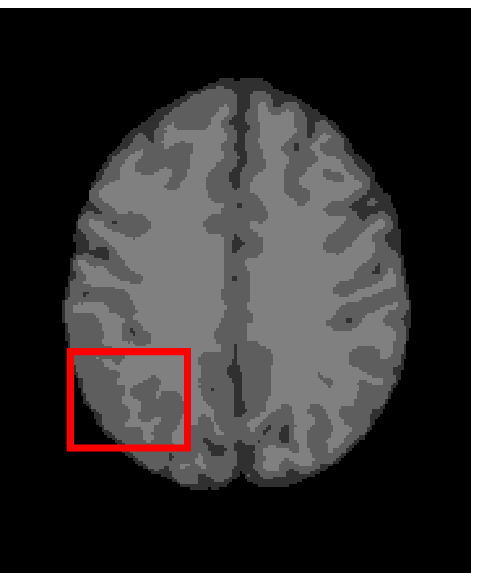}
\end{minipage}\\
\begin{minipage}[t]{0.187\linewidth}
\centering
\includegraphics[width=1\textwidth]{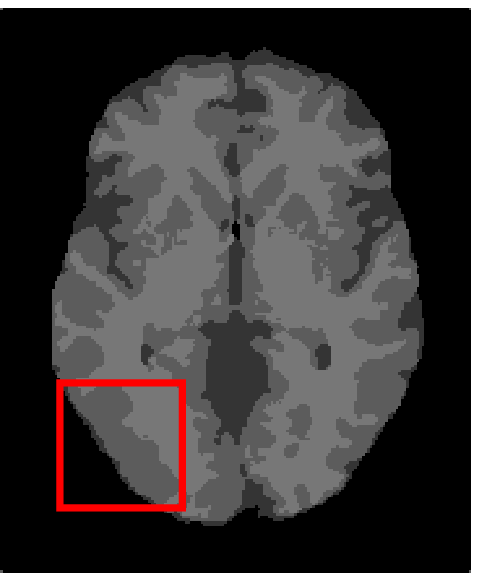}
\end{minipage}
\begin{minipage}[t]{0.187\linewidth}
\centering
\includegraphics[width=1\textwidth]{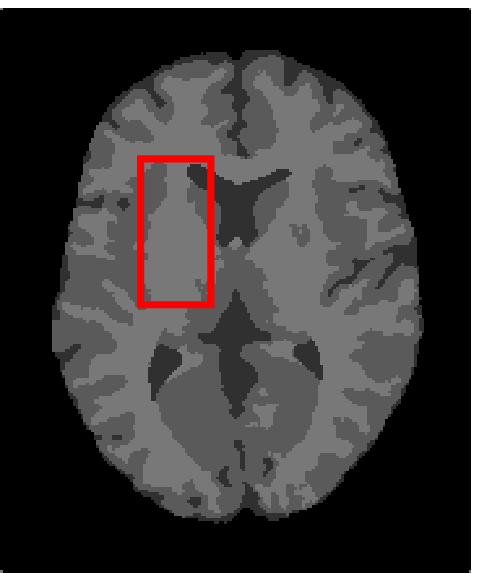}
\end{minipage}
\begin{minipage}[t]{0.187\linewidth}
\centering
\includegraphics[width=1\textwidth]{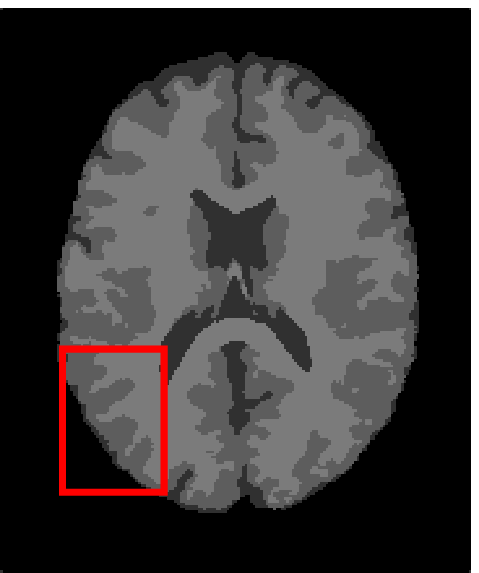}
\end{minipage}
\begin{minipage}[t]{0.187\linewidth}
\centering
\includegraphics[width=1\textwidth]{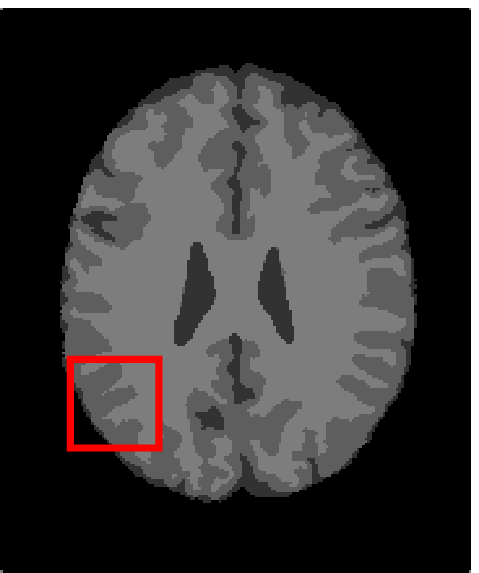}
\end{minipage}
\begin{minipage}[t]{0.187\linewidth}
\centering
\includegraphics[width=1\textwidth]{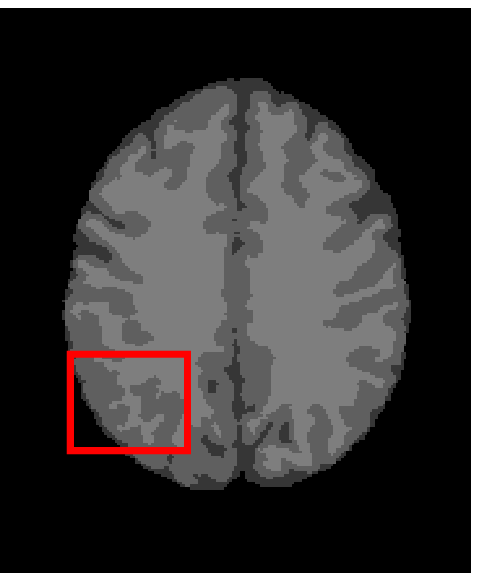}
\end{minipage}\\
\begin{minipage}[t]{0.187\linewidth}
\centering
\includegraphics[width=1\textwidth]{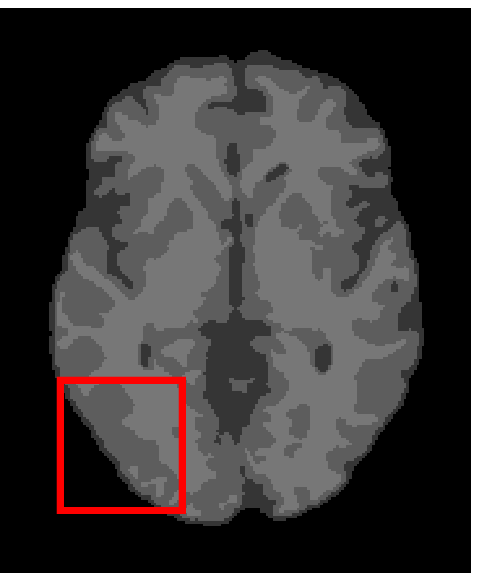}
\end{minipage}
\begin{minipage}[t]{0.187\linewidth}
\centering
\includegraphics[width=1\textwidth]{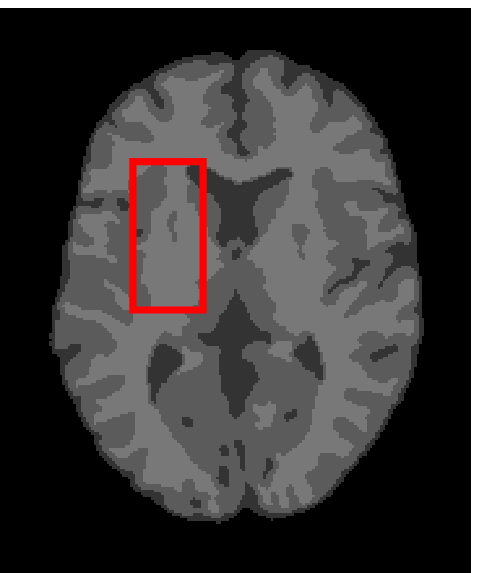}
\end{minipage}
\begin{minipage}[t]{0.187\linewidth}
\centering
\includegraphics[width=1\textwidth]{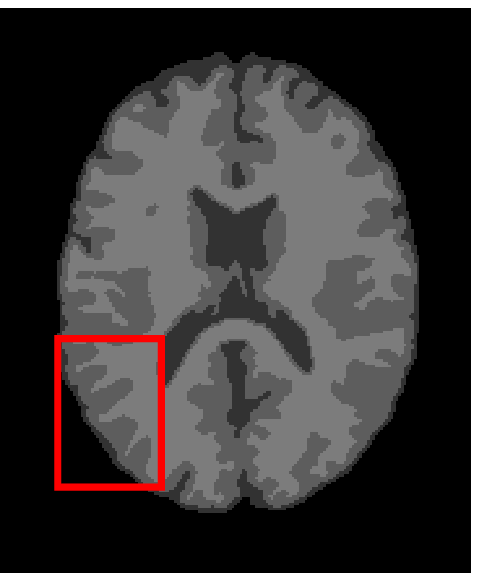}
\end{minipage}
\begin{minipage}[t]{0.187\linewidth}
\centering
\includegraphics[width=1\textwidth]{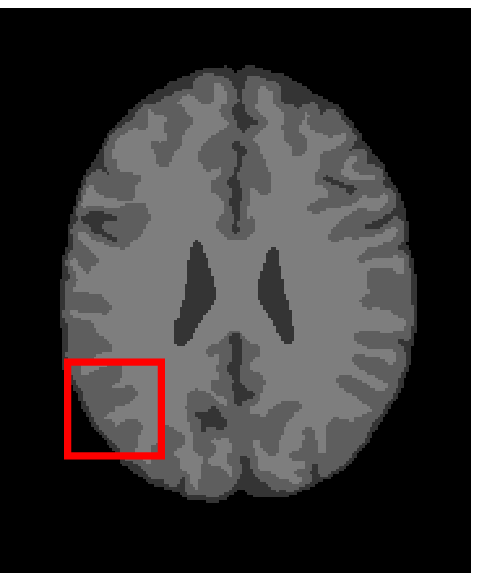}
\end{minipage}
\begin{minipage}[t]{0.187\linewidth}
\centering
\includegraphics[width=1\textwidth]{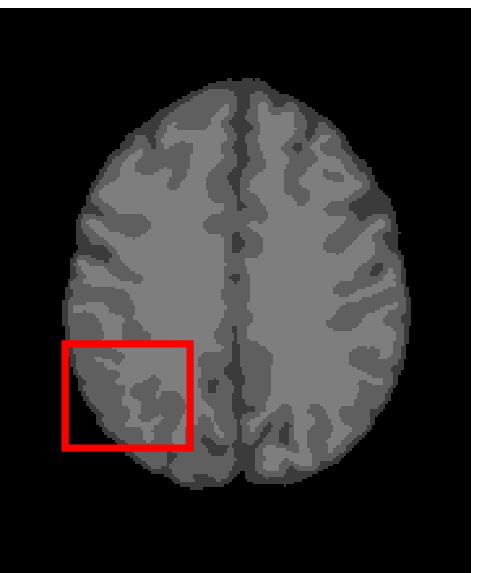}
\end{minipage}\\
\begin{minipage}[t]{0.187\linewidth}
\centering
\includegraphics[width=1\textwidth]{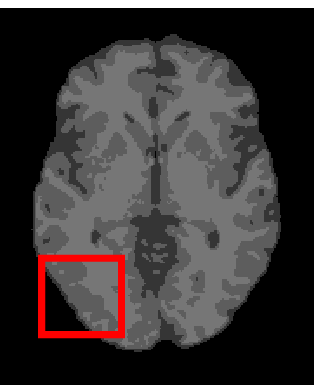}
\end{minipage}
\begin{minipage}[t]{0.187\linewidth}
\centering
\includegraphics[width=1\textwidth]{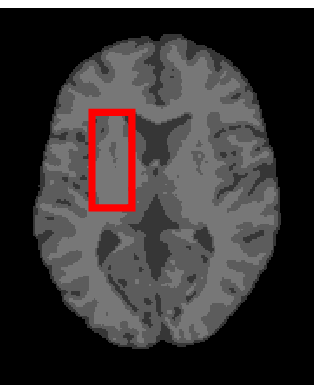}
\end{minipage}
\begin{minipage}[t]{0.187\linewidth}
\centering
\includegraphics[width=1\textwidth]{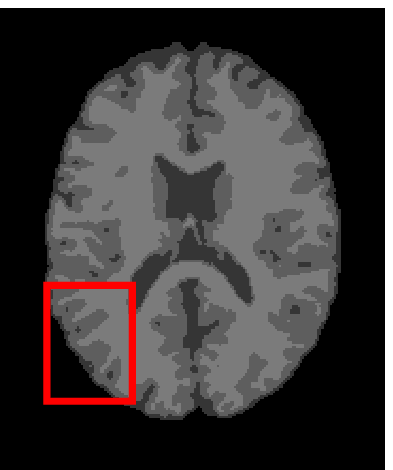}
\end{minipage}
\begin{minipage}[t]{0.187\linewidth}
\centering
\includegraphics[width=1\textwidth]{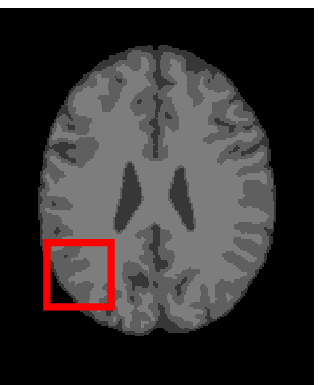}
\end{minipage}
\begin{minipage}[t]{0.187\linewidth}
\centering
\includegraphics[width=1\textwidth]{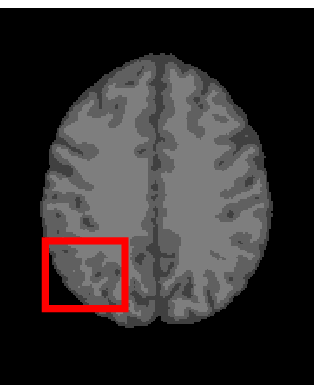}
\end{minipage}\\
\begin{minipage}[t]{0.187\linewidth}
\centering
\includegraphics[width=1\textwidth]{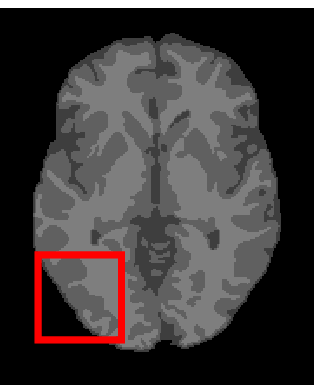}
\end{minipage}
\begin{minipage}[t]{0.187\linewidth}
\centering
\includegraphics[width=1\textwidth]{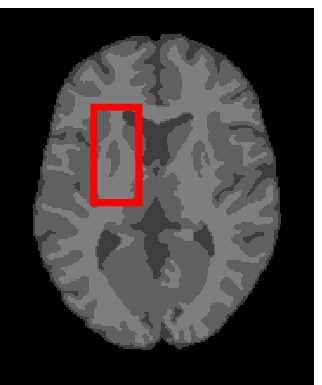}
\end{minipage}
\begin{minipage}[t]{0.187\linewidth}
\centering
\includegraphics[width=1\textwidth]{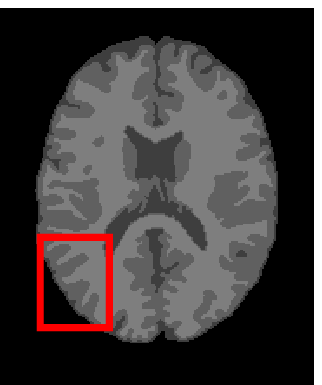}
\end{minipage}
\begin{minipage}[t]{0.187\linewidth}
\centering
\includegraphics[width=1\textwidth]{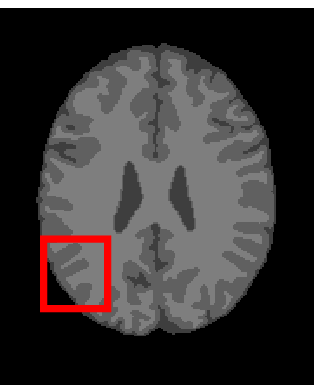}
\end{minipage}
\begin{minipage}[t]{0.187\linewidth}
\centering
\includegraphics[width=1\textwidth]{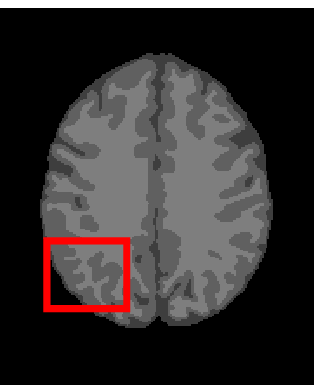}
\end{minipage}
\caption{Segmentation results on five medical images. The parameter: $\phi=5.35$. From top to bottom: noisy images, ground truth, and results of FCM\_S1, FCM\_S2, FLICM, KWFLICM, FRFCM, WFCM, DSFCM\_N, and WRFCM.}
\label{fig:med}
\end{figure}

\begin{table*}[htbp]
    \setlength{\abovecaptionskip}{0pt}
\setlength{\belowcaptionskip}{0pt}
  \centering
  \caption{Segmentation performance (\%) on medical images in BrianWeb}
  \scriptsize
    \begin{tabular}{c|ccc|ccc|ccc|ccc|ccc}
    \toprule
    \multirow{2}[3]{*}{Algorithm} &
      \multicolumn{3}{c|}{Fig. 9 column 1} &
      \multicolumn{3}{c|}{Fig. 9 column 2} &
      \multicolumn{3}{c|}{Fig. 9 column 3} &
      \multicolumn{3}{c|}{Fig. 9 column 4} &
      \multicolumn{3}{c}{Fig. 9 column 5}
      \\
\cmidrule{2-16}     &
      SA &
      SDS &
      MCC &
      SA &
      SDS &
      MCC &
      SA &
      SDS &
      MCC &
      SA &
      SDS &
      MCC &
      SA &
      SDS &
      MCC
      \\
      \midrule
    FCM\_S1 &
      75.756 &
      97.852 &
      96.225 &
      75.026 &
      98.109 &
      96.656 &
      79.792 &
      98.452 &
      97.334 &
      81.887 &
      98.614 &
      97.680 &
      81.869 &
      94.254 &
      90.947
      \\
    FCM\_S2 &
      75.769 &
      98.119 &
      96.664 &
      74.970 &
      98.176 &
      96.765 &
      79.886 &
      98.458 &
      97.338 &
      82.073 &
      98.625 &
      97.695 &
      81.788 &
      98.223 &
      97.195
      \\
    FLICM &
      74.998 &
      98.070 &
      96.568 &
      74.185 &
      98.122 &
      96.660 &
      79.099 &
      98.515 &
      97.432 &
      81.447 &
      98.627 &
      97.691 &
      81.668 &
      98.273 &
      97.260
      \\
    KWFLICM &
      74.840 &
      98.259 &
      96.878 &
      73.839 &
      97.860 &
      96.190 &
      79.560 &
      98.453 &
      97.316 &
      81.887 &
      98.482 &
      97.443 &
      81.370 &
      98.297 &
      97.286
      \\
    FRFCM &
      75.853 &
      97.620 &
      95.775 &
      75.514 &
      97.660 &
      95.830 &
      80.283 &
      98.278 &
      97.013 &
      81.852 &
      98.319 &
      97.171 &
      81.666 &
      98.079 &
      96.945
      \\
    WFCM &
      75.507 &
      97.124 &
      94.957 &
      74.471 &
      97.213 &
      95.045 &
      79.316 &
      97.845 &
      96.283 &
      81.358 &
      97.546 &
      95.211 &
      81.452 &
      95.247 &
      92.501
      \\
    DSFCM\_N &
      76.400 &
      92.325 &
      86.262 &
      75.288 &
      91.574 &
      85.095 &
      79.861 &
      97.678 &
      95.996 &
      81.831 &
      93.304 &
      88.829 &
      81.750 &
      94.302 &
      91.024
      \\
    WRFCM &
      \textbf{82.317} &
      \textbf{98.966} &
      \textbf{98.147} &
      \textbf{82.141} &
      \textbf{98.298} &
      \textbf{96.970} &
      \textbf{83.914} &
      \textbf{98.963} &
      \textbf{98.202} &
      \textbf{83.533} &
      \textbf{99.170} &
      \textbf{98.603} &
      \textbf{84.615} &
      \textbf{98.429} &
      \textbf{97.511}
      \\
    \bottomrule
    \end{tabular}%
  \label{tab:med}%
\end{table*}%

By a view of the marked red square in Fig. \ref{fig:med}, we find that FCM\_S1, FCM\_S2, FLICM, KWFLICM and DSFCM\_N are vulnerable to noise and intensity non-uniformity. They give rise to the change of topological shapes to some extent. Unlike them, FRFCM and WFCM achieve sufficient noise removal. However, they produce overly smooth contours. Compared with its seven peers, WRFCM can not only suppress noise adequately but also acquire accurate contours. Moreover, it yields the visual result closer to ground truth than its peers. As Table \ref{tab:med} shows, WRFCM obtains optimal SA, SDS and MCC results for all five medical images. As a conclusion, it outperforms its peers visually and quantitatively.

\subsubsection{Results on Real-world Images}
In order to demonstrate the practicality of \mbox{WRFCM} for other image segmentation, we typically choose two sets of \mbox{real-world} images in the last experiment. The first set contains five representative images from BSDS and MSRC. There usually exist some outliers, noise or intensity inhomogeneity in each image. For all tested images, we set $c=2$. The segmentation results of all algorithms are shown in Fig. \ref{fig:BM} and Table \ref{tab:BM}.
\begin{table*}[htbp]
    \setlength{\abovecaptionskip}{0pt}
\setlength{\belowcaptionskip}{0pt}
  \centering
  \caption{Segmentation performance (\%) on real-world Images in BSDS and MSRC}
  \scriptsize
    \begin{tabular}{c|ccc|ccc|ccc|ccc|ccc}
    \toprule
    \multirow{2}[3]{*}{Algorithm} &
      \multicolumn{3}{c|}{Fig. 10 column 1} &
      \multicolumn{3}{c|}{Fig. 10 column 2} &
      \multicolumn{3}{c|}{Fig. 10 column 3} &
      \multicolumn{3}{c|}{Fig. 10 column 4} &
      \multicolumn{3}{c}{Fig. 10 column 5}
      \\
\cmidrule{2-16}     &
      SA &
      SDS &
      MCC &
      SA &
      SDS &
      MCC &
      SA &
      SDS &
      MCC &
      SA &
      SDS &
      MCC &
      SA &
      SDS &
      MCC
      \\
      \midrule
    FCM\_S1 &
      86.384 &
      89.687 &
      69.705 &
      50.997 &
      66.045 &
      2.724 &
      67.289 &
      72.570 &
      32.232 &
      80.688 &
      88.159 &
      49.369 &
      78.717 &
      47.696 &
      48.874
      \\
    FCM\_S2 &
      86.138 &
      79.701 &
      69.208 &
      51.433 &
      12.089 &
      2.951 &
      67.105 &
      59.523 &
      31.941 &
      80.657 &
      47.557 &
      49.256 &
      78.365 &
      86.449 &
      47.881
      \\
    FLICM &
      86.476 &
      89.771 &
      69.882 &
      55.292 &
      70.055 &
      2.403 &
      89.233 &
      91.167 &
      78.117 &
      80.771 &
      47.826 &
      49.729 &
      80.617 &
      54.490 &
      54.029
      \\
    KWFLICM &
      87.119 &
      90.278 &
      71.283 &
      48.252 &
      63.432 &
      1.554 &
      64.617 &
      66.081 &
      30.820 &
      80.484 &
      46.723 &
      48.777 &
      77.963 &
      44.791 &
      46.755
      \\
    FRFCM &
      97.701 &
      \textbf{98.235} &
      94.941 &
      99.690 &
      97.436 &
      97.273 &
      99.380 &
      99.467 &
      98.732 &
      83.974 &
      89.927 &
      58.683 &
      96.985 &
      97.861 &
      92.987
      \\
    WFCM &
      98.442 &
      97.755 &
      96.563 &
      99.688 &
      \textbf{99.834} &
      97.268 &
      99.295 &
      99.160 &
      98.555 &
      84.480 &
      62.664 &
      60.043 &
      96.445 &
      93.943 &
      91.719
      \\
    DSFCM\_N &
      93.116 &
      90.279 &
      84.987 &
      50.688 &
      11.093 &
      0.638 &
      92.101 &
      90.791 &
      83.922 &
      50.858 &
      60.181 &
      0.506 &
      95.412 &
      92.319 &
      89.179
      \\
    WRFCM &
      \textbf{98.732} &
      98.162 &
      \textbf{97.201} &
      \textbf{99.746} &
      97.906 &
      \textbf{97.771} &
      \textbf{99.442} &
      \textbf{99.520} &
      \textbf{98.857} &
      \textbf{99.826} &
      \textbf{99.888} &
      \textbf{99.074} &
      \textbf{99.869} &
      \textbf{99.789} &
      \textbf{99.694}
      \\
    \bottomrule
    \end{tabular}%
  \label{tab:BM}%
\end{table*}%

\begin{figure}[htb]
\centering
\begin{minipage}[t]{0.19\linewidth}
\centering
\includegraphics[width=1\textwidth]{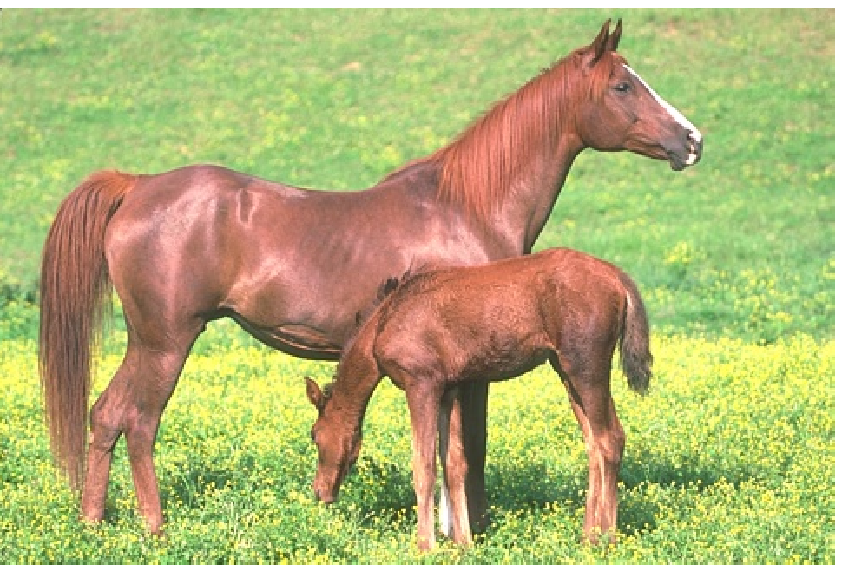}
\end{minipage}
\begin{minipage}[t]{0.19\linewidth}
\centering
\includegraphics[width=1\textwidth]{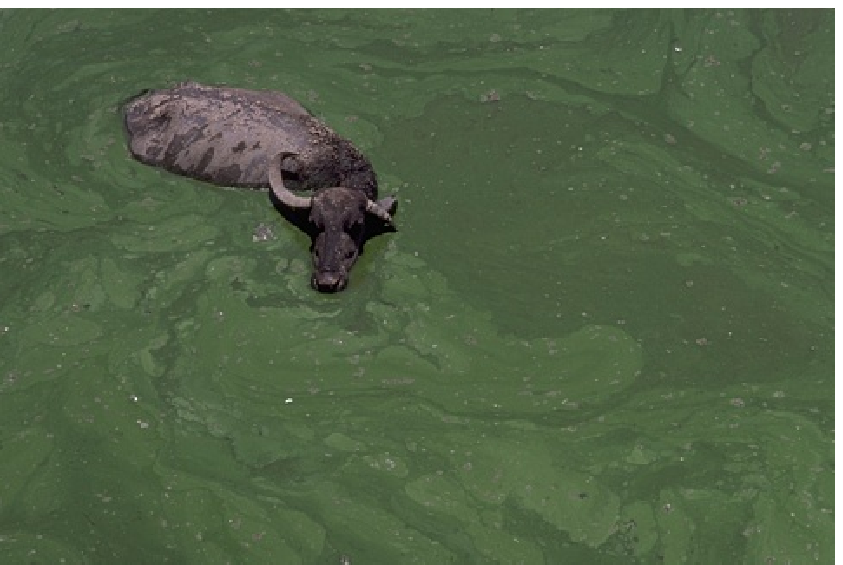}
\end{minipage}
\begin{minipage}[t]{0.19\linewidth}
\centering
\includegraphics[width=1\textwidth]{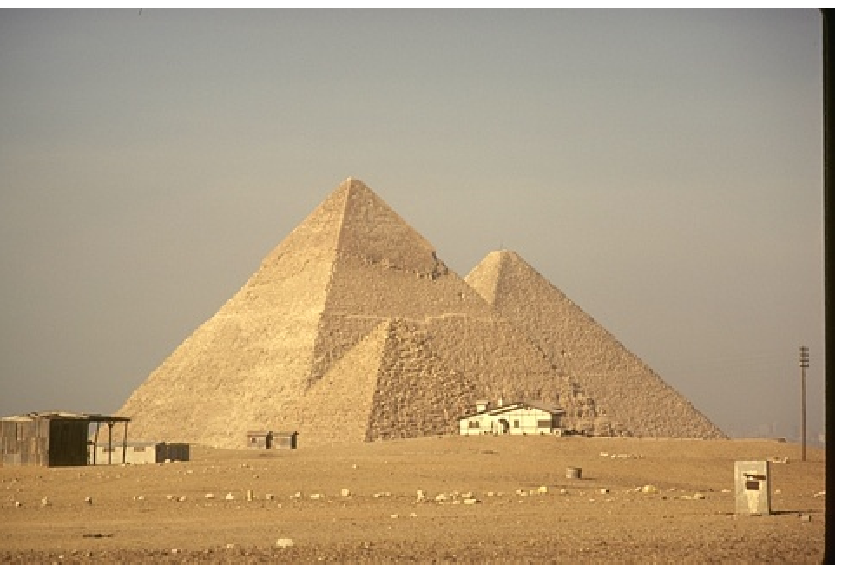}
\end{minipage}
\begin{minipage}[t]{0.19\linewidth}
\centering
\includegraphics[width=1\textwidth]{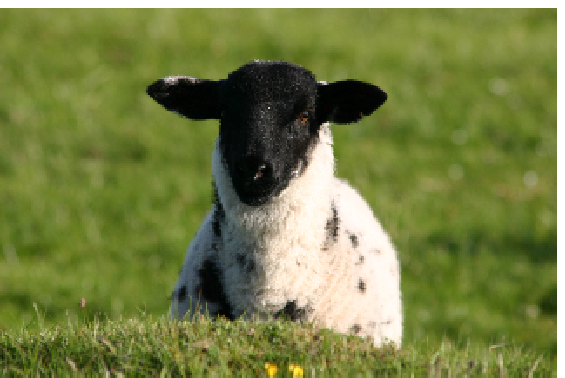}
\end{minipage}
\begin{minipage}[t]{0.19\linewidth}
\centering
\includegraphics[width=1\textwidth]{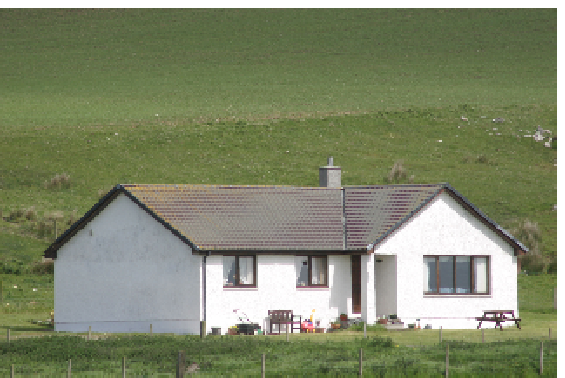}
\end{minipage}\\
\begin{minipage}[t]{0.19\linewidth}
\centering
\includegraphics[width=1\textwidth]{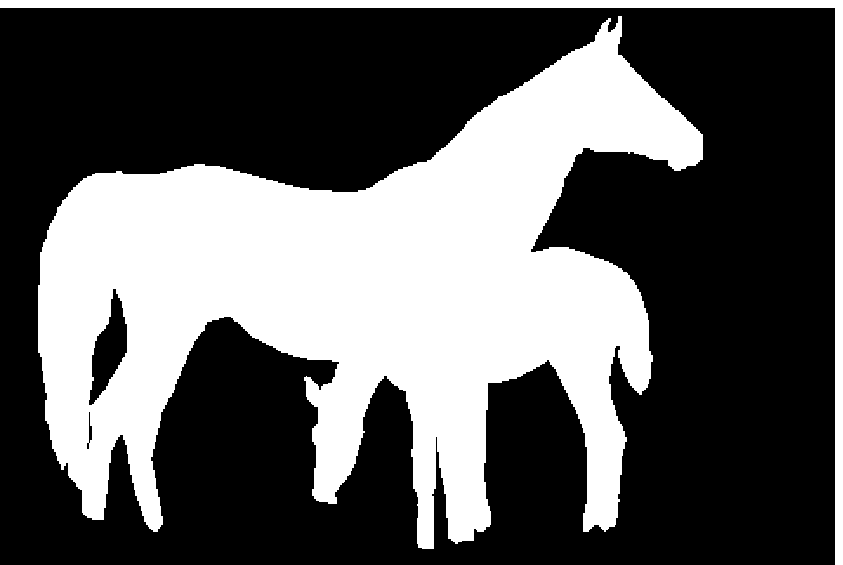}
\end{minipage}
\begin{minipage}[t]{0.19\linewidth}
\centering
\includegraphics[width=1\textwidth]{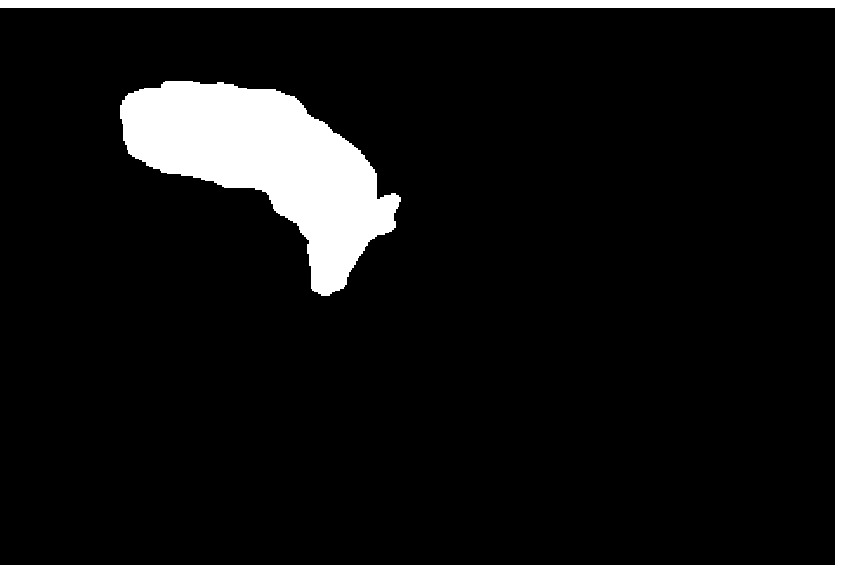}
\end{minipage}
\begin{minipage}[t]{0.19\linewidth}
\centering
\includegraphics[width=1\textwidth]{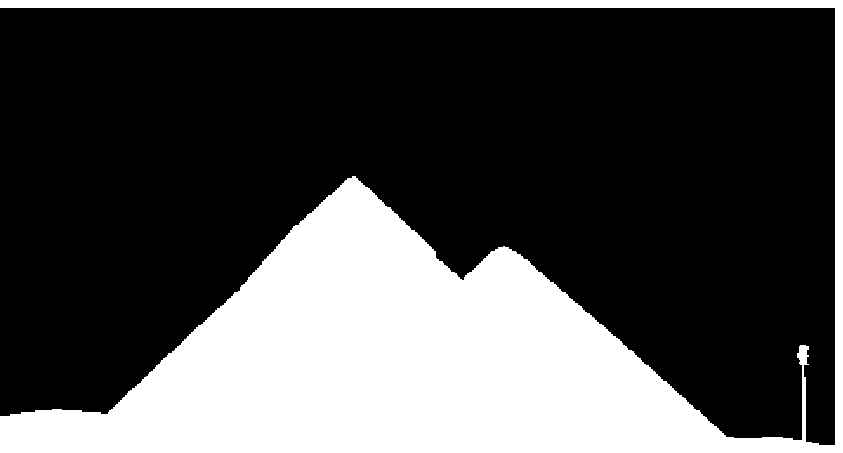}
\end{minipage}
\begin{minipage}[t]{0.19\linewidth}
\centering
\includegraphics[width=1\textwidth]{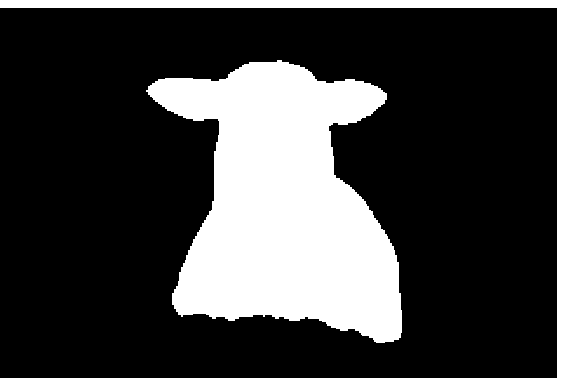}
\end{minipage}
\begin{minipage}[t]{0.19\linewidth}
\centering
\includegraphics[width=1\textwidth]{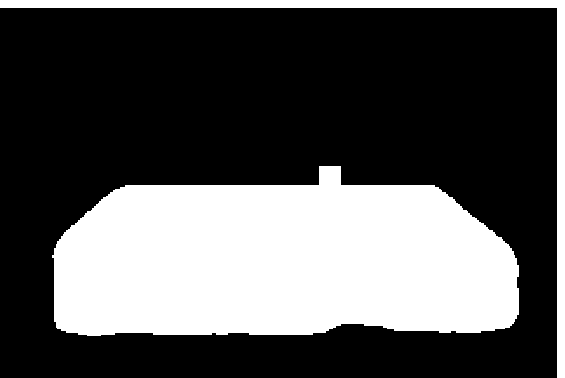}
\end{minipage}\\
\begin{minipage}[t]{0.19\linewidth}
\centering
\includegraphics[width=1\textwidth]{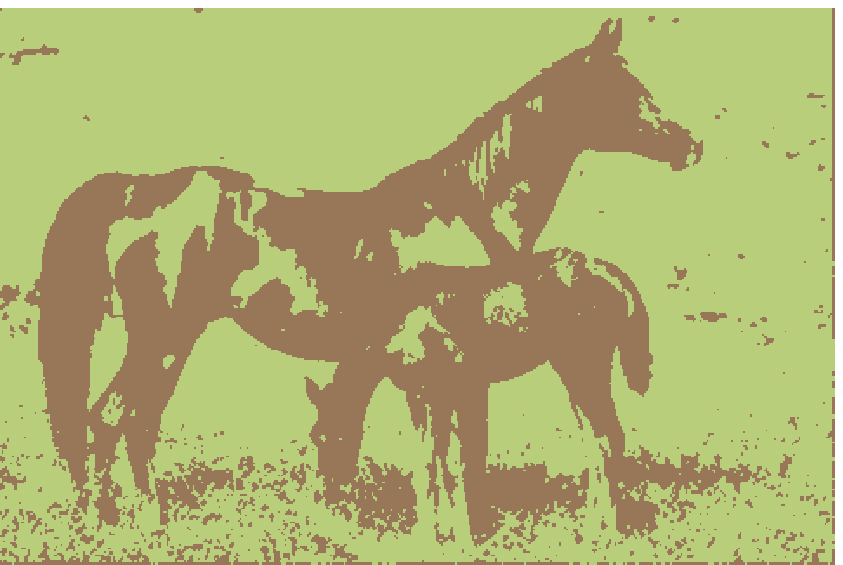}
\end{minipage}
\begin{minipage}[t]{0.19\linewidth}
\centering
\includegraphics[width=1\textwidth]{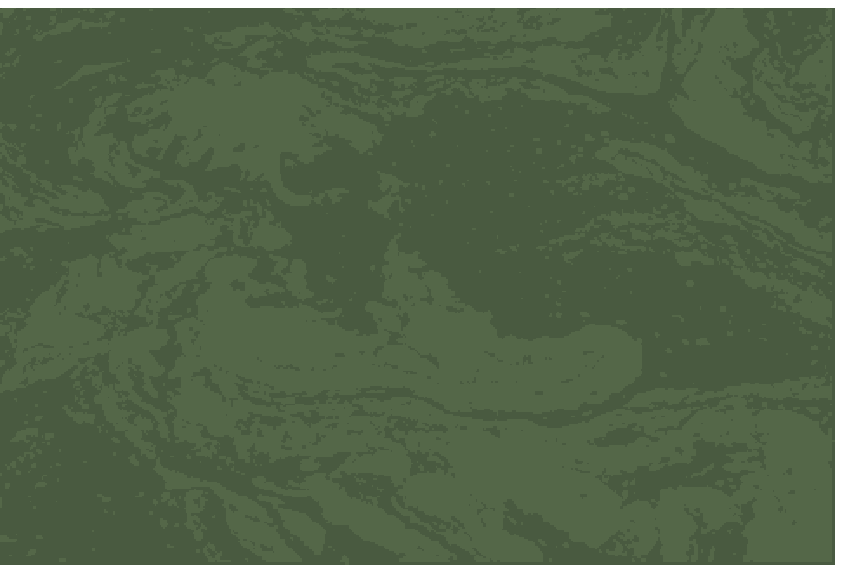}
\end{minipage}
\begin{minipage}[t]{0.19\linewidth}
\centering
\includegraphics[width=1\textwidth]{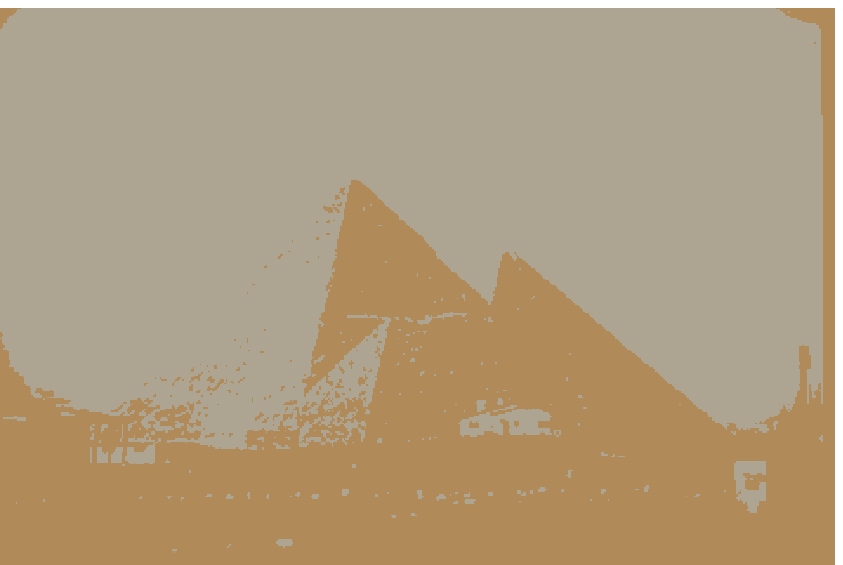}
\end{minipage}
\begin{minipage}[t]{0.19\linewidth}
\centering
\includegraphics[width=1\textwidth]{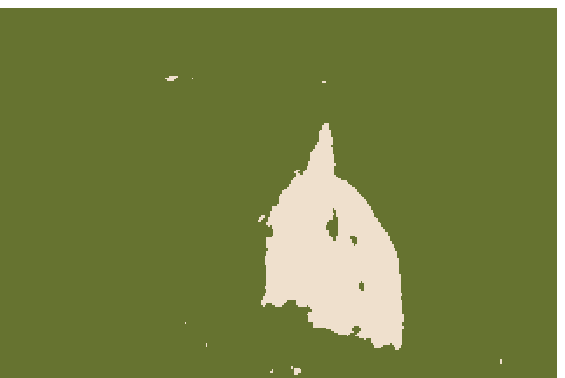}
\end{minipage}
\begin{minipage}[t]{0.19\linewidth}
\centering
\includegraphics[width=1\textwidth]{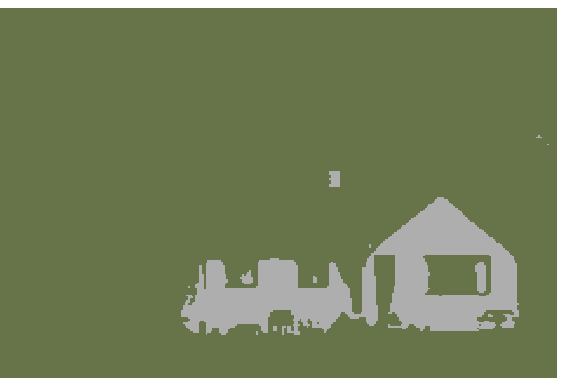}
\end{minipage}\\
\begin{minipage}[t]{0.19\linewidth}
\centering
\includegraphics[width=1\textwidth]{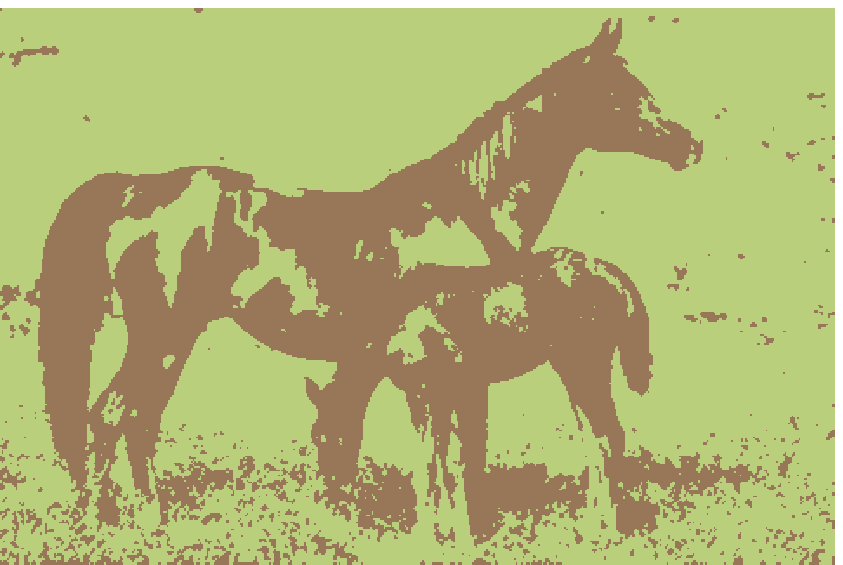}
\end{minipage}
\begin{minipage}[t]{0.19\linewidth}
\centering
\includegraphics[width=1\textwidth]{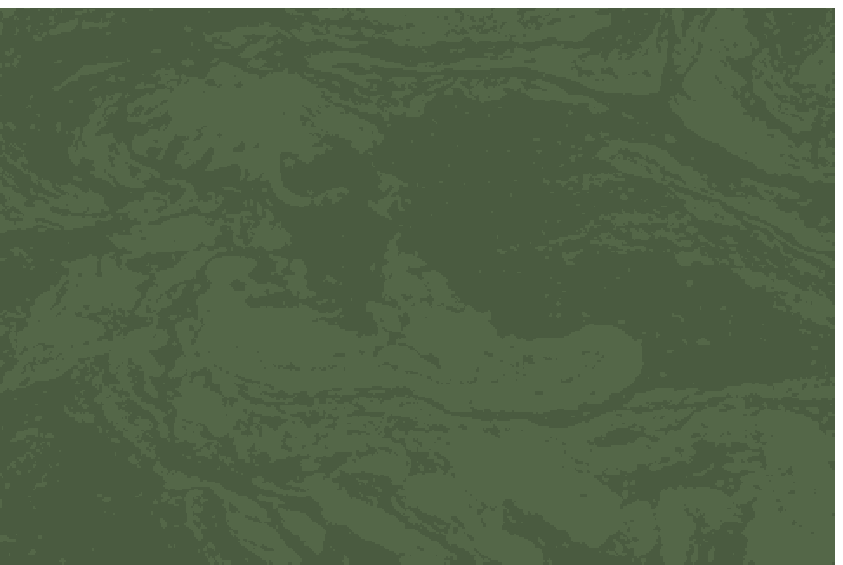}
\end{minipage}
\begin{minipage}[t]{0.19\linewidth}
\centering
\includegraphics[width=1\textwidth]{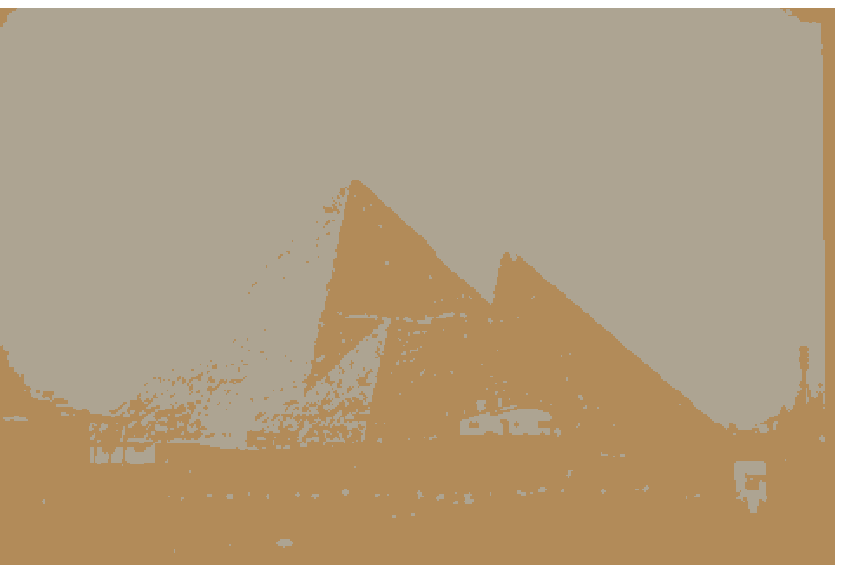}
\end{minipage}
\begin{minipage}[t]{0.19\linewidth}
\centering
\includegraphics[width=1\textwidth]{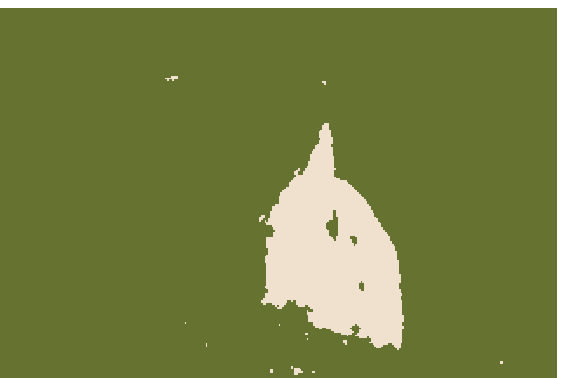}
\end{minipage}
\begin{minipage}[t]{0.19\linewidth}
\centering
\includegraphics[width=1\textwidth]{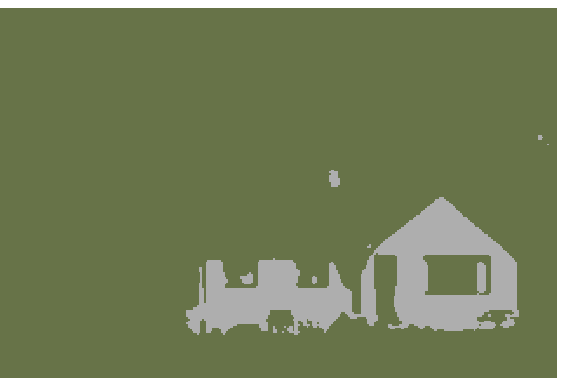}
\end{minipage}\\
\begin{minipage}[t]{0.19\linewidth}
\centering
\includegraphics[width=1\textwidth]{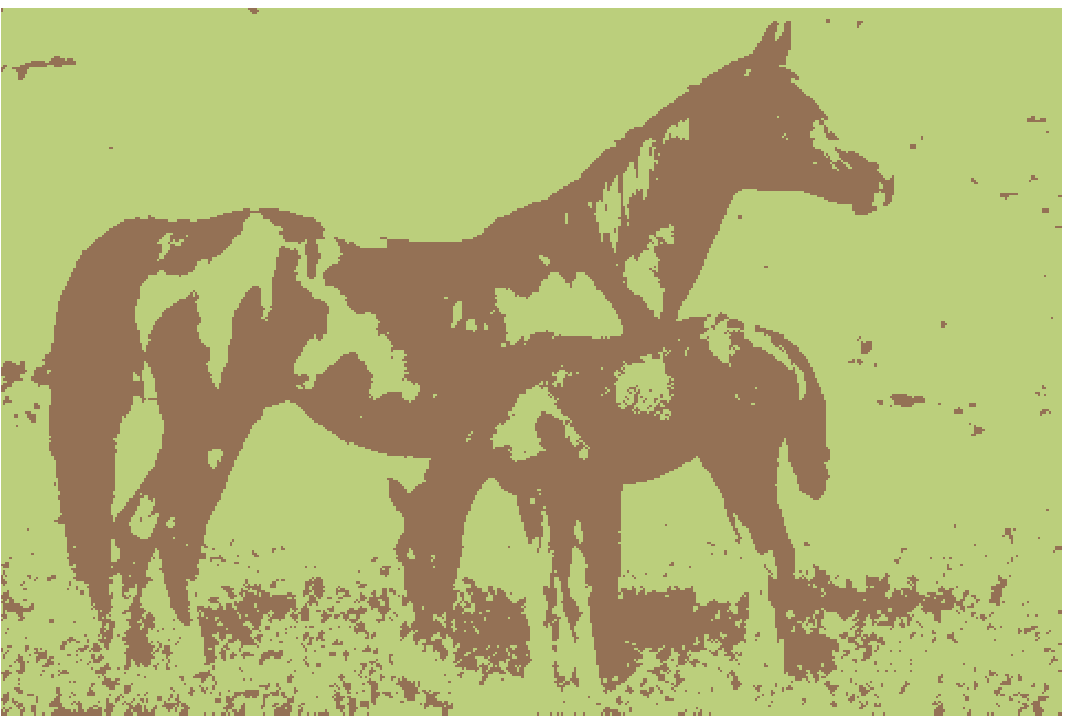}
\end{minipage}
\begin{minipage}[t]{0.19\linewidth}
\centering
\includegraphics[width=1\textwidth]{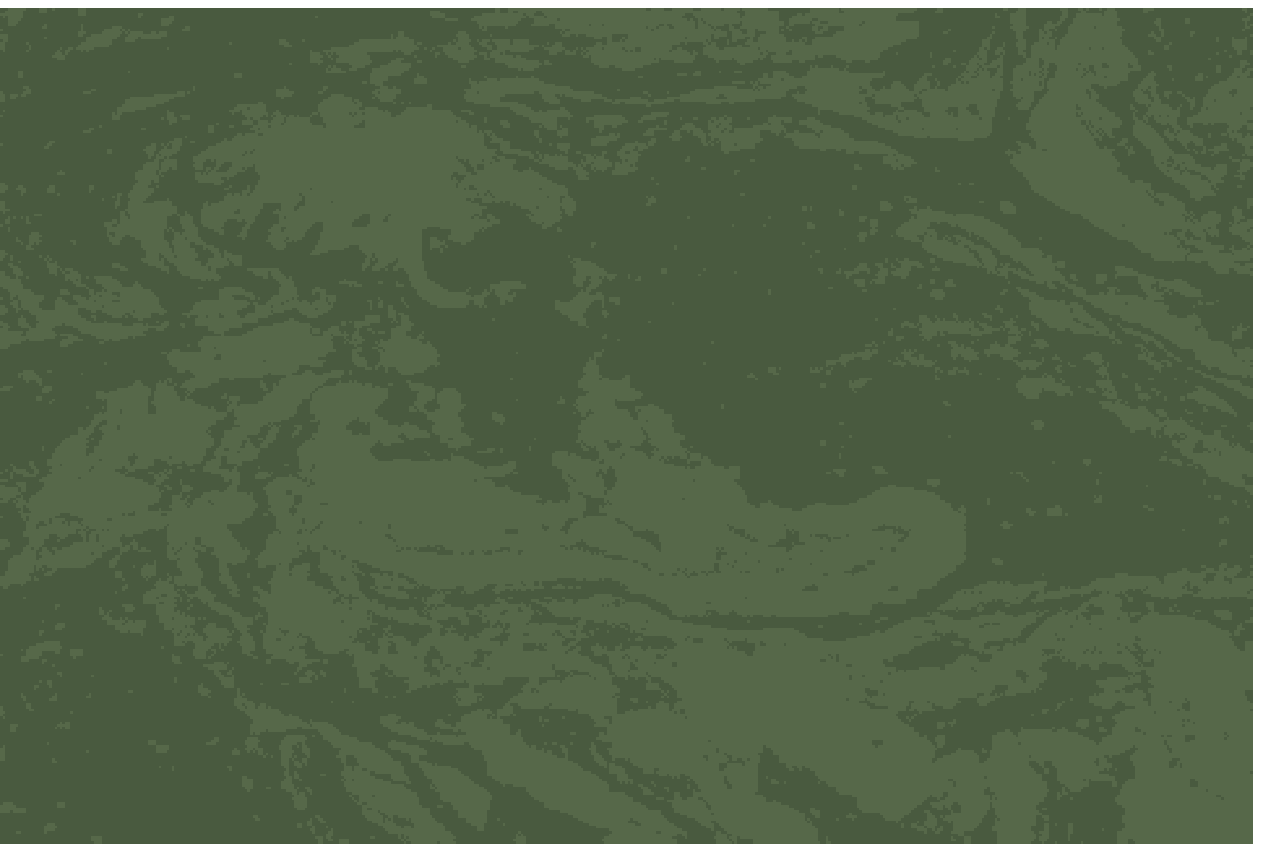}
\end{minipage}
\begin{minipage}[t]{0.19\linewidth}
\centering
\includegraphics[width=1\textwidth]{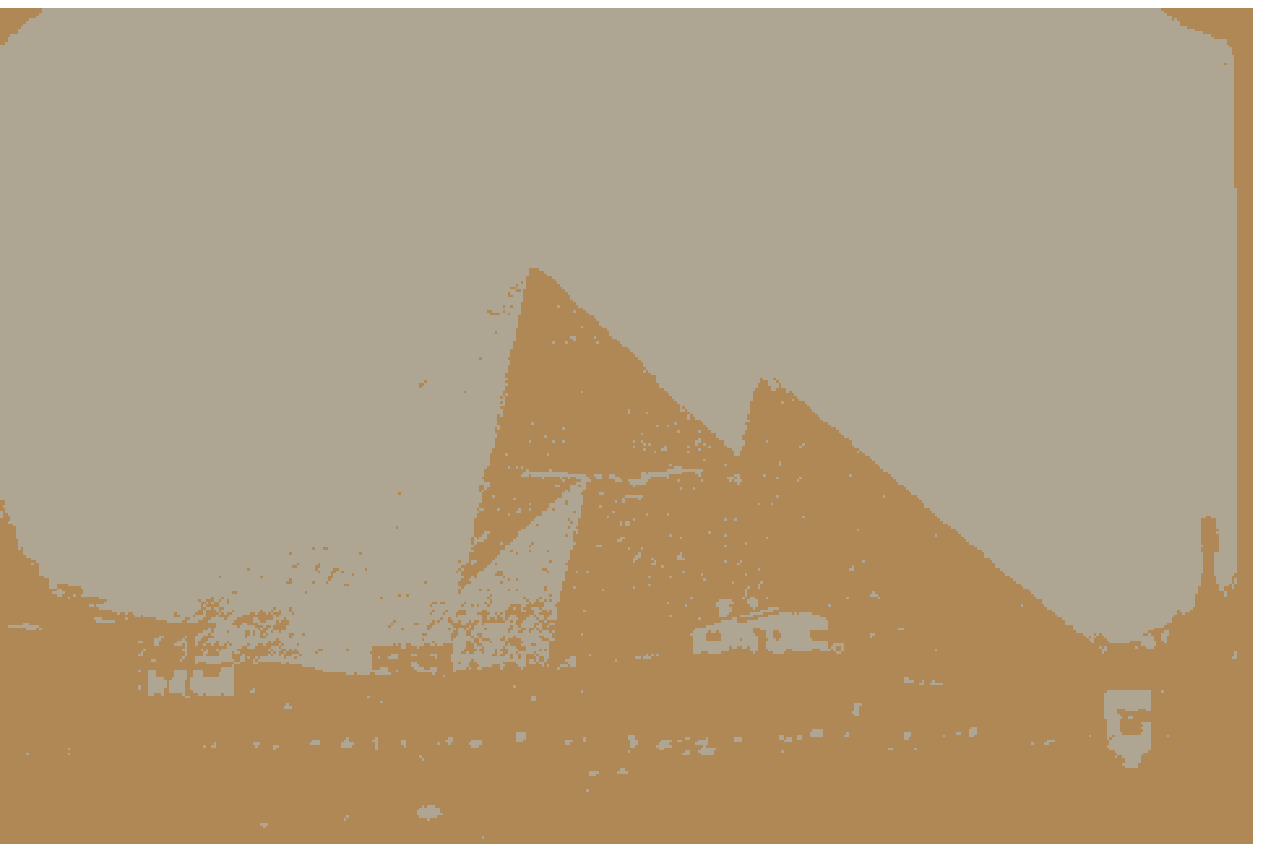}
\end{minipage}
\begin{minipage}[t]{0.19\linewidth}
\centering
\includegraphics[width=1\textwidth]{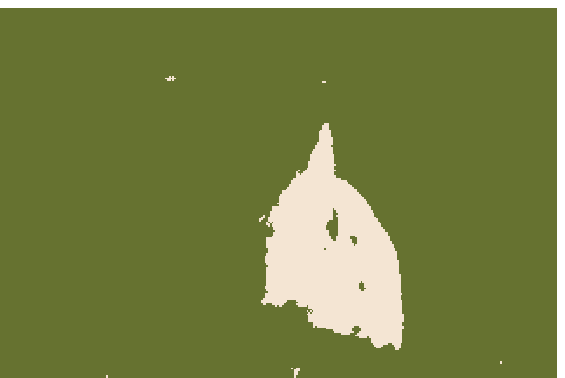}
\end{minipage}
\begin{minipage}[t]{0.19\linewidth}
\centering
\includegraphics[width=1\textwidth]{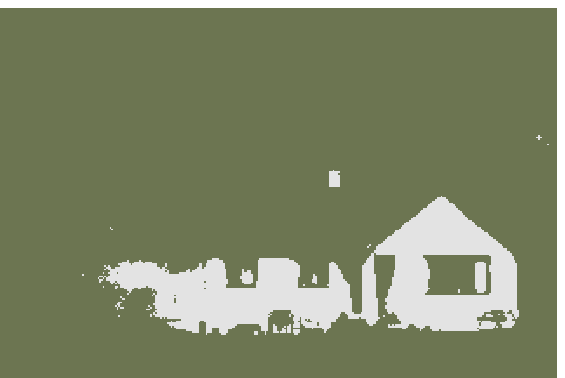}
\end{minipage}\\
\begin{minipage}[t]{0.19\linewidth}
\centering
\includegraphics[width=1\textwidth]{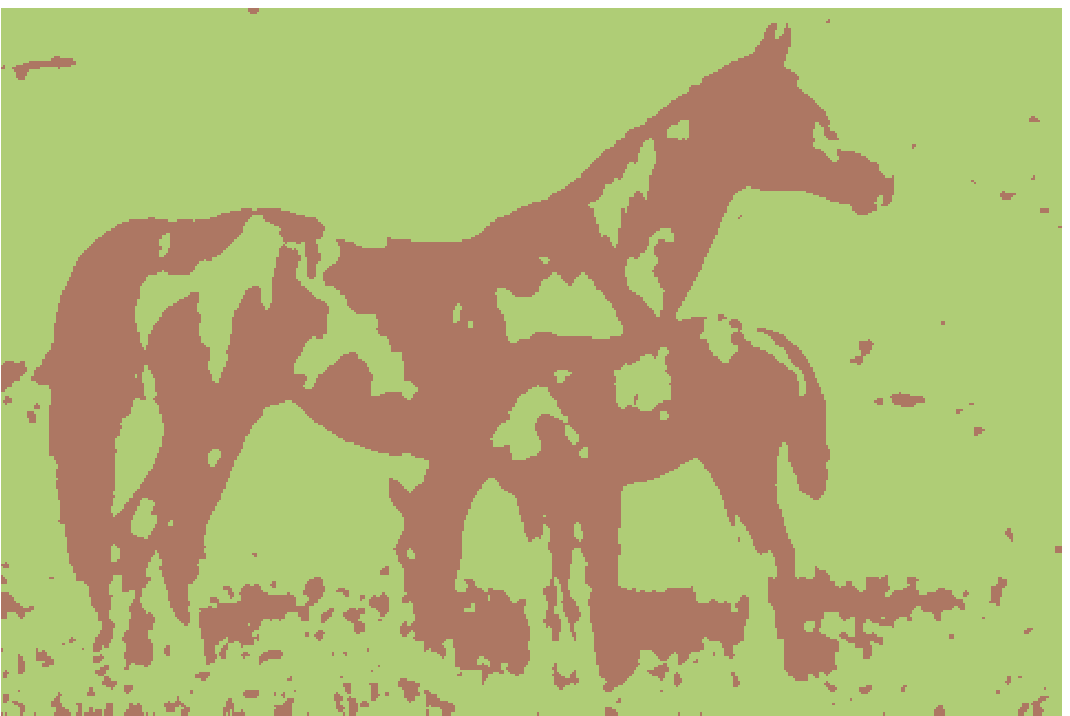}
\end{minipage}
\begin{minipage}[t]{0.19\linewidth}
\centering
\includegraphics[width=1\textwidth]{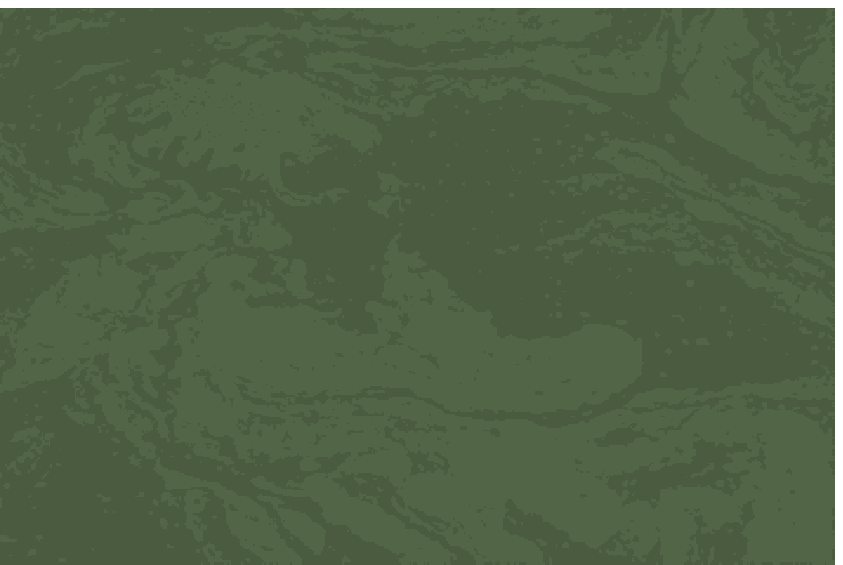}
\end{minipage}
\begin{minipage}[t]{0.19\linewidth}
\centering
\includegraphics[width=1\textwidth]{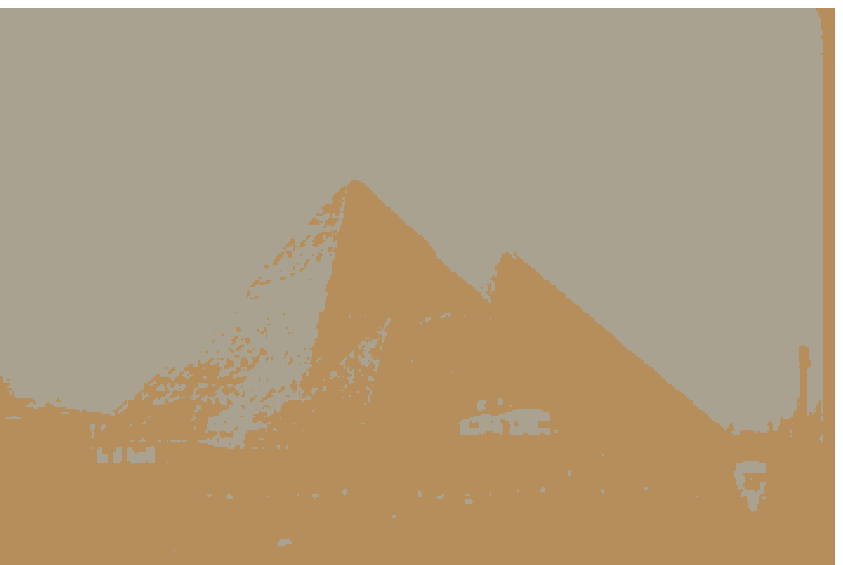}
\end{minipage}
\begin{minipage}[t]{0.19\linewidth}
\centering
\includegraphics[width=1\textwidth]{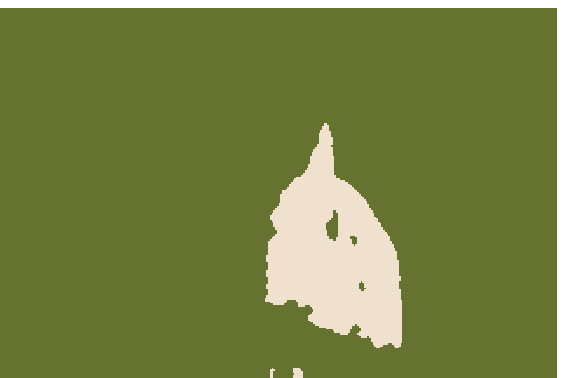}
\end{minipage}
\begin{minipage}[t]{0.19\linewidth}
\centering
\includegraphics[width=1\textwidth]{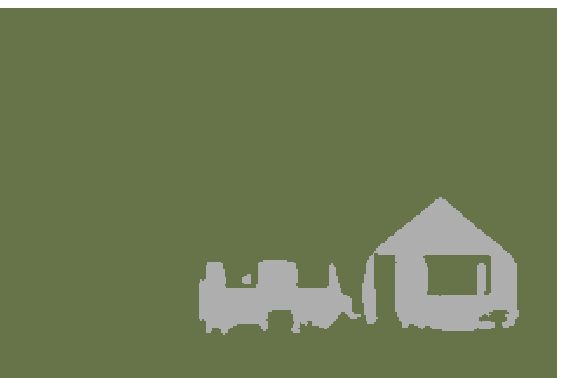}
\end{minipage}\\
\begin{minipage}[t]{0.19\linewidth}
\centering
\includegraphics[width=1\textwidth]{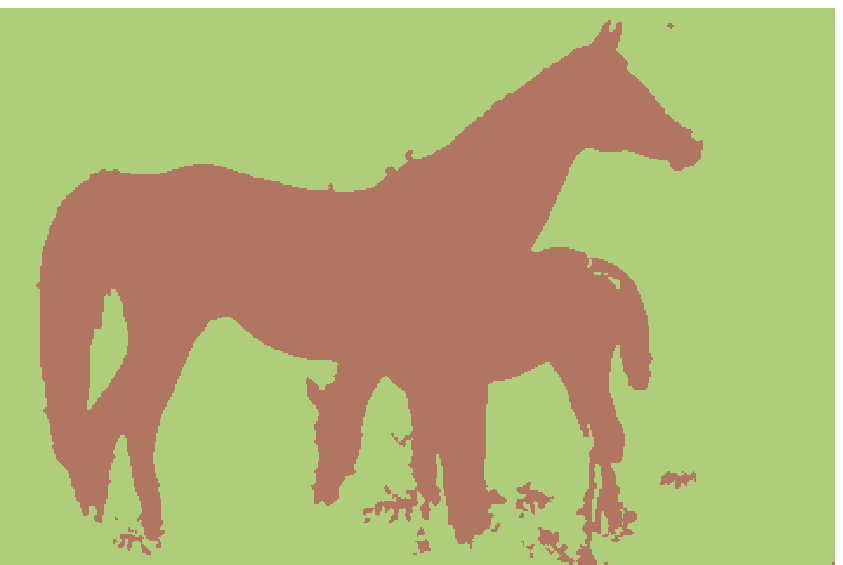}
\end{minipage}
\begin{minipage}[t]{0.19\linewidth}
\centering
\includegraphics[width=1\textwidth]{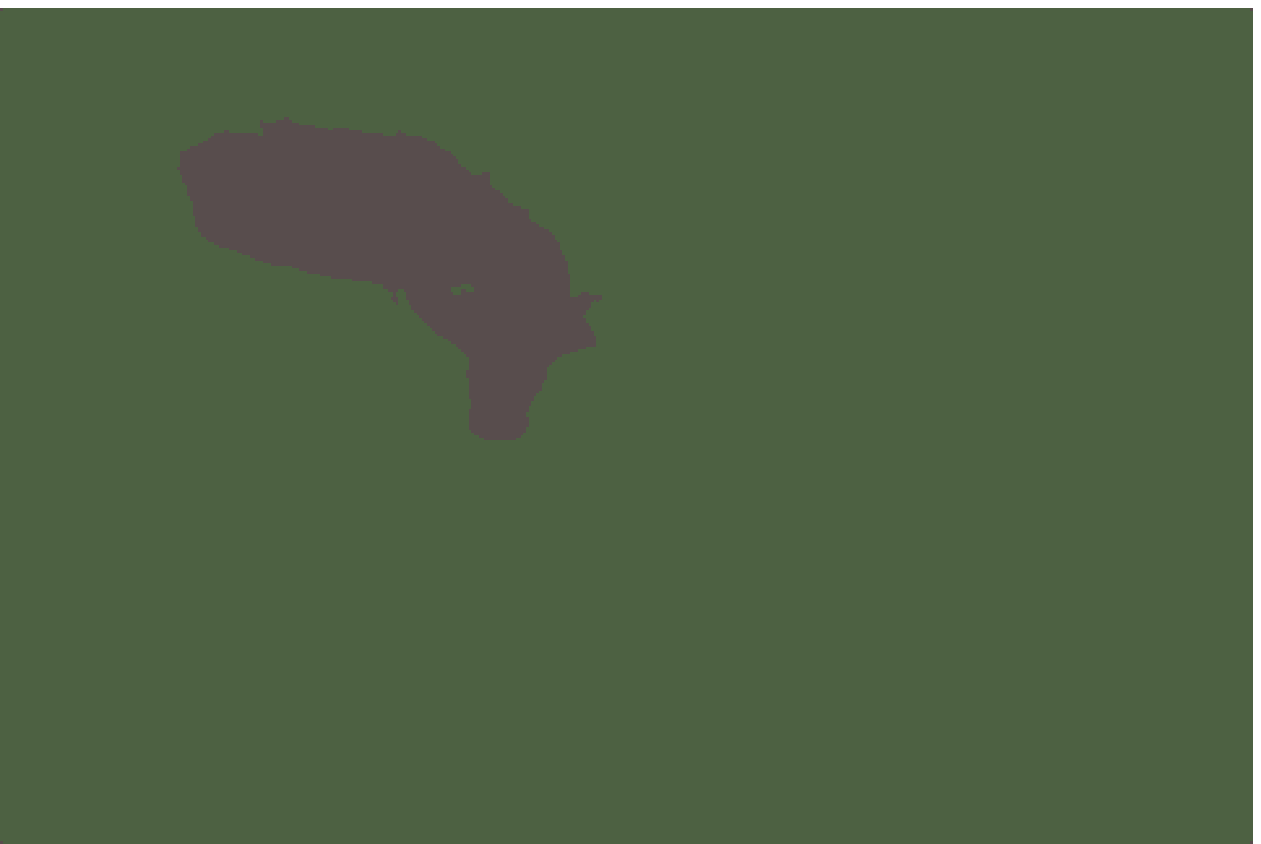}
\end{minipage}
\begin{minipage}[t]{0.19\linewidth}
\centering
\includegraphics[width=1\textwidth]{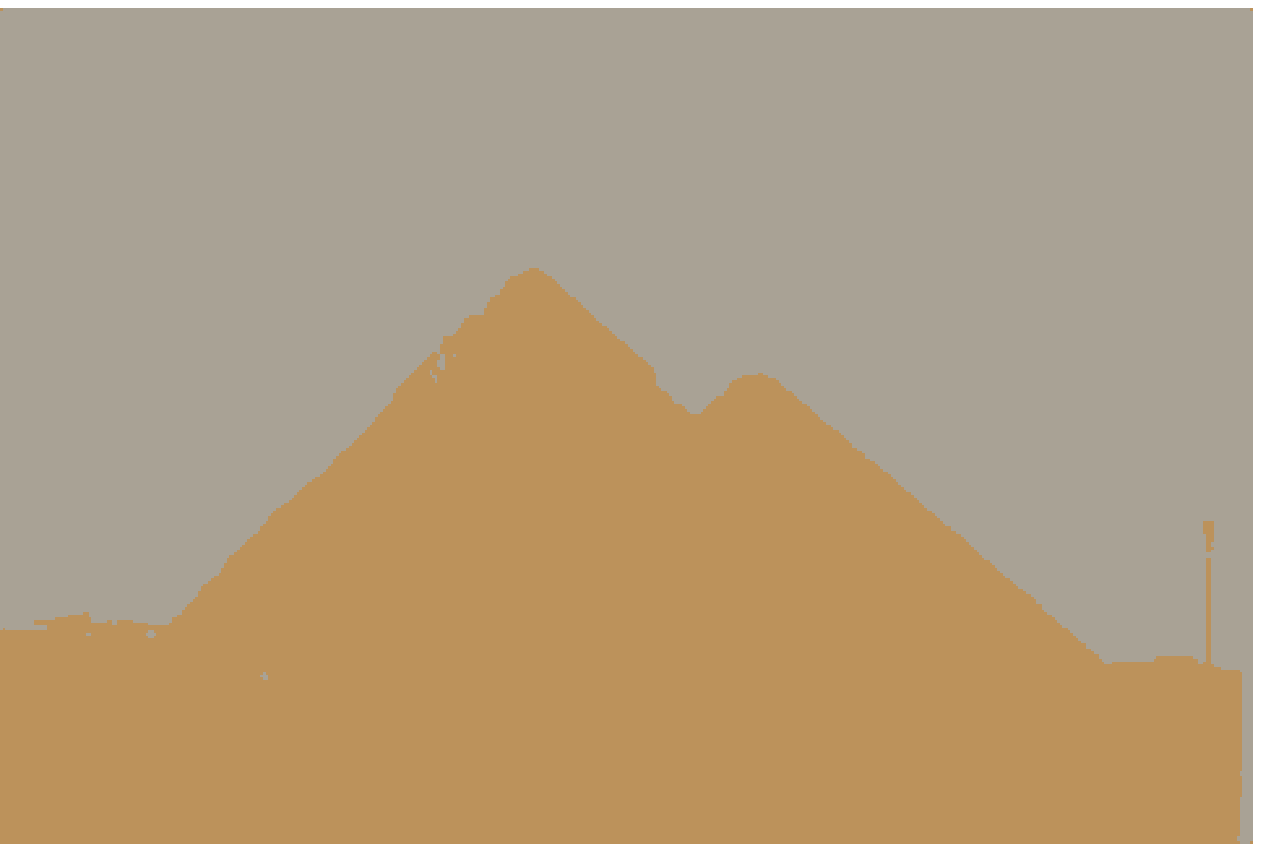}
\end{minipage}
\begin{minipage}[t]{0.19\linewidth}
\centering
\includegraphics[width=1\textwidth]{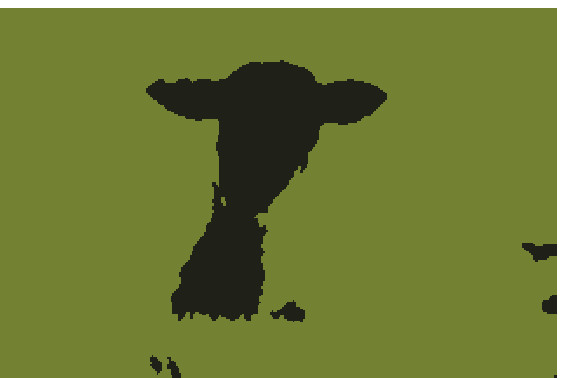}
\end{minipage}
\begin{minipage}[t]{0.19\linewidth}
\centering
\includegraphics[width=1\textwidth]{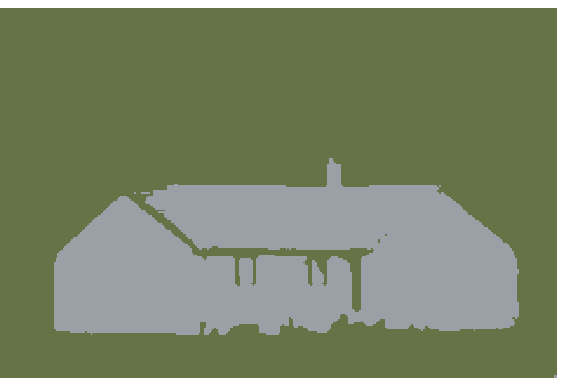}
\end{minipage}\\
\begin{minipage}[t]{0.19\linewidth}
\centering
\includegraphics[width=1\textwidth]{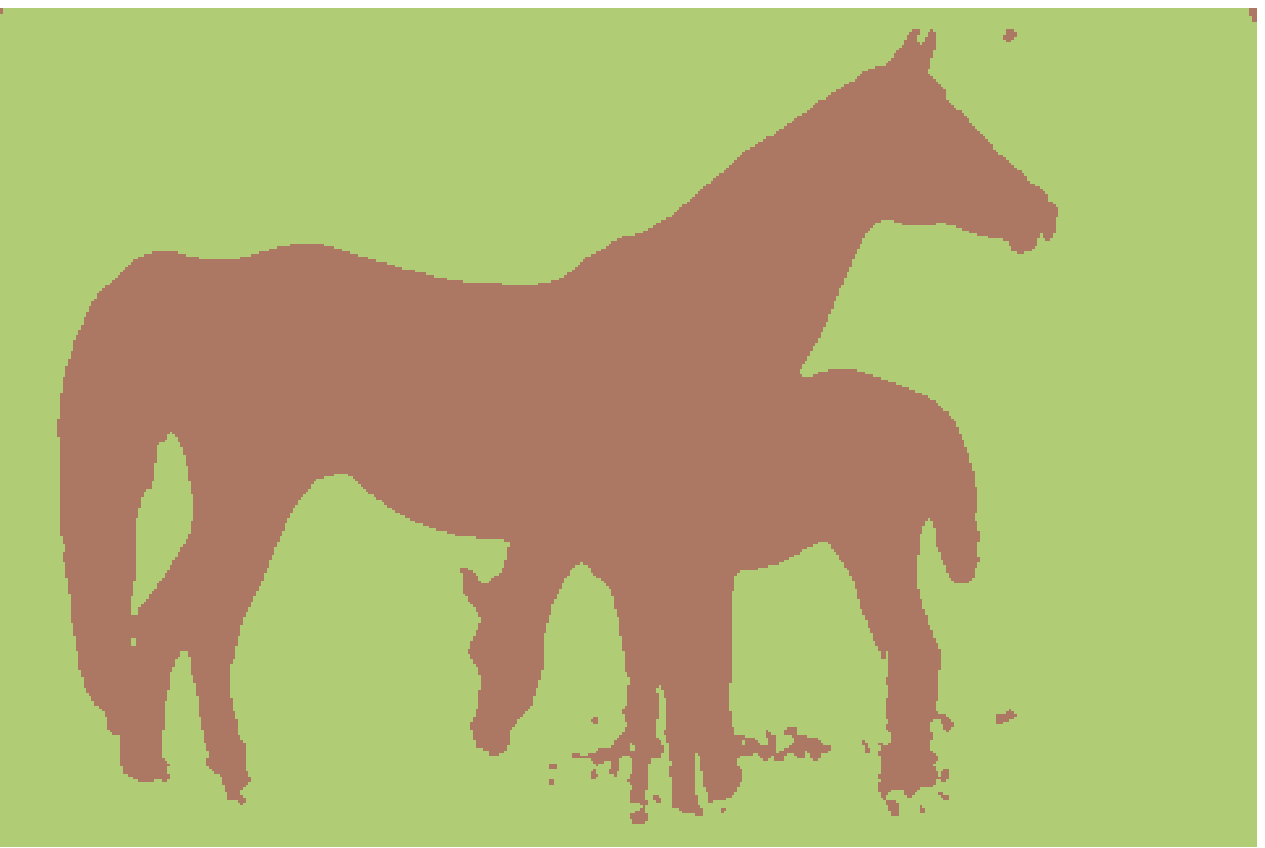}
\end{minipage}
\begin{minipage}[t]{0.19\linewidth}
\centering
\includegraphics[width=1\textwidth]{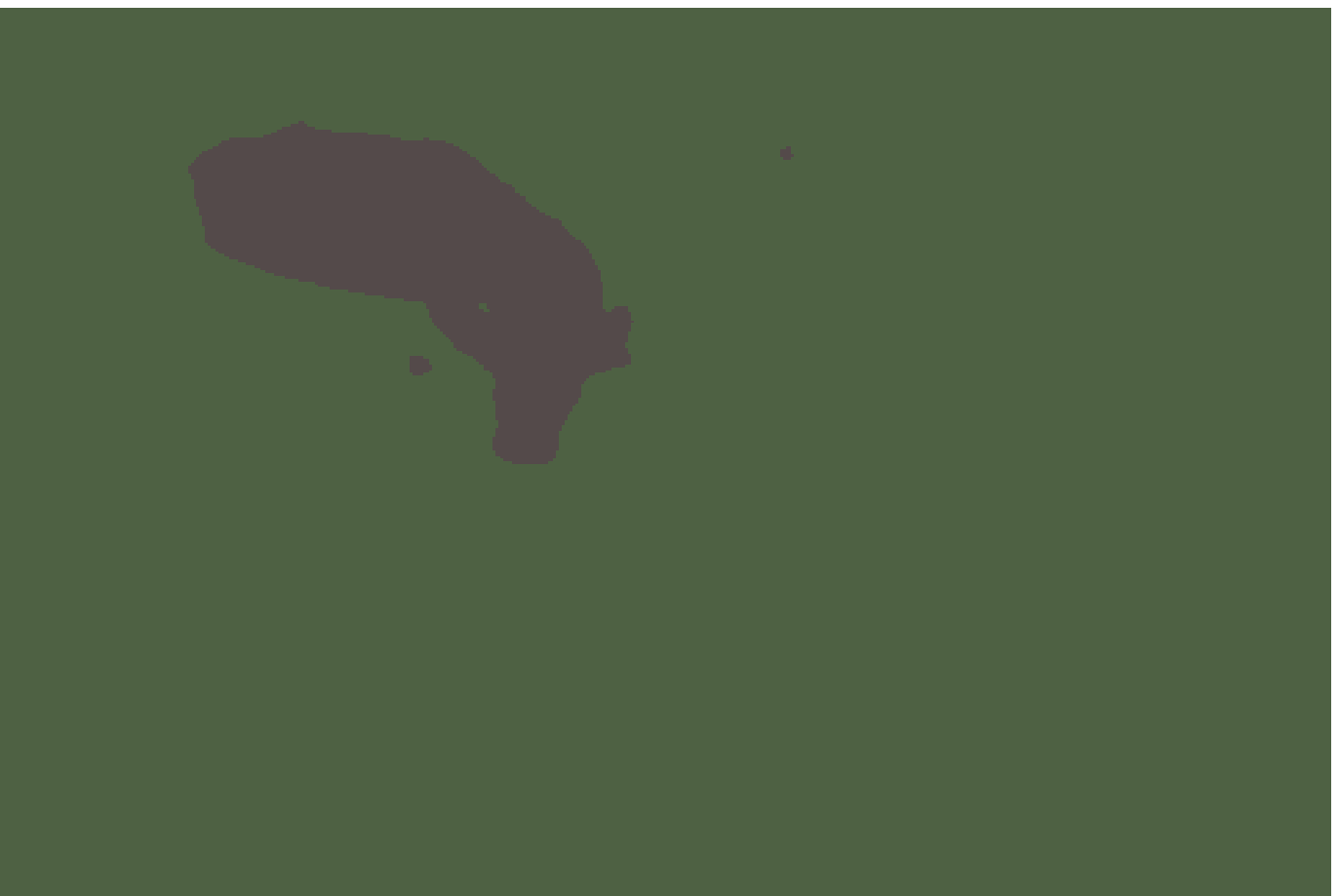}
\end{minipage}
\begin{minipage}[t]{0.19\linewidth}
\centering
\includegraphics[width=1\textwidth]{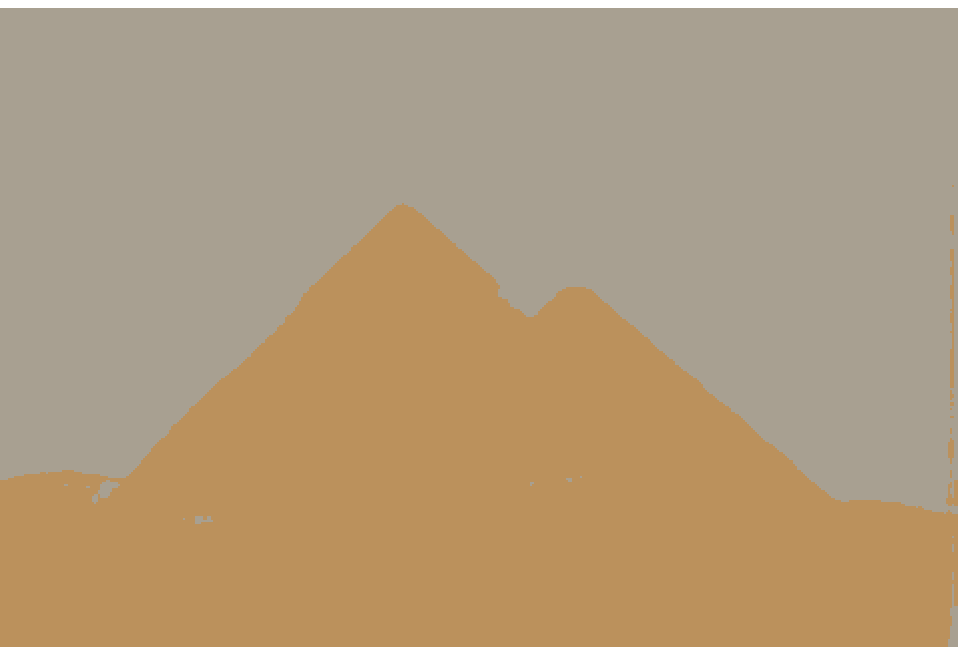}
\end{minipage}
\begin{minipage}[t]{0.19\linewidth}
\centering
\includegraphics[width=1\textwidth]{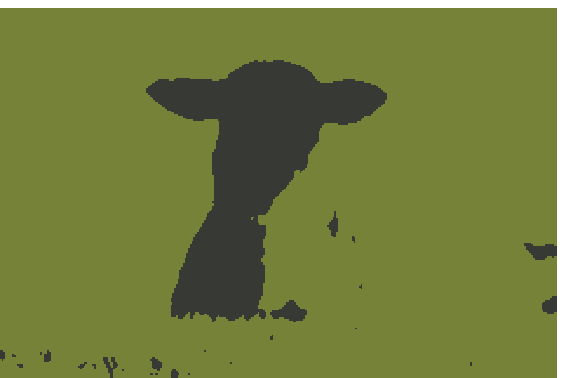}
\end{minipage}
\begin{minipage}[t]{0.19\linewidth}
\centering
\includegraphics[width=1\textwidth]{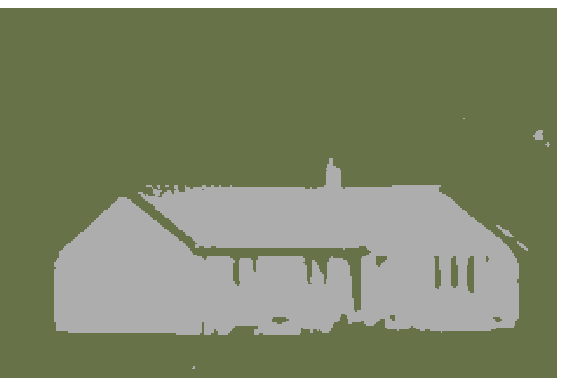}
\end{minipage}\\
\begin{minipage}[t]{0.19\linewidth}
\centering
\includegraphics[width=1\textwidth]{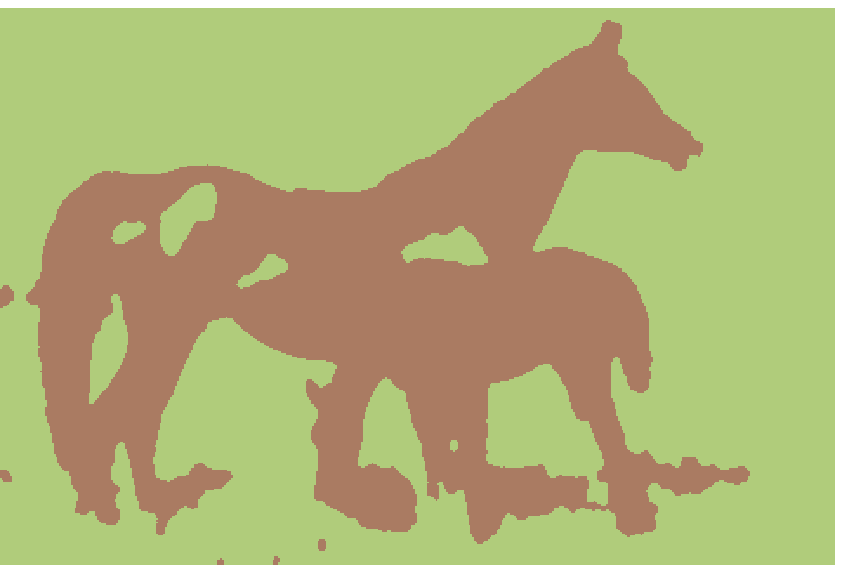}
\end{minipage}
\begin{minipage}[t]{0.19\linewidth}
\centering
\includegraphics[width=1\textwidth]{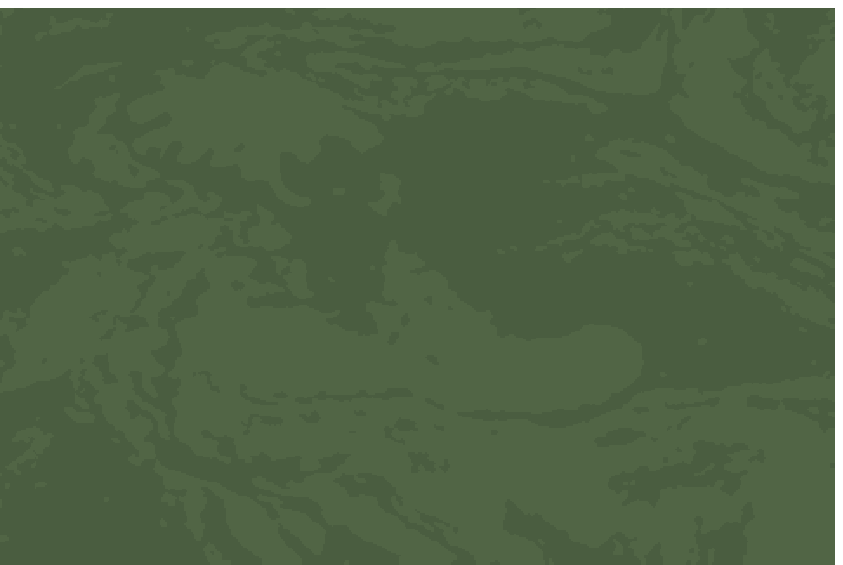}
\end{minipage}
\begin{minipage}[t]{0.19\linewidth}
\centering
\includegraphics[width=1\textwidth]{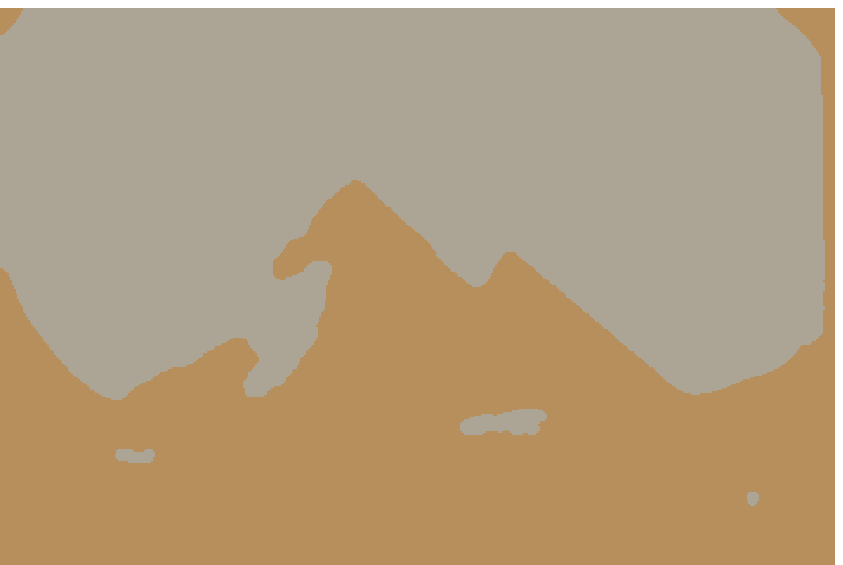}
\end{minipage}
\begin{minipage}[t]{0.19\linewidth}
\centering
\includegraphics[width=1\textwidth]{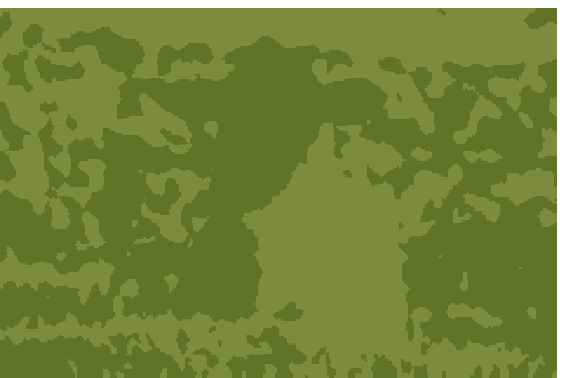}
\end{minipage}
\begin{minipage}[t]{0.19\linewidth}
\centering
\includegraphics[width=1\textwidth]{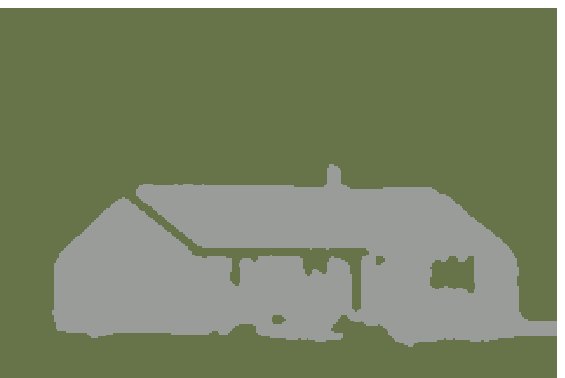}
\end{minipage}\\
\begin{minipage}[t]{0.19\linewidth}
\centering
\includegraphics[width=1\textwidth]{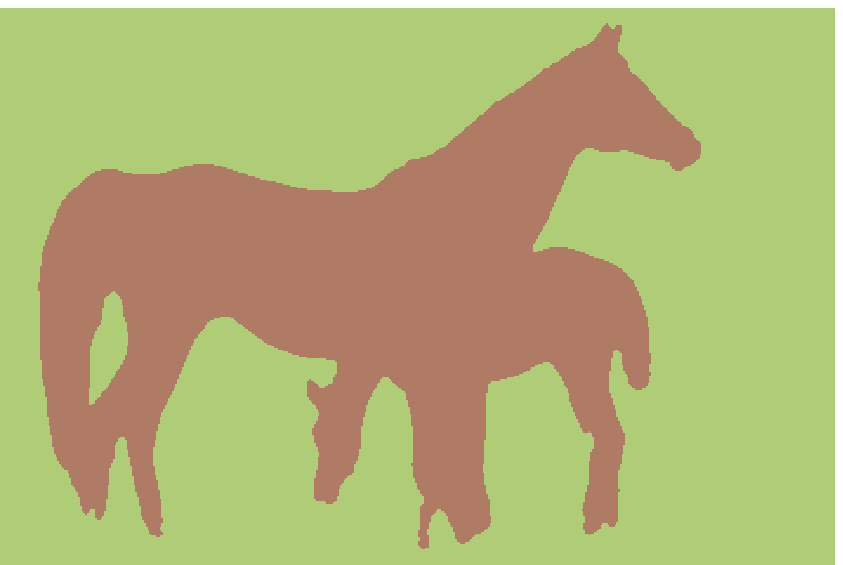}
\end{minipage}
\begin{minipage}[t]{0.19\linewidth}
\centering
\includegraphics[width=1\textwidth]{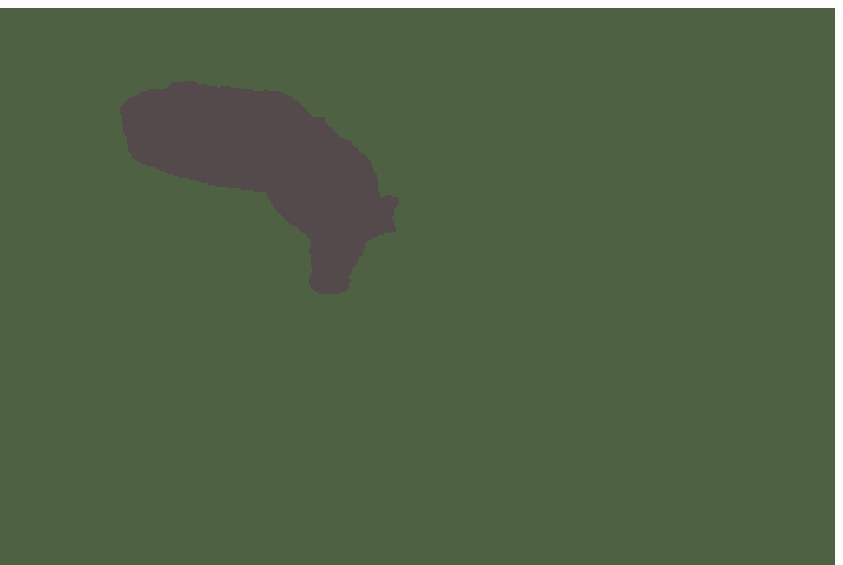}
\end{minipage}
\begin{minipage}[t]{0.19\linewidth}
\centering
\includegraphics[width=1\textwidth]{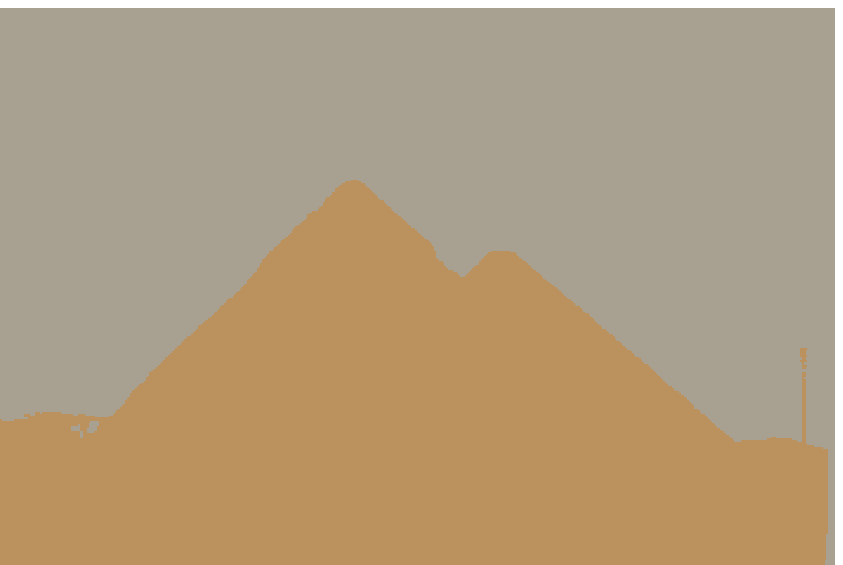}
\end{minipage}
\begin{minipage}[t]{0.19\linewidth}
\centering
\includegraphics[width=1\textwidth]{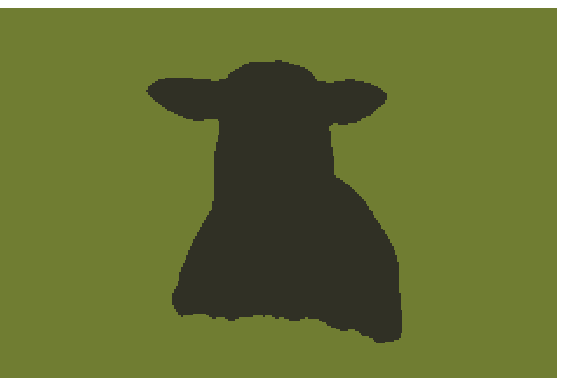}
\end{minipage}
\begin{minipage}[t]{0.19\linewidth}
\centering
\includegraphics[width=1\textwidth]{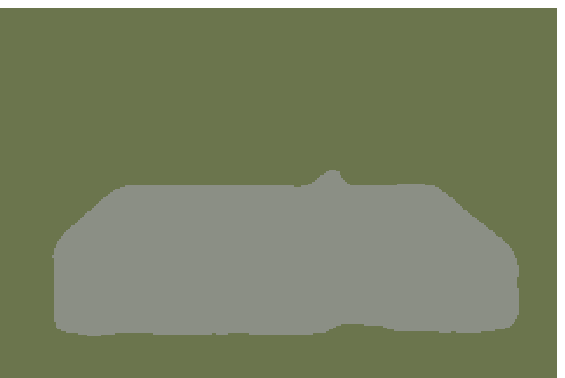}
\end{minipage}
\caption{Segmentation results on five real-world images in BSDS and MSRC. The parameters: $\phi_{1} =6.05$, $\phi_{2} =10.00$, $\phi_{3} =9.89$, $\phi_{4} =9.98$, and $\phi_{5} =9.50$. From top to bottom: observed images, ground truth, and results of FCM\_S1, FCM\_S2, FLICM, KWFLICM, FRFCM, WFCM, DSFCM\_N, and WRFCM.}
\label{fig:BM}
\end{figure}

Fig. \ref{fig:BM} visually shows the comparison between \mbox{WRFCM} and seven peers while Table \ref{tab:BM} gives the quantitative comparison. Apparently, WRFCM achieves better segmentation results than its peers. FCM\_S1, FCM\_S2, FLICM, KWFLICM and DSFCM\_N obtain unsatisfactory results on all tested images. Superior to them, FRFCM and WFCM preserve more contours and feature details. From a quantitative point of view, WRFCM acquires optimal SA, SDS, and MCC values much more than its peers. Note that it merely gets a slightly smaller SDS value than FRFCM and WFCM for the first and second images, respectively.

The second set contains images from NEO. Here, we select two typical images. Each of them represents an example for a specific scene. We produce the ground truth of each scene by randomly shooting it for 50 times within the time span 2000--2019. The visual results of all algorithms are shown in Figs. \ref{fig:NEO1} and \ref{fig:NEO2}. The corresponding SA, SDS, and MCC values are given in Table \ref{tab:NEO}.
\begin{figure}[htb]
\centering
\begin{minipage}[t]{0.32\linewidth}
\centering
\includegraphics[width=1\textwidth]{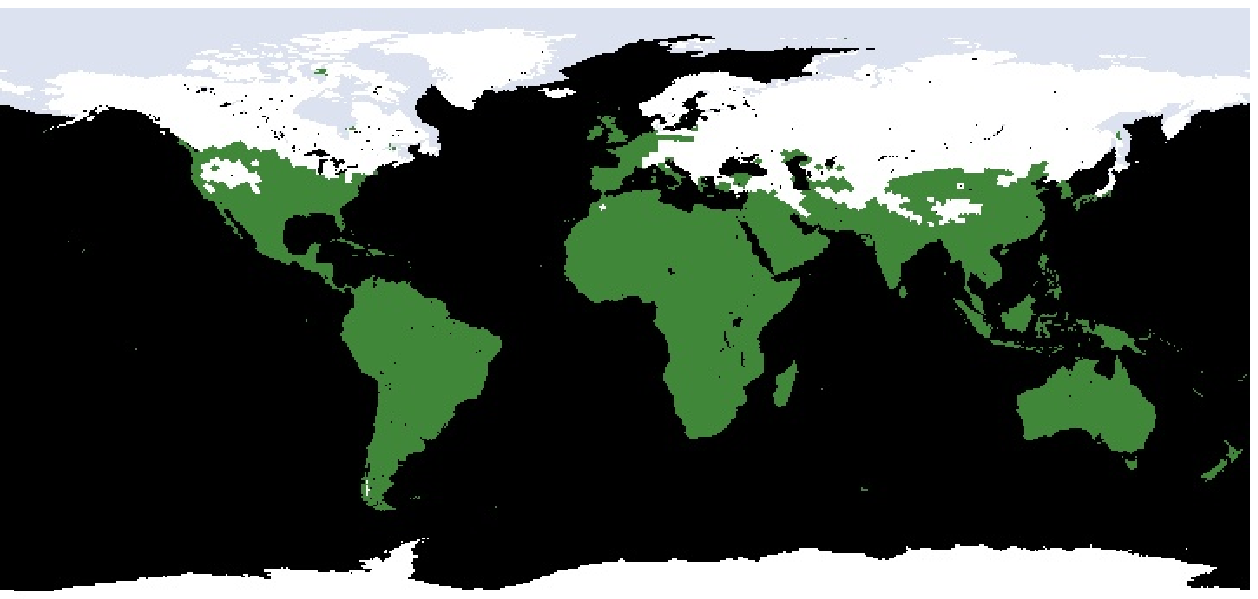}
\centerline{\footnotesize(a)}
\end{minipage}
\begin{minipage}[t]{0.32\linewidth}
\centering
\includegraphics[width=1\textwidth]{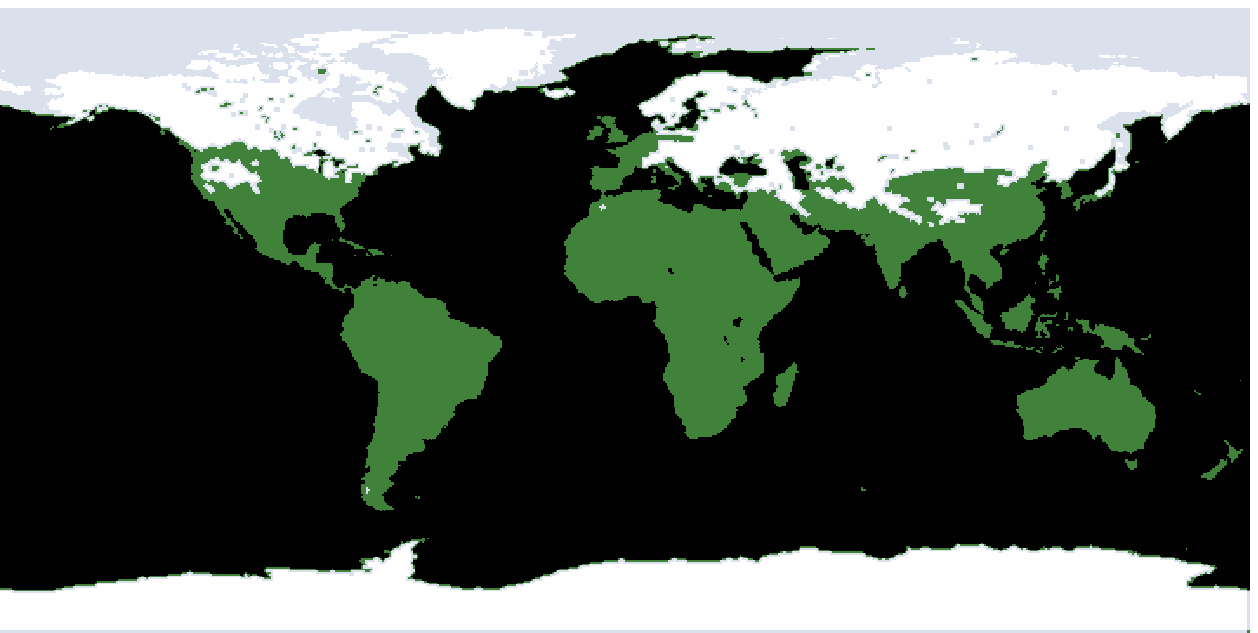}
\centerline{\footnotesize(b)}
\end{minipage}
\begin{minipage}[t]{0.32\linewidth}
\centering
\includegraphics[width=1\textwidth]{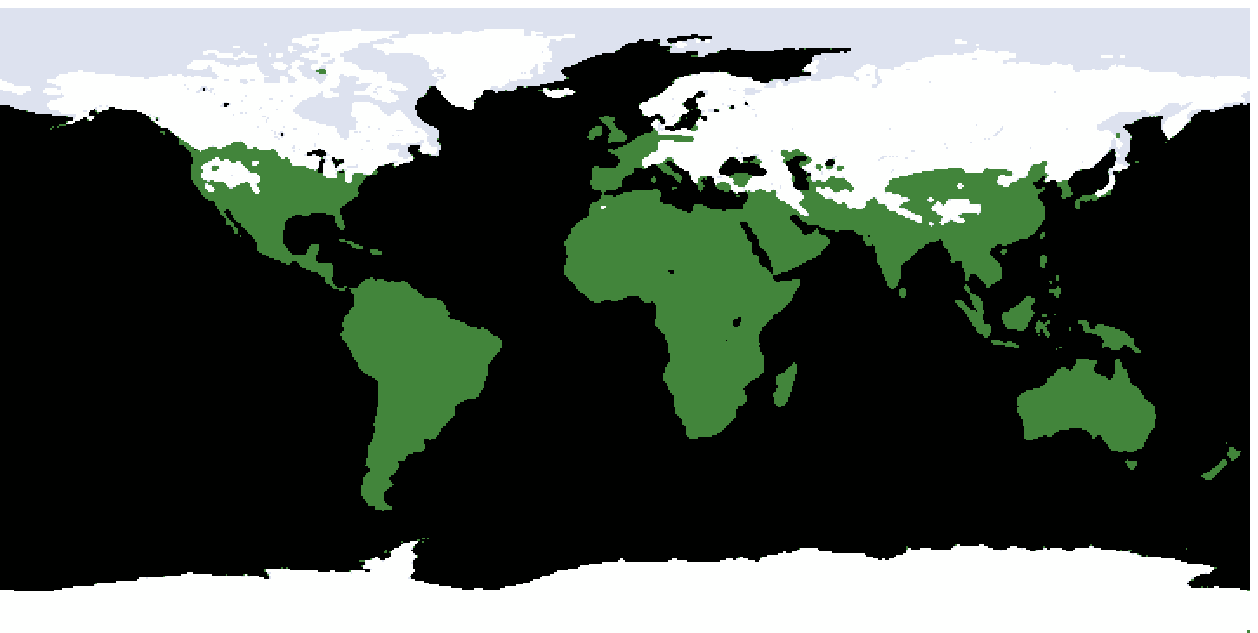}
\centerline{\footnotesize(c)}
\end{minipage}
\begin{minipage}[t]{0.32\linewidth}
\centering
\includegraphics[width=1\textwidth]{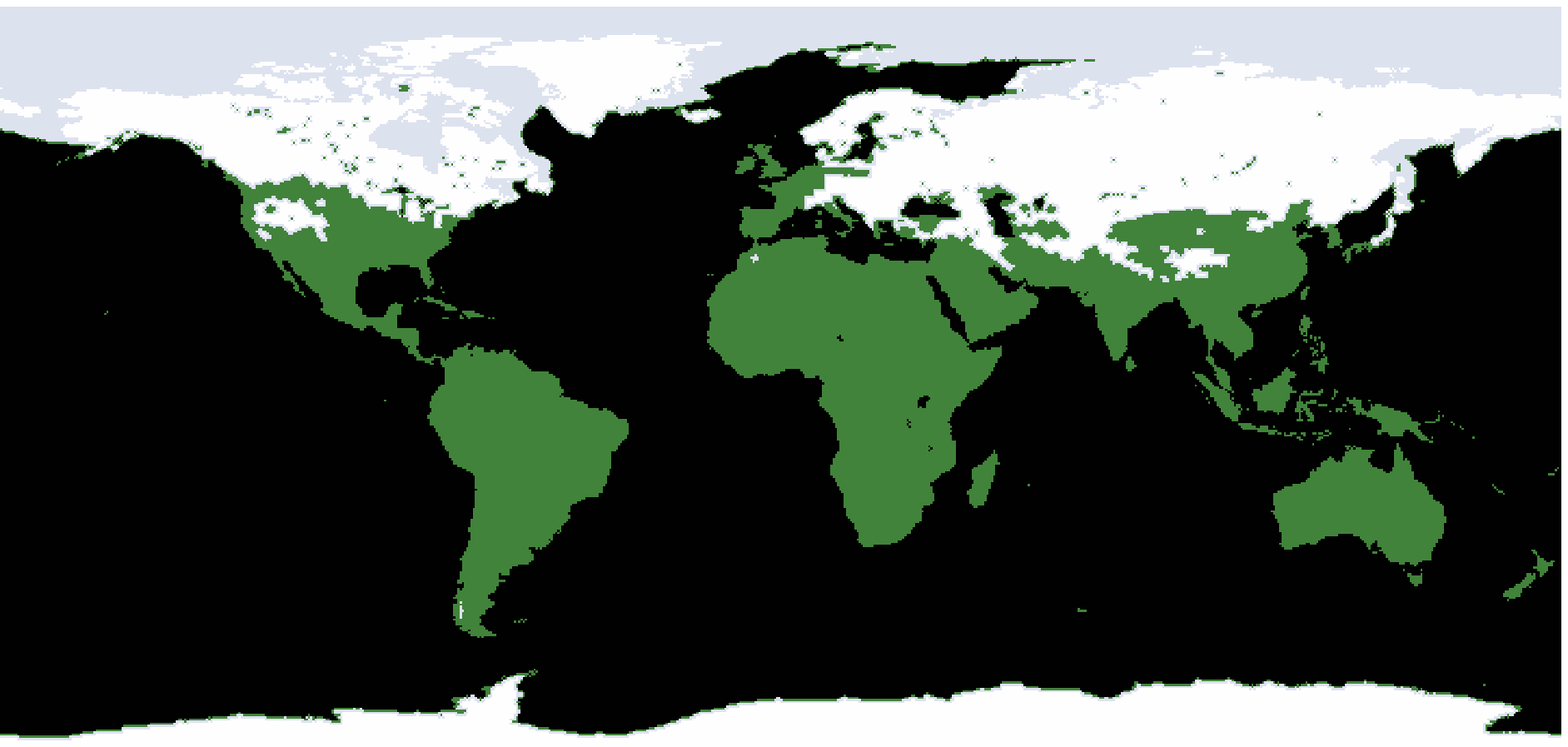}
\centerline{\footnotesize(d)}
\end{minipage}
\begin{minipage}[t]{0.32\linewidth}
\centering
\includegraphics[width=1\textwidth]{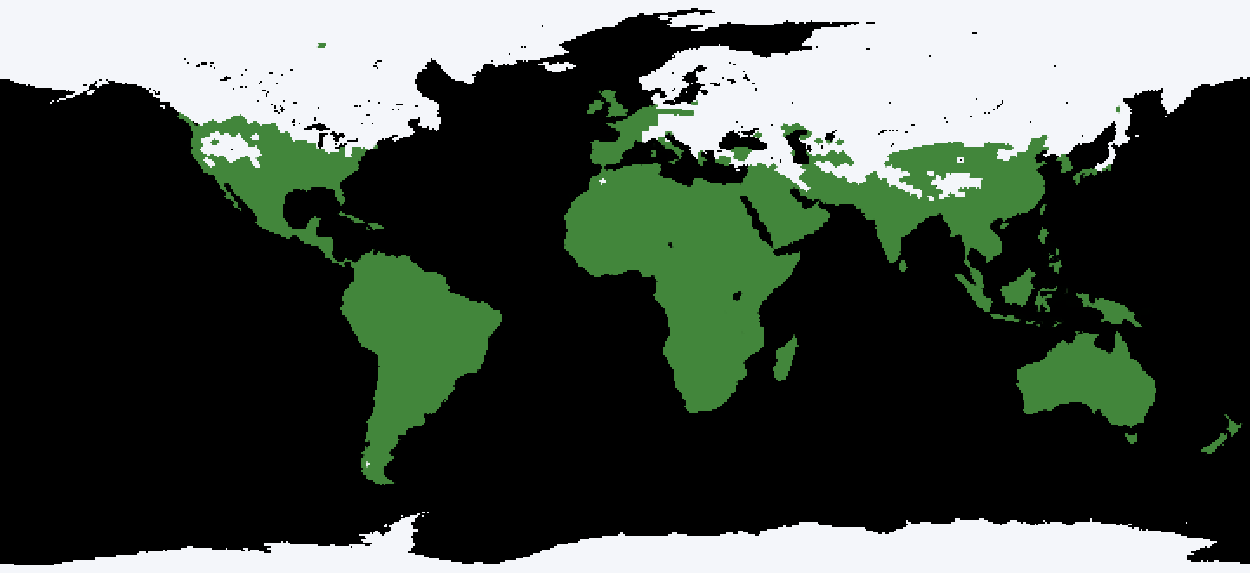}
\centerline{\footnotesize(e)}
\end{minipage}
\begin{minipage}[t]{0.32\linewidth}
\centering
\includegraphics[width=1\textwidth]{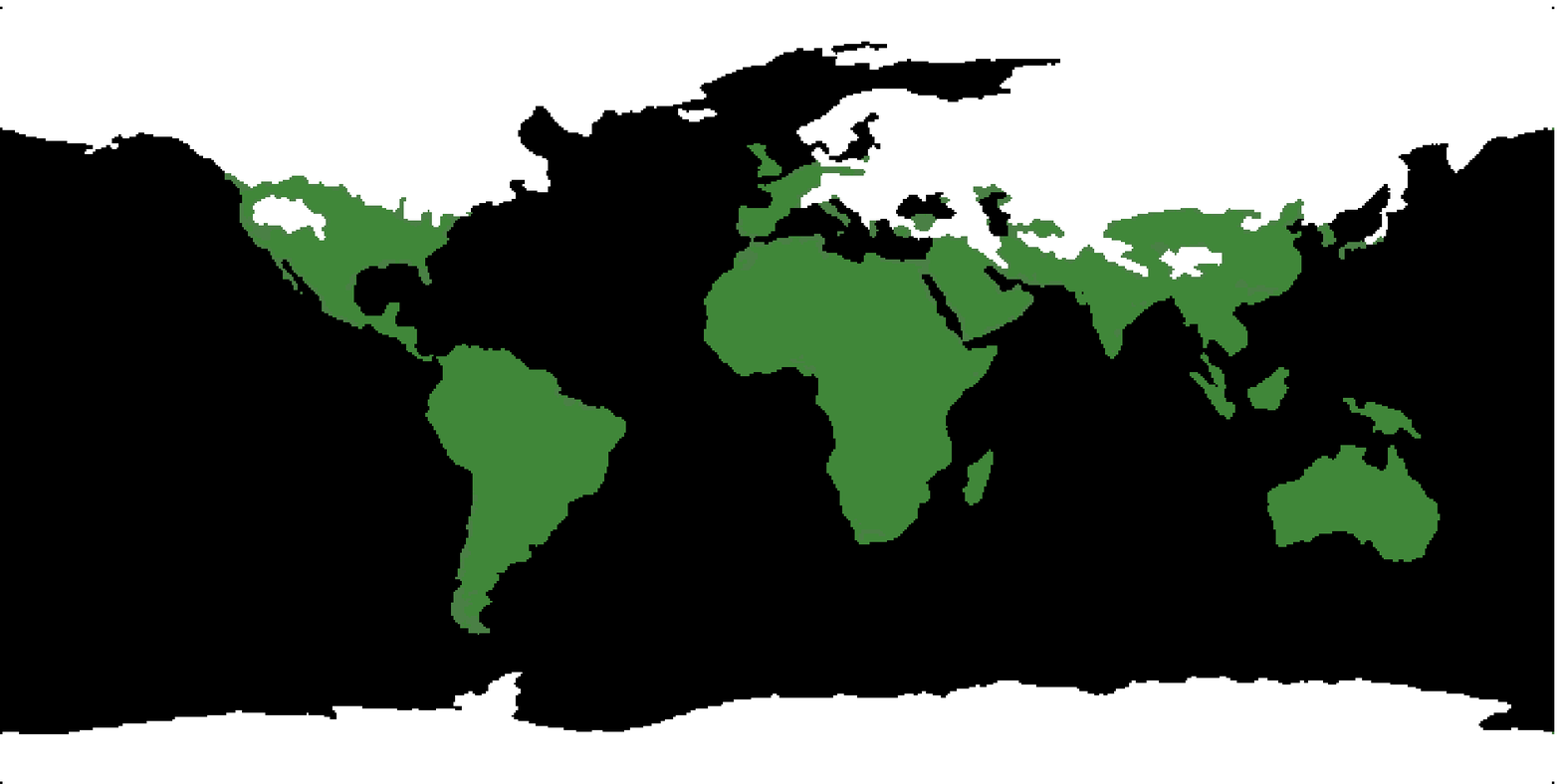}
\centerline{\footnotesize(f)}
\end{minipage}
\begin{minipage}[t]{0.32\linewidth}
\centering
\includegraphics[width=1\textwidth]{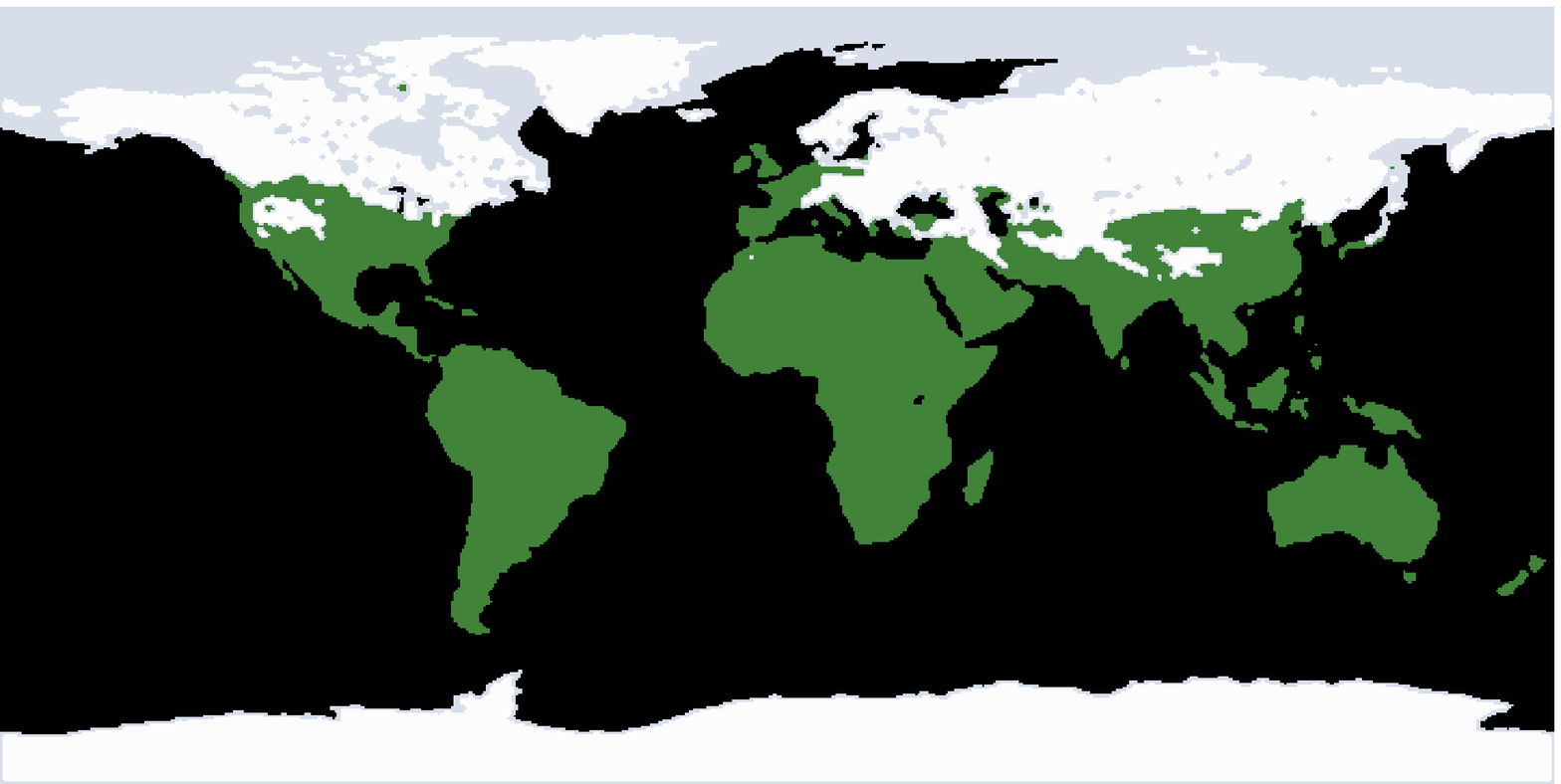}
\centerline{\footnotesize(g)}
\end{minipage}
\begin{minipage}[t]{0.32\linewidth}
\centering
\includegraphics[width=1\textwidth]{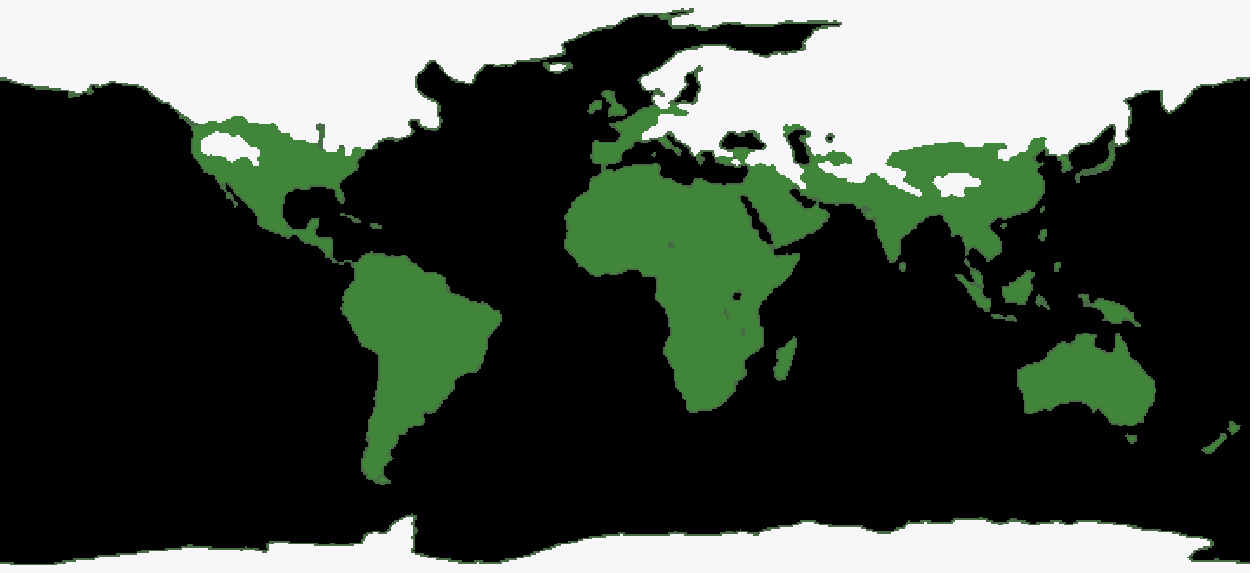}
\centerline{\footnotesize(h)}
\end{minipage}
\begin{minipage}[t]{0.32\linewidth}
\centering
\includegraphics[width=1\textwidth]{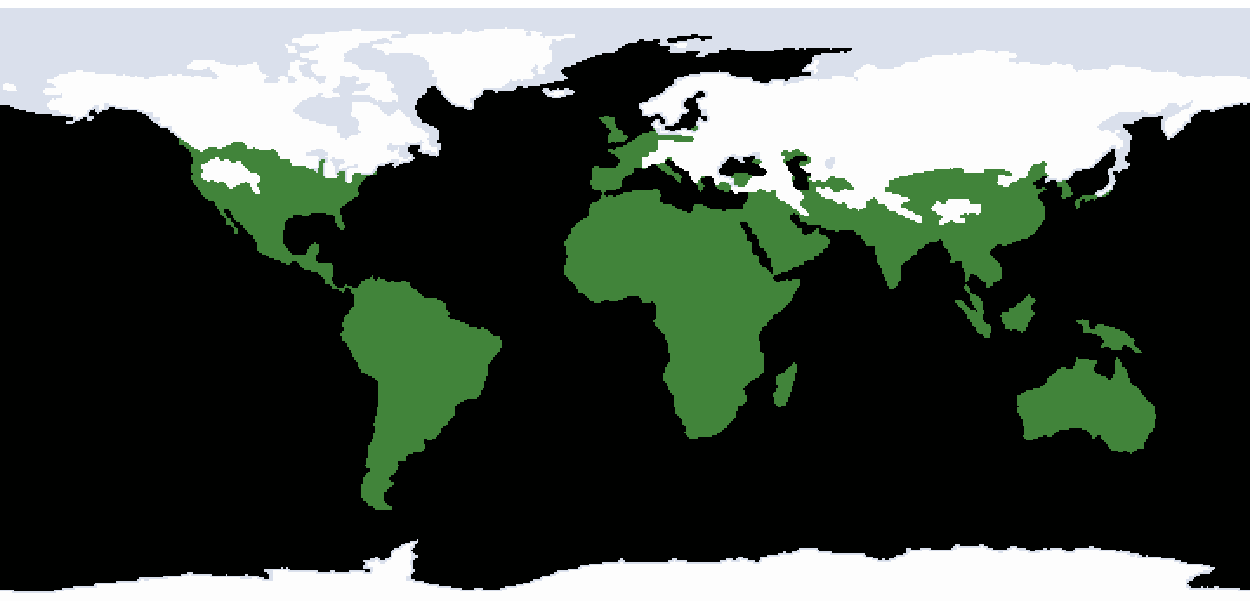}
\centerline{\footnotesize(i)}
\end{minipage}
\caption{Segmentation results on the first real-world image in NEO. The parameter: $\phi =6.10$. From (a) to (i): observed image and results of FCM\_S1, FCM\_S2, FLICM, KWFLICM, FRFCM, WFCM, DSFCM\_N, and WRFCM.}
\label{fig:NEO1}%
\end{figure}

\begin{figure}[htb]
\centering
\begin{minipage}[t]{0.32\linewidth}
\centering
\includegraphics[width=1\textwidth]{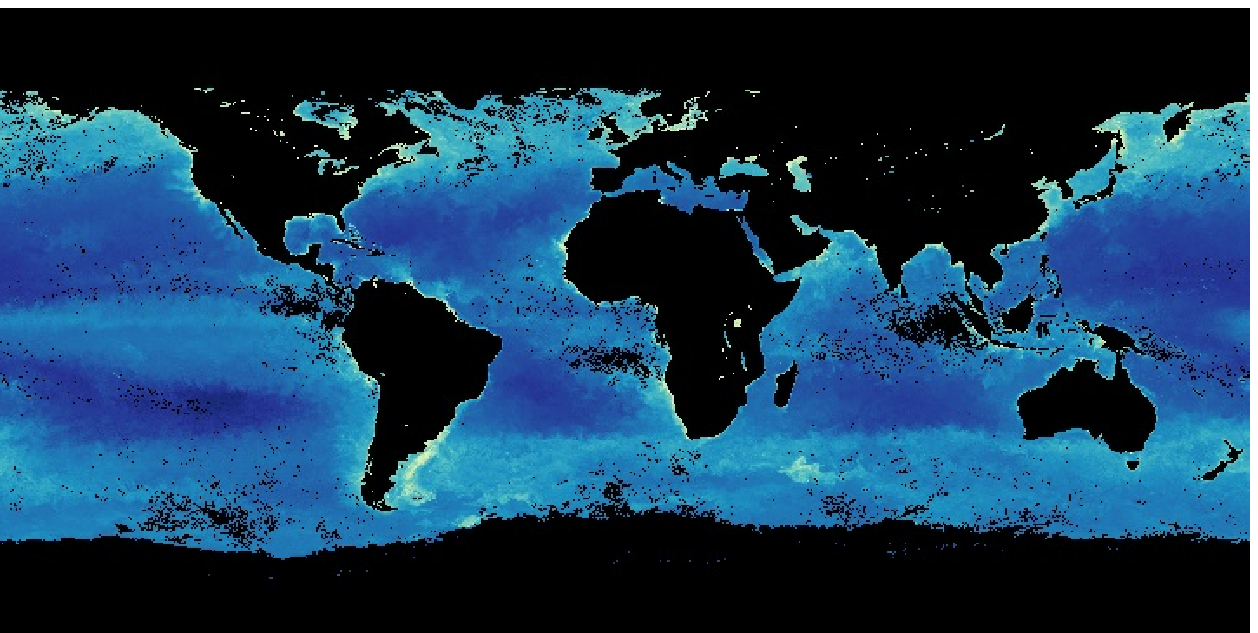}
\centerline{\footnotesize(a)}
\end{minipage}
\begin{minipage}[t]{0.32\linewidth}
\centering
\includegraphics[width=1\textwidth]{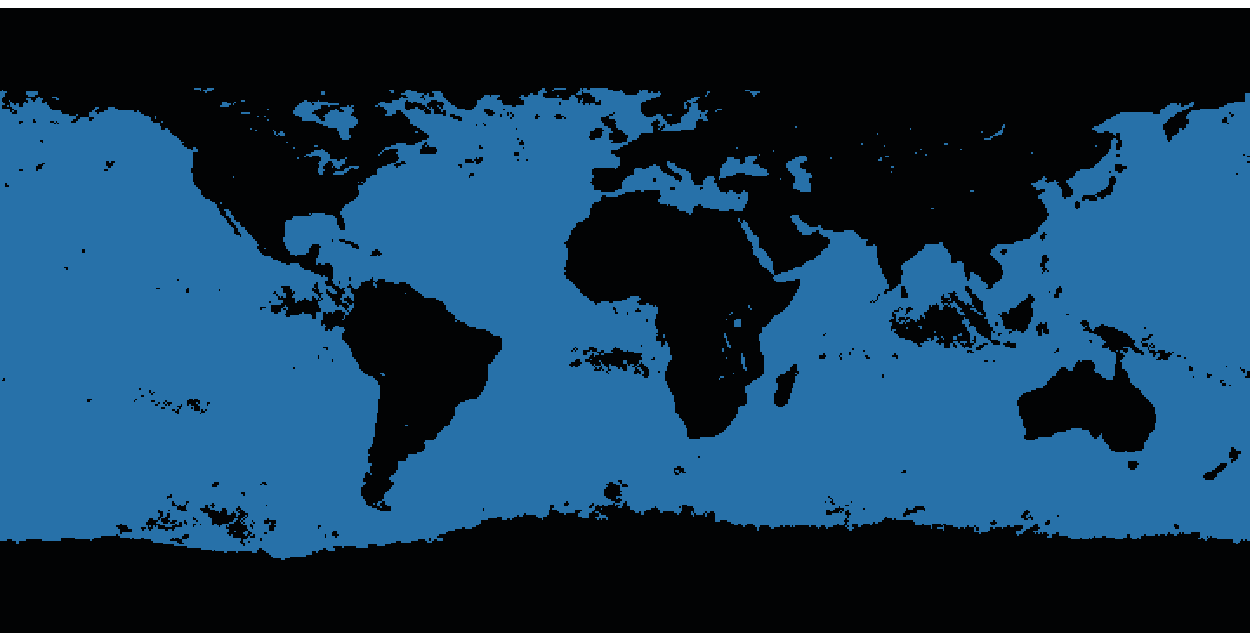}
\centerline{\footnotesize(b)}
\end{minipage}
\begin{minipage}[t]{0.32\linewidth}
\centering
\includegraphics[width=1\textwidth]{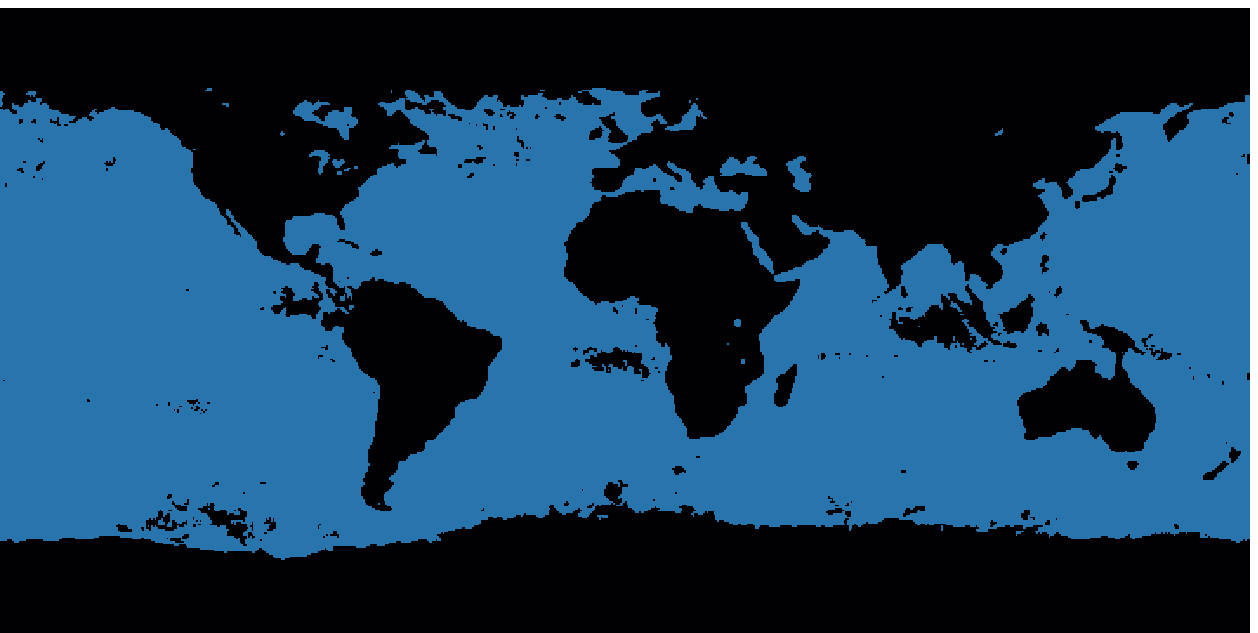}
\centerline{\footnotesize(c)}
\end{minipage}
\begin{minipage}[t]{0.32\linewidth}
\centering
\includegraphics[width=1\textwidth]{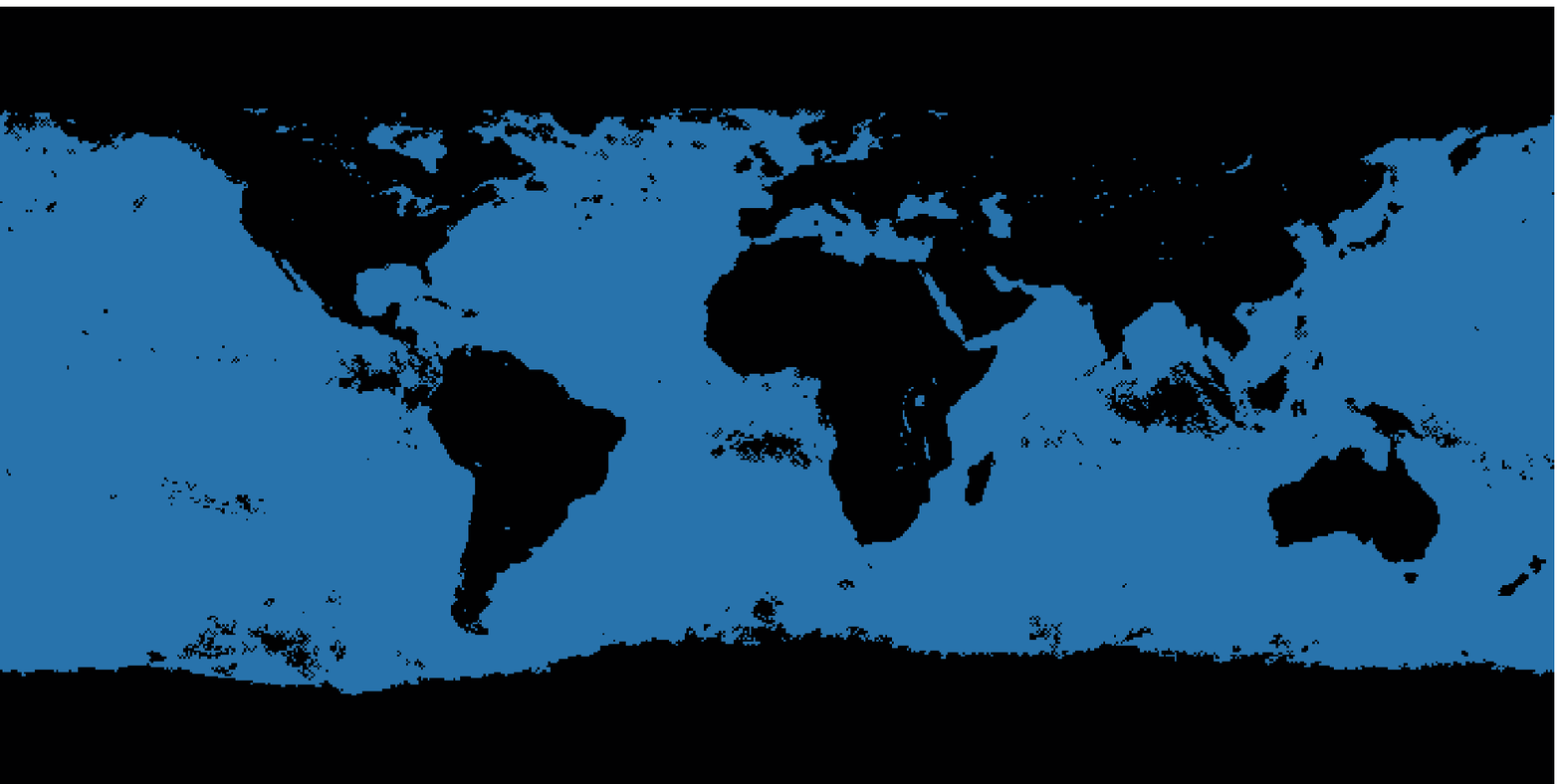}
\centerline{\footnotesize(d)}
\end{minipage}
\begin{minipage}[t]{0.32\linewidth}
\centering
\includegraphics[width=1\textwidth]{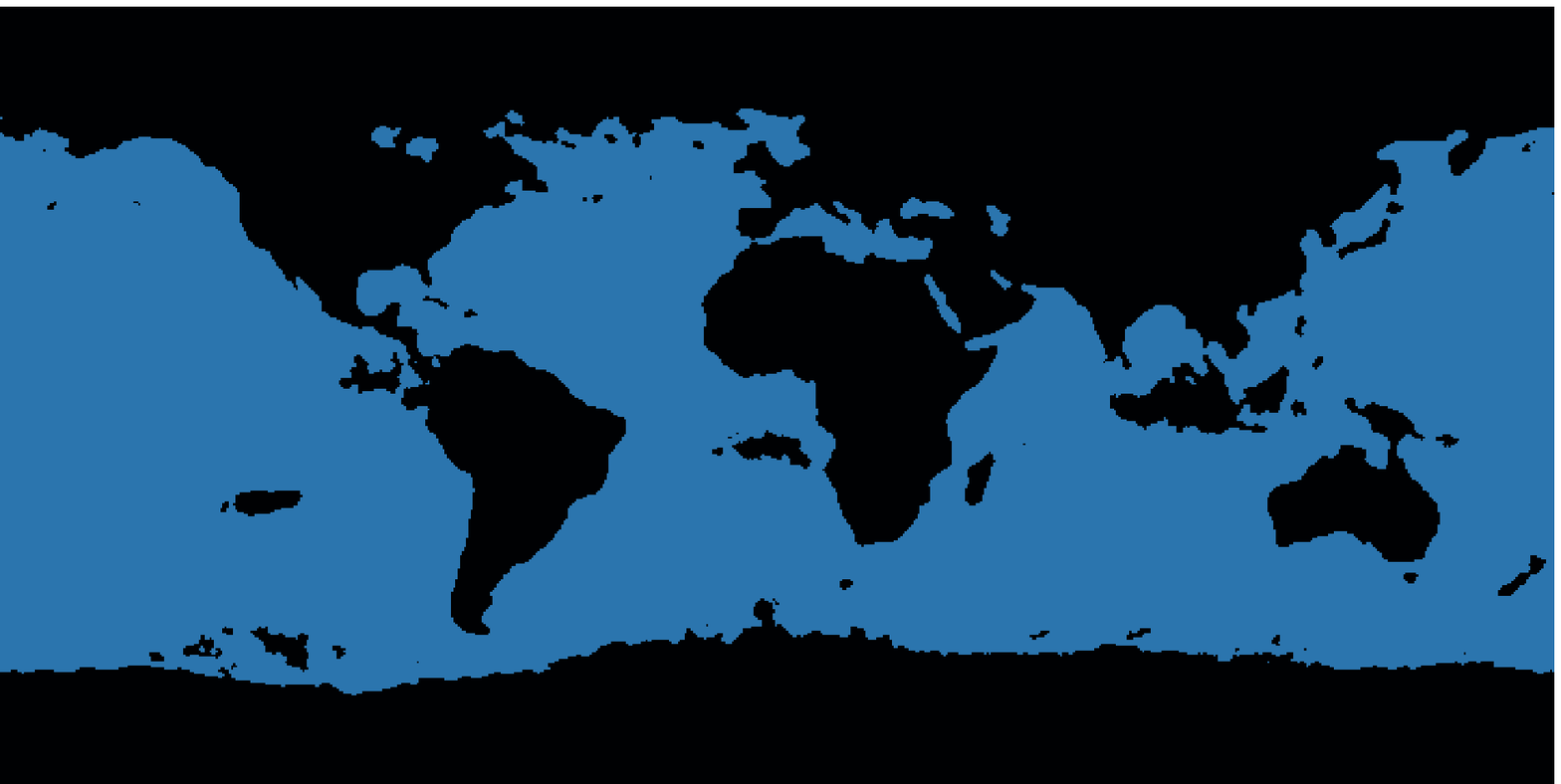}
\centerline{\footnotesize(e)}
\end{minipage}
\begin{minipage}[t]{0.32\linewidth}
\centering
\includegraphics[width=1\textwidth]{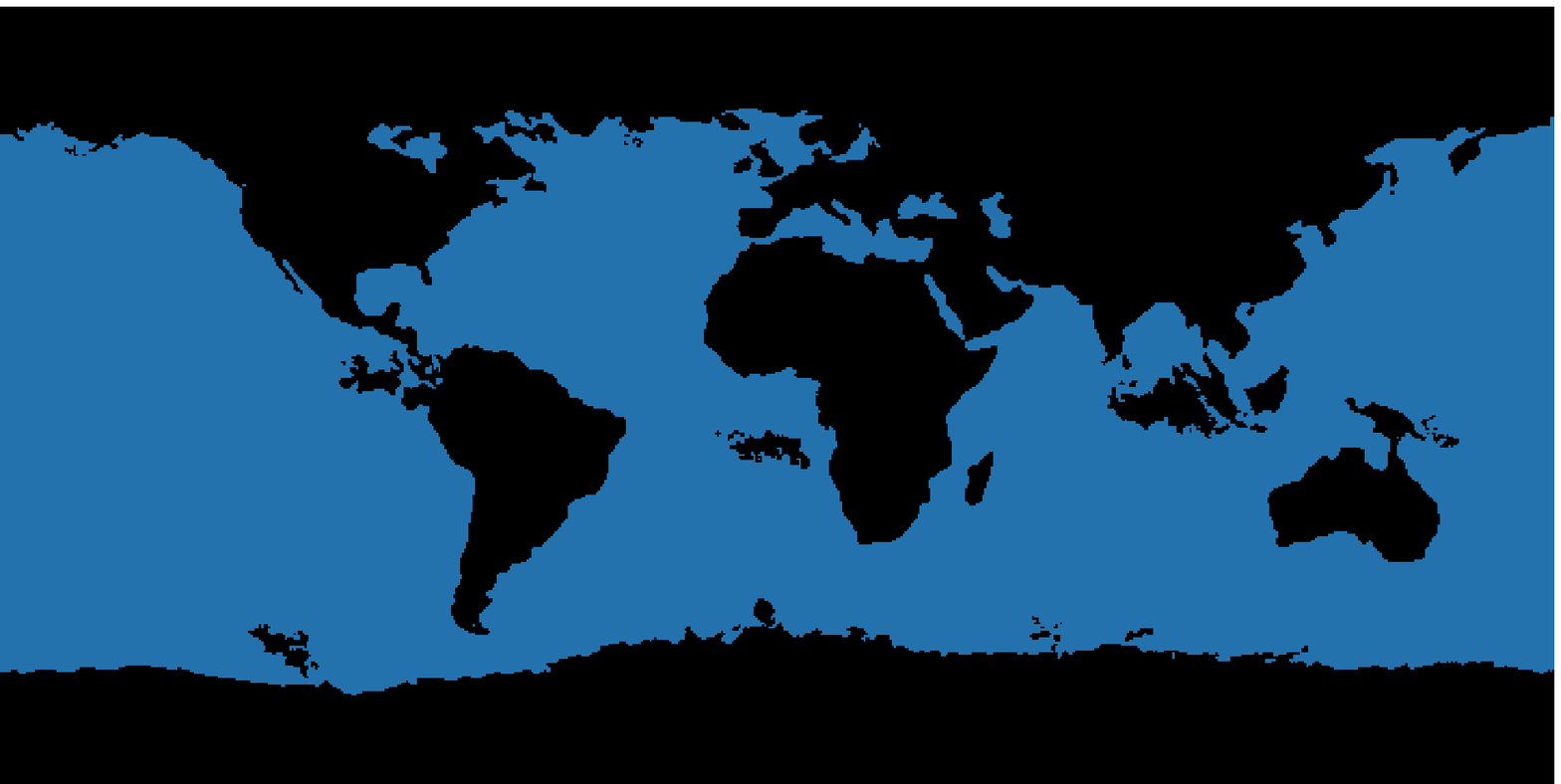}
\centerline{\footnotesize(f)}
\end{minipage}
\begin{minipage}[t]{0.32\linewidth}
\centering
\includegraphics[width=1\textwidth]{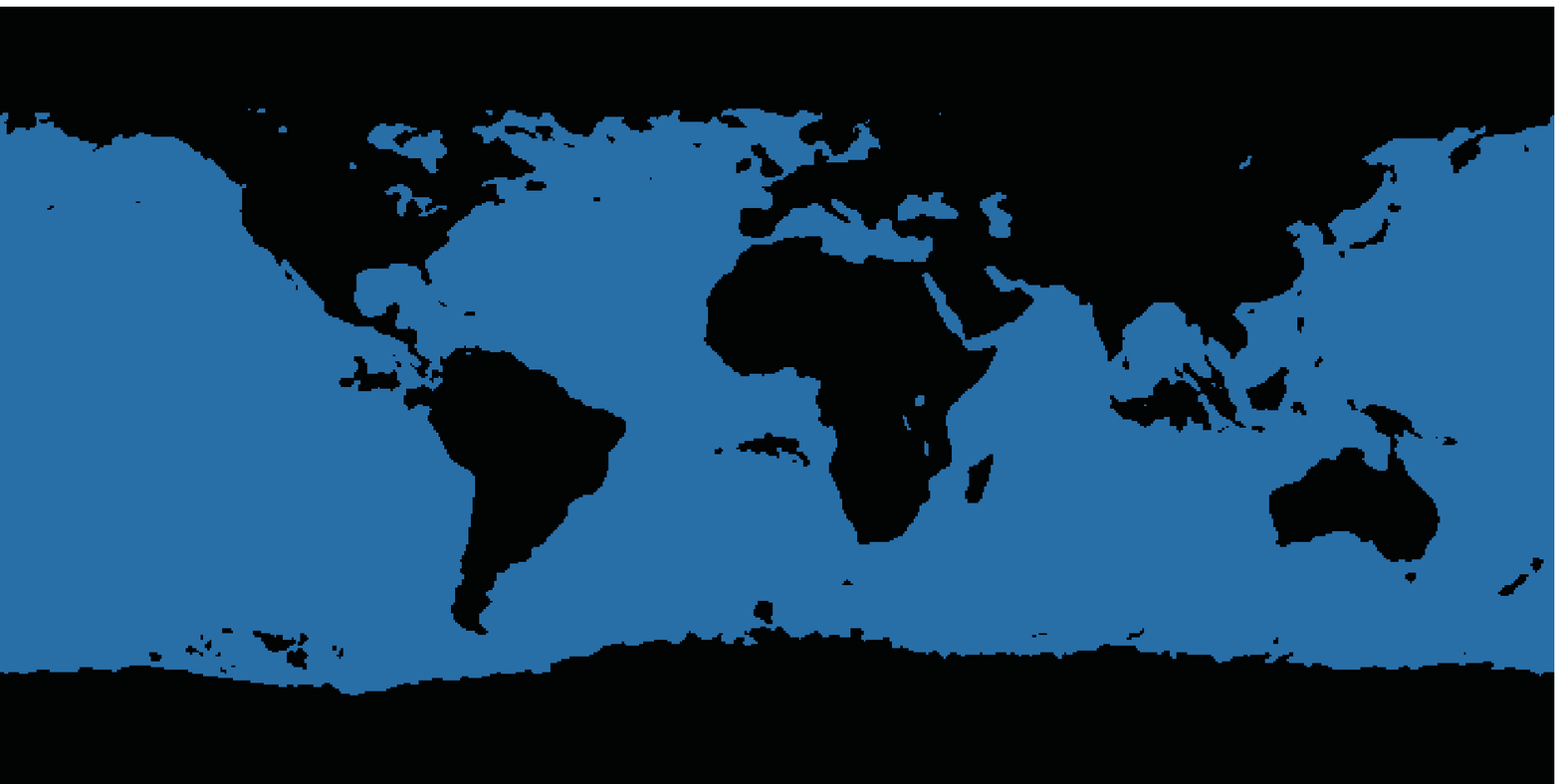}
\centerline{\footnotesize(g)}
\end{minipage}
\begin{minipage}[t]{0.32\linewidth}
\centering
\includegraphics[width=1\textwidth]{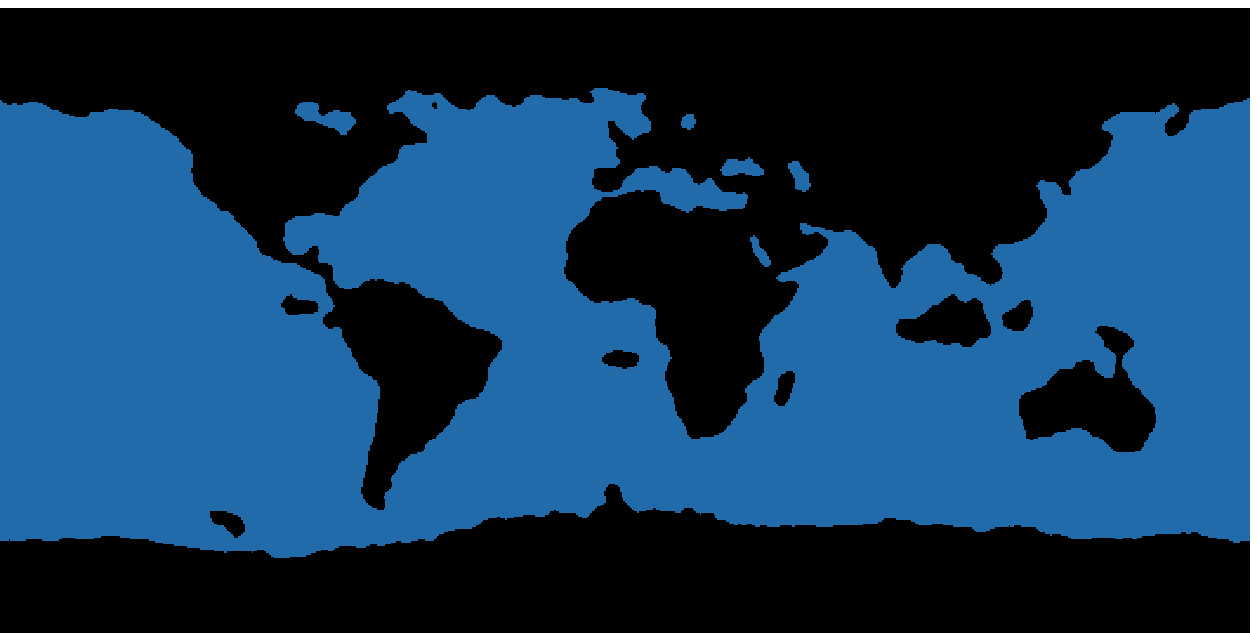}
\centerline{\footnotesize(h)}
\end{minipage}
\begin{minipage}[t]{0.32\linewidth}
\centering
\includegraphics[width=1\textwidth]{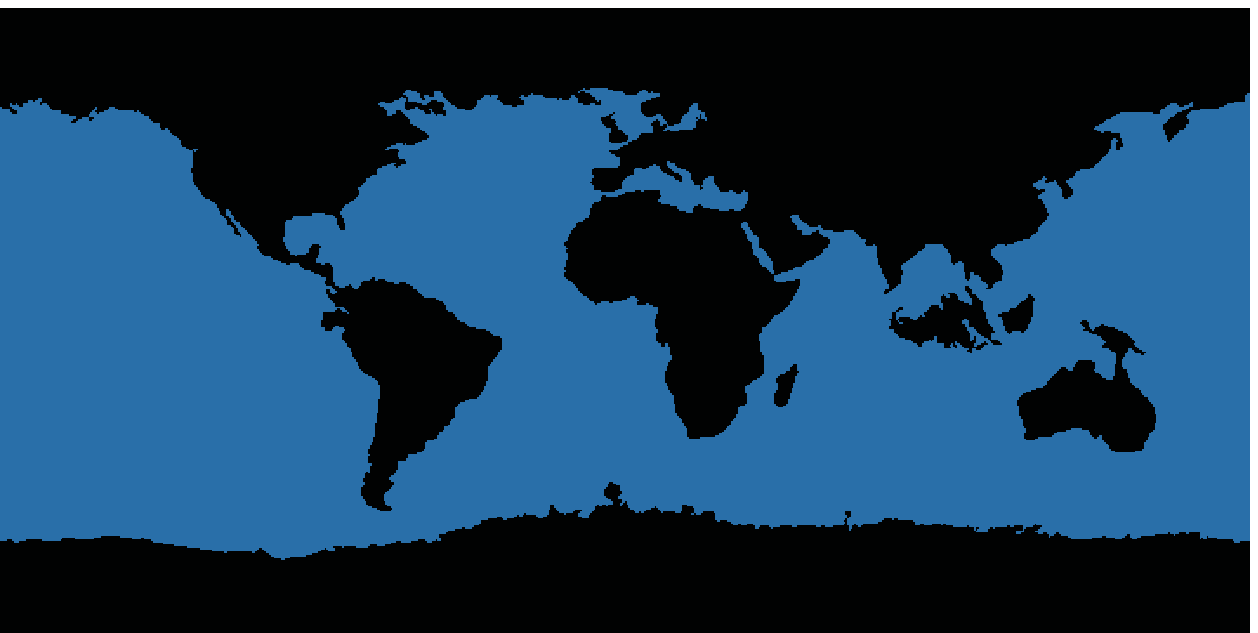}
\centerline{\footnotesize(i)}
\end{minipage}
\caption{Segmentation results on the second real-world image in NEO. The parameter: $\phi =9.98$. From (a) to (i): observed image and results of FCM\_S1, FCM\_S2, FLICM, KWFLICM, FRFCM, WFCM, DSFCM\_N, and WRFCM.}
\label{fig:NEO2}%
\end{figure}

\begin{table}[htb]
    \setlength{\abovecaptionskip}{0pt}
\setlength{\belowcaptionskip}{0pt}
  \centering
  \caption{Segmentation performance (\%) on real-world images in NEO}
  \scriptsize
    \begin{tabular}{c|ccc|ccc}
    \toprule
    \multirow{2}[3]{*}{Algorithm} &
      \multicolumn{3}{c|}{Fig. 11} &
      \multicolumn{3}{c}{Fig. 12}
      \\
\cmidrule{2-7}     &
      SA &
      SDS &
      MCC &
      SA &
      SDS &
      MCC
      \\
\midrule
    FCM\_S1 &
      90.065 &
      97.060 &
      95.106 &
      80.214 &
      92.590 &
      90.329
      \\
    FCM\_S2 &
      93.801 &
      97.723 &
      95.563 &
      81.054 &
      92.066 &
      90.023
      \\
    FLICM &
      90.234 &
      97.056 &
      95.781 &
      81.582 &
      92.352 &
      90.236
      \\
    KWFLICM &
      85.902 &
      80.109 &
      76.329 &
      95.001 &
      96.364 &
      95.633
      \\
    FRFCM &
      81.319 &
      80.616 &
      78.220 &
      96.369 &
      97.309 &
      96.215
      \\
    WFCM &
      95.882 &
      98.854 &
      97.293 &
      97.342 &
      97.430 &
      97.178
      \\
    DSFCM\_N &
      80.131 &
      81.618 &
      79.597 &
      96.639 &
      97.936 &
      96.436
      \\
    WRFCM &
      \textbf{99.080} &
      \textbf{99.149} &
      \textbf{98.512} &
      \textbf{98.881} &
      \textbf{98.797} &
      \textbf{97.582}
      \\
    \bottomrule
    \end{tabular}%
  \label{tab:NEO}%
\end{table}%

Fig. \ref{fig:NEO1} shows the segmentation results on sea ice and snow extent. The colors represent the land and ocean covered by snow and ice per week (here is February 7--14, 2015). We set $c=4$. Fig. \ref{fig:NEO2} gives the segmentation results on chlorophyll concentration. The colors represent where and how much phytoplankton are growing over a span of days. We choose $c=2$. As a whole, by seeing Figs. \ref{fig:NEO1} and \ref{fig:NEO2}, as well as Table \ref{tab:NEO}, FCM\_S1, FCM\_S2, FLICM, KWFLICM, and WFCM are sensitive to unknown noise. FRFCM and DSFCM\_N produce overly smooth results. Especially, they generate incorrect clusters when segmenting the first image in NEO. Superior to its seven peers, WRFCM cannot only suppress unknown noise well but also retain image contours well. In particular, it makes up the shortcoming that other peers forge several topology changes in the form of black patches when coping with the second image in NEO.

\subsection{Performance Improvement}
Besides segmentation results reported for all algorithms, we also present the performance improvement of WRFCM over seven comparative algorithms in Table \ref{tab:improve}. Clearly, for all types of images, the average SA, SDS and MCC improvements of WRFCM over other peers are within the value span 0.238\%--27.836\%, 0.039\%--41.989\%, and 0.047\%--58.681\%, respectively.

\begin{table*}[htbp]
\setlength{\abovecaptionskip}{0pt}
\setlength{\belowcaptionskip}{0pt}
  \centering
  \caption{Average performance improvements (\%) of WRFCM over comparative algorithms}
  \scriptsize
    \begin{tabular}{c|ccc|ccc|ccc|ccc}
    \toprule
    \multirow{2}[4]{*}{Algorithm} &
      \multicolumn{3}{c|}{Synthetic images} &
      \multicolumn{3}{c|}{Medical images} &
      \multicolumn{3}{c|}{\tabincell{c}{Real-world images\\in BSDS and MSRC}} &
      \multicolumn{3}{c}{\tabincell{c}{Real-world images\\ in NEO}}
      \\
\cmidrule{2-13} &
      SA &
      SDS &
      MCC &
      SA &
      SDS &
      MCC &
      SA &
      SDS &
      MCC &
      SA &
      SDS &
      MCC
      \\
    \midrule
    FCM\_S1 &
      8.322  &
      1.782  &
      3.677  &
      4.438  &
      1.309  &
      2.118  &
      26.708  &
      26.221  &
      57.938  &
      13.841  &
      4.148  &
      5.329
      \\
    FCM\_S2 &
      4.863  &
      3.767  &
      7.286  &
      4.407  &
      0.445  &
      0.755  &
      26.783  &
      41.989  &
      58.272  &
      11.553  &
      4.078  &
      5.254
      \\
    FLICM &
      15.275  &
      9.177  &
      16.950  &
      5.024  &
      0.444  &
      0.764  &
      21.045  &
      28.390  &
      47.687  &
      13.072  &
      4.268  &
      5.038
      \\
    KWFLICM &
      0.238  &
      0.038  &
      0.047  &
      5.004  &
      0.494  &
      0.864  &
      27.835  &
      36.791  &
      58.681  &
      8.528  &
      10.736  &
      12.066
      \\
    FRFCM &
      0.293  &
      2.988  &
      5.627  &
      4.270  &
      0.774  &
      1.339  &
      3.976  &
      2.467  &
      9.995  &
      10.136  &
      10.010  &
      10.829
      \\
    WFCM &
      2.100  &
      0.716  &
      1.484  &
      4.883  &
      1.769  &
      3.087  &
      3.852  &
      8.381  &
      9.689  &
      2.368  &
      0.830  &
      0.811
      \\
    DSFCM\_N &
      0.702  &
      3.130  &
      6.071  &
      4.278  &
      4.928  &
      8.445  &
      23.087  &
      30.119  &
      46.672  &
      10.595  &
      9.195  &
      10.030
      \\
    \bottomrule
    \end{tabular}%
  \label{tab:improve}%
\end{table*}%

\subsection{Overhead Analysis}
In the previous subsections, the segmentation performance of WRFCM is presented. Next, we provide the comparison of computing overheads between WRFCM and seven comparative algorithms in order to show its practicality. For a fair comparison, all experiments are implemented in Matlab on a laptop with Intel(R) Core(TM) i5-8250U CPU of (1.60 GHz) and 8.0 GB RAM. The execution time of all algorithms for segmenting synthetic, medical, real-world images is presented in Table \ref{tab:time}. The mean values are in bold. Moreover, we portray them in Fig. \ref{fig:time}.
\begin{table}[htb]
    \setlength{\abovecaptionskip}{0pt}
\setlength{\belowcaptionskip}{0pt}
\tabcolsep 0.005\textwidth
  \centering
  \caption{Comparison of execution time (in seconds) of all algorithms}
  \scriptsize
    \begin{tabular}{ccccccccc}
    \toprule
    Image &
      \tiny{FCM\_S1} &
      \tiny{FCM\_S2} &
      \tiny{FLICM} &
      \tiny{KWFLICM} &
      \tiny{FRFCM} &
      \tiny{WFCM} &
      \tiny{DSFCM\_N} &
      \tiny{WRFCM}
      \\
    \midrule
    Fig. 8 column 1 &
      40.453 &
      32.367 &
      4.131 &
      63.069 &
      0.255 &
      2.387 &
      8.245 &
      4.313
      \\
    Fig. 8 column 2 &
      44.116 &
      38.982 &
      4.567 &
      72.607 &
      0.270 &
      5.157 &
      7.271 &
      4.598
      \\
    Fig. 8 column 3 &
      67.155 &
      49.889 &
      3.817 &
      102.019 &
      0.263 &
      3.214 &
      7.501 &
      4.877
      \\
    Fig. 8 column 4 &
      41.030 &
      31.835 &
      3.560 &
      71.339 &
      0.236 &
      2.536 &
      4.561 &
      3.873
      \\
    Fig. 8 column 5 &
      37.364 &
      32.343 &
      3.655 &
      120.872 &
      0.323 &
      2.542 &
      4.245 &
      3.570
      \\
    \textbf{Mean} &
      \textbf{46.024} &
      \textbf{37.083} &
      \textbf{3.946} &
      \textbf{85.981} &
      \textbf{0.270} &
      \textbf{3.167} &
      \textbf{6.365} &
      \textbf{4.246}
      \\
    \midrule
    Fig. 9 column 1 &
      35.518 &
      29.373 &
      3.421 &
      146.758 &
      0.221 &
      2.674 &
      6.565 &
      1.615
      \\
    Fig. 9 column 2 &
      42.423 &
      37.351 &
      3.508 &
      111.133 &
      0.272 &
      2.695 &
      5.007 &
      2.024
      \\
    Fig. 9 column 3 &
      23.213 &
      26.378 &
      3.341 &
      109.381 &
      0.265 &
      2.473 &
      4.242 &
      1.813
      \\
    Fig. 9 column 4 &
      30.322 &
      30.718 &
      6.073 &
      99.687 &
      0.227 &
      2.823 &
      4.689 &
      2.142
      \\
    Fig. 9 column 5 &
      53.060 &
      38.189 &
      5.382 &
      125.533 &
      0.223 &
      3.884 &
      8.439 &
      2.565
      \\
    \textbf{Mean} &
      \textbf{36.907} &
      \textbf{32.402} &
      \textbf{4.345} &
      \textbf{118.498} &
      \textbf{0.242} &
      \textbf{2.910} &
      \textbf{5.789} &
      \textbf{2.032}
      \\
    \midrule
    Fig. 10 column 1 &
      46.739 &
      52.238 &
      4.243 &
      229.498 &
      1.085 &
      5.452 &
      10.283 &
      9.567
      \\
    Fig. 10 column 2 &
      51.998 &
      52.459 &
      4.557 &
      336.104 &
      1.866 &
      9.395 &
      10.715 &
      5.037
      \\
    Fig. 10 column 3 &
      89.086 &
      90.724 &
      4.049 &
      1039.269 &
      1.134 &
      4.960 &
      12.337 &
      2.000
      \\
    Fig. 10 column 4 &
      37.832 &
      38.430 &
      3.203 &
      72.786 &
      0.892 &
      4.011 &
      4.986 &
      3.943
      \\
    Fig. 10 column 5 &
      29.722 &
      27.879 &
      4.436 &
      180.304 &
      0.836 &
      3.277 &
      6.614 &
      3.917
      \\
    \textbf{Mean} &
      \textbf{51.075} &
      \textbf{52.346} &
      \textbf{4.098} &
      \textbf{371.592} &
      \textbf{1.162} &
      \textbf{5.419} &
      \textbf{8.987} &
      \textbf{4.893}
      \\
    \midrule
    Fig. 11 &
      82.535 &
      82.880 &
      7.509 &
      298.926 &
      5.815 &
      6.644 &
      36.648 &
      6.977
      \\
    Fig. 12 &
      44.644 &
      41.817 &
      5.164 &
      54.303 &
      1.786 &
      7.104 &
      18.897 &
      2.761
      \\
    \textbf{Mean} &
      \textbf{63.589} &
      \textbf{62.348} &
      \textbf{6.336} &
      \textbf{176.614} &
      \textbf{3.800} &
      \textbf{6.874} &
      \textbf{27.772} &
      \textbf{4.869}
      \\
    \bottomrule
    \end{tabular}%
  \label{tab:time}%
\end{table}%

\begin{figure}[htb]
\setlength{\abovecaptionskip}{0pt}
\setlength{\belowcaptionskip}{0pt}
\centering
\begin{minipage}[t]{1\linewidth}
\centering
\includegraphics[width=1\textwidth]{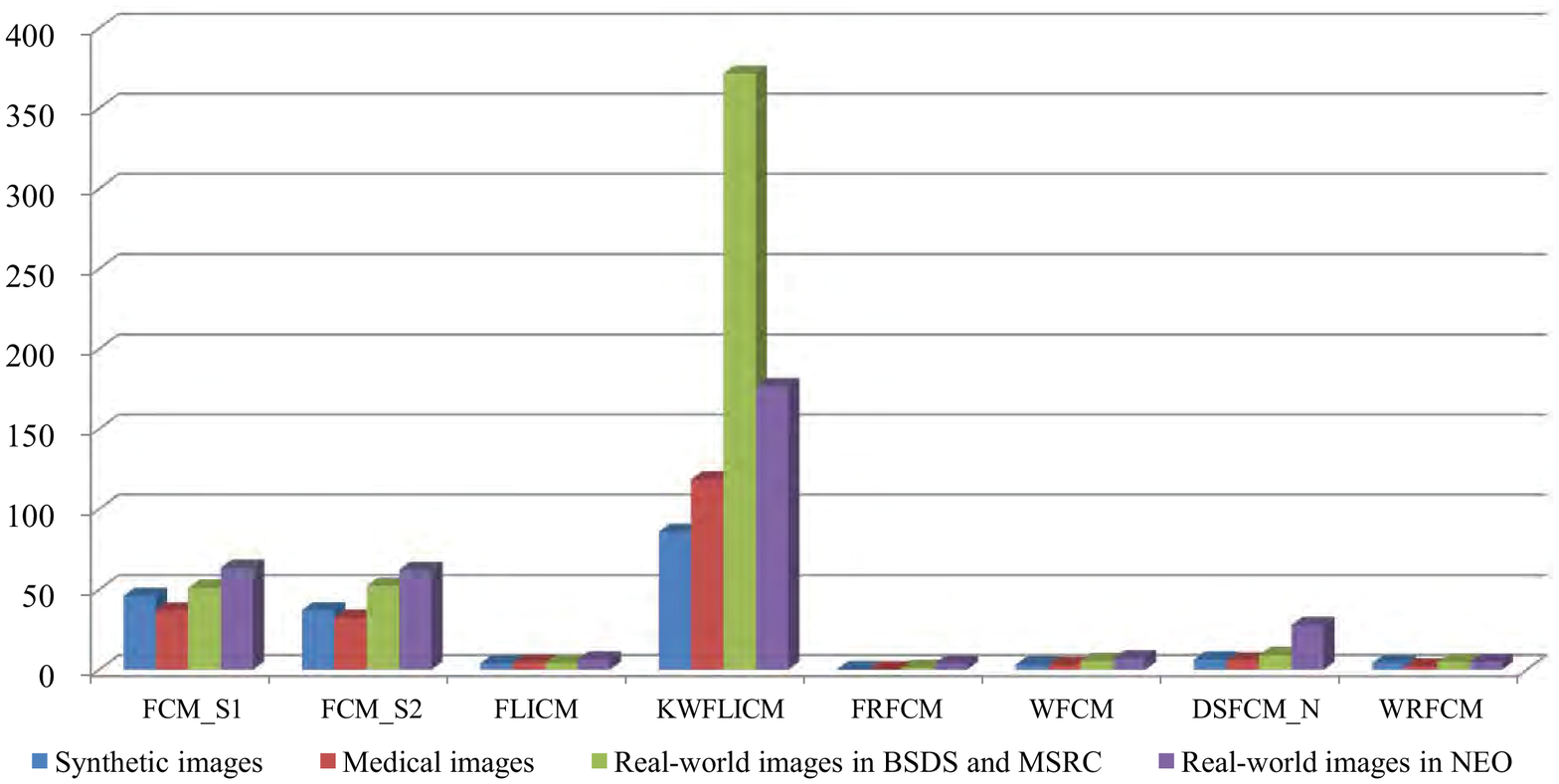}
\end{minipage}
\vspace*{-1em}
\caption{Average execution time (in seconds) on different types of images.}
\label{fig:time}
\end{figure}

As Table \ref{tab:time} and Fig. \ref{fig:time} show, for gray and color image segmentation, the computational efficiency of KWFLICM is far lower than the others. In contrast, since gray level histograms are considered, FRFCM takes the least execution time among all algorithms. Due to the computation of a neighbor term in each iteration, FCM\_S1 and FCM\_S2 are more time-consuming than the others except KWFLICM. Even though FLICM, WFCM and DSFCM\_N need more computing overheads than FRFCM, they are still very efficient. For color image segmentation, the execution time of DSFCM\_N increases dramatically. Compared with most of seven comparative algorithms, WRFCM shows higher computational efficiency. In most cases, it only runs slower than FRFCM. However, the shortcoming can be offset by its better segmentation performance. In a quantitative study, for each image, WRFCM takes 2.642 seconds longer than FRFCM. However, it saves 45.389, 42.035, 0.671, 184.161, 0.583, and 8.218 seconds over FCM\_S1, FCM\_S2, FLICM, KWFLICM, FRFCM, WFCM, and DSFCM\_N, respectively.

\section{Conclusions and Future Work}\label{sec:conclusions}

For the first time, a residual-driven FCM (RFCM) framework is proposed for image segmentation, which advances FCM research. It realizes favorable noise estimation in virtue of a residual-related fidelity term coming with an analysis of noise distribution. On the basis of the framework, \mbox{RFCM} with weighted $\ell_{2}$-norm fidelity (WRFCM) is presented for coping with image segmentation with mixed or unknown noise. Spatial information is also considered in WRFCM for making residual estimation more reliable. A two-step iterative algorithm is presented to implement WRFCM. Experiments reported for four benchmark databases demonstrate that it outperforms existing FCM variants. Moreover, differing from popular residual-learning methods, it is unsupervised and exhibits a high speed of clustering.

There are some open issues worth pursuing. First, since a tight wavelet frame transform \cite{Wang2018,Yang2017,Wang2020} provides redundant representations of images, it can be used to manipulate and analyze image features and noise well. Therefore, it can be taken as a kernel function so as to produce an improved FCM algorithm, i.e., wavelet kernel-based FCM. Second, can the proposed algorithm be applied to a wide range of non-flat domains such as remote sensing \cite{Xu2017}, ecological systems \cite{Wang2019ecological}, and transportation networks \cite{Lv2017}? How can the number of clusters be selected automatically? Answering them needs more research efforts.


%

\appendix[Proof of Theorem \ref{lem1}]
Consider the first two subproblems of \eqref{GrindEQ__12_}. The Lagrangian function \eqref{GrindEQ__10_} is reformulated as
\begin{equation} \label{GrindEQ__19_}
{\rm {\mathcal L}}_{\Lambda } ({\bm U},{\bm V})=\sum _{i=1}^{c}\sum _{j=1}^{K}u_{ij}^{m} D_{ij}+\sum _{j=1}^{K}\lambda _{j} \left(\sum _{i=1}^{c}u_{ij}-1\right) ,
\end{equation}
where $D_{ij} ={\kern 1pt} {\kern 1pt} \sum\limits_{n\in {\rm {\mathcal N}}_{j} }\frac{\|{\bm x}_{n} -{\bm r}_{n} -{\bm v}_{i} \|^{2} }{1+d_{nj} }  $.

By fixing ${\bm V}$, we minimize \eqref{GrindEQ__19_} in terms of ${\bm U}$. By zeroing the gradient of \eqref{GrindEQ__19_} in terms of ${\bm U}$, one has
\[\frac{\partial {\rm {\mathcal L}}_{\lambda } }{\partial u_{ij} } =mD_{ij} u_{ij}^{m-1} +\lambda _{j} =0.\]
Thus, $u_{ij} $ is expressed as:
\begin{equation} \label{GrindEQ__20_}
u_{ij} =\left(\frac{-\lambda _{j} }{m} \right)^{1/(m-1)} D_{ij}^{-1/(m-1)} .
\end{equation}
Due to the constraint $\sum _{i=1}^{c}u_{ij}  =1$, one has
\begin{equation*}
\begin{array}{l} {1=\sum\limits_{q=1}^{c}u_{qj}  =\sum\limits_{q=1}^{c}\left(\left(\frac{-\lambda_{j} }{m} \right)^{1/(m-1)} D_{qj}^{-1/(m-1)} \right)}\\{{\kern 1pt} {\kern 1pt} {\kern 1pt} {\kern 1pt} {\kern 1pt} {\kern 1pt} {\kern 1pt} {\kern 1pt} {\kern 1pt} {\kern 1pt} {\kern 1pt} {\kern 1pt} {\kern 1pt} {\kern 1pt} {\kern 1pt} {\kern 1pt} {\kern 1pt} {\kern 1pt} {\kern 1pt} {\kern 1pt} {\kern 1pt} {\kern 1pt} {\kern 1pt} {\kern 1pt} {\kern 1pt} {\kern 1pt} {\kern 1pt} {\kern 1pt} {\kern 1pt} {\kern 1pt} {\kern 1pt} {\kern 1pt} {\kern 1pt} {\kern 1pt} {\kern 1pt} {\kern 1pt} {\kern 1pt} {\kern 1pt} {\kern 1pt} {\kern 1pt} {\kern 1pt} {\kern 1pt} {\kern 1pt} {\kern 1pt} {\kern 1pt} {\kern 1pt} {\kern 1pt} {\kern 1pt}=\left(\frac{-\lambda _{j} }{m} \right)^{1/(m-1)} \sum\limits_{q=1}^{c}D_{qj}^{-1/(m-1)}} \end{array}.
\end{equation*}
In the sequel, one can get
\begin{equation} \label{GrindEQ__21_}
\left(\frac{-\lambda_{j} }{m} \right)^{1/(m-1)} =1/\sum\limits_{q=1}^{c}D_{qj}^{-1/(m-1)}  .
\end{equation}
Substituting \eqref{GrindEQ__21_} into \eqref{GrindEQ__20_}, the optimal $u_{ij} $ is acquired:
\[u_{ij} =\frac{D_{ij}^{-1/(m-1)} }{\sum\limits_{q=1}^{c}D_{qj}^{-1/(m-1)}  } .\]

By fixing ${\bm U}$, we minimize \eqref{GrindEQ__19_} in terms of ${\bm V}$. By zeroing the gradient of \eqref{GrindEQ__19_} in terms of ${\bm V}$, one has
\[\frac{\partial {\rm {\mathcal L}}_{\lambda } }{\partial {\bm v}_{i} } =-2\cdot \sum _{j=1}^{K}\left(u_{ij}^{m} \sum_{n\in {\it {\mathcal N}}_{j} }\frac{({\bm x}_{n} -{\bm r}_{n} -{\bm v}_{i} )}{1+d_{nj} }  \right) =0.\]
The intermediate process is presented as:
\[\sum\limits_{j=1}^{K}u_{ij}^{m} \left(\sum\limits_{n\in {\rm {\mathcal N}}_{j} }\frac{({\bm x}_{n} -{\bm r}_{n} )}{1+d_{nj} }  \right) =\sum\limits_{j=1}^{K}u_{ij}^{m} \left(\sum\limits_{n\in {\rm {\mathcal N}}_{j} }\frac{{\bm v}_{i} }{1+d_{nj} }  \right) .\]
The optimal ${\bm v}_{i} $ is computed:
\[{\bm v}_{i} =\frac{\sum\limits_{j=1}^{K}\left(u_{ij}^{m} \sum\limits_{n\in {\rm {\mathcal N}}_{j} }\frac{{\bm x}_{n} -{\bm r}_{n} }{1+d_{nj} }  \right) }{\sum\limits_{j=1}^{K}\left(u_{ij}^{m} \sum\limits_{n\in {\rm {\mathcal N}}_{j} }\frac{1}{1+d_{nj} }  \right) } .\]

\ifCLASSOPTIONcompsoc
\section*{Acknowledgments}
This work is supported in part by the Doctoral Students' Short Term Study Abroad Scholarship Fund of Xidian University, in part by the National Natural Science Foundation of China under Grant Nos. 61873342, 61672400, in part by the Recruitment Program of Global Experts, and in part by the Science and Technology Development Fund, MSAR, under Grant No. 0012/2019/A1.

\ifCLASSOPTIONcaptionsoff
  \newpage
\fi



%

%

\begin{IEEEbiography}[{\includegraphics[width=1in,height=1.25in,clip,keepaspectratio]{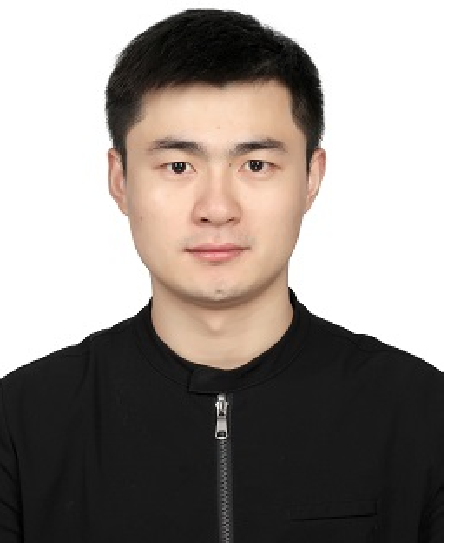}}]{Cong Wang}
received the B.S. degree in automation and the M.S. degree in mathematics from Hohai University, Nanjing, China, in 2014 and 2017, respectively. He is currently pursuing the Ph.D. degree in mechatronic engineering, Xidian University, Xi'an, China.

He was a Visiting Ph.D. Student with the Department of Electrical and Computer Engineering, University of Alberta, Edmonton, AB, Canada. He is currently a Research Assistant at the School of Computer Science and Engineering, Nanyang Technological University, Singapore. His current research interests include wavelet analysis and its applications, granular computing, and pattern recognition and image processing.
\end{IEEEbiography}

\begin{IEEEbiography}[{\includegraphics[width=1in,height=1.25in,clip,keepaspectratio]{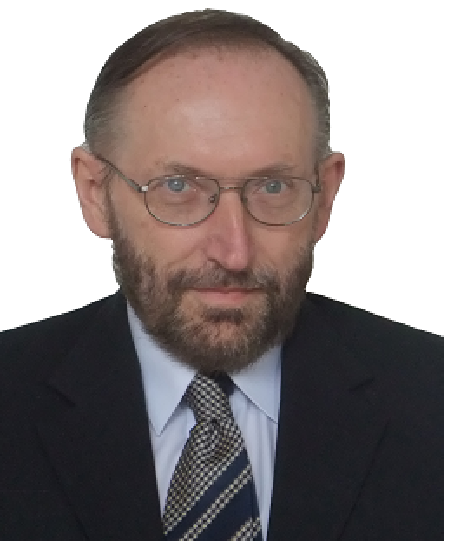}}]{Witold Pedrycz}
(M'88-SM'90-F'99) received the MS.c., Ph.D., and D.Sci., degrees from the Silesian University of Technology, Gliwice, Poland.

He is a Professor and the Canada Research Chair in Computational Intelligence with the Department of Electrical and Computer Engineering, University of Alberta, Edmonton, AB, Canada. He is also with the Systems Research Institute of the Polish Academy of Sciences, Warsaw, Poland. He is a foreign member of the Polish academy of Sciences. He has authored 15 research monographs covering various aspects of computational intelligence, data mining, and software engineering. His current research interests include computational intelligence, fuzzy modeling, and granular computing, knowledge discovery and data mining, fuzzy control, pattern recognition, knowledge-based neural networks, relational computing, and software engineering. He has published numerous papers in the above areas.

Dr. Pedrycz was a recipient of the IEEE Canada Computer Engineering Medal, the Cajastur Prize for Soft Computing from the European Centre for Soft Computing, the Killam Prize, and the Fuzzy Pioneer Award from the IEEE Computational Intelligence Society. He is intensively involved in editorial activities. He is an Editor-in-Chief of Information Sciences, an
Editor-in-Chief of WIREs Data Mining and Knowledge Discovery (Wiley) and the International Journal of Granular Computing (Springer). He currently serves as a member of a number of editorial boards of other international journals. He is a fellow of the Royal Society of Canada.
\end{IEEEbiography}

\begin{IEEEbiography}[{\includegraphics[width=1in,height=1.25in,clip,keepaspectratio]{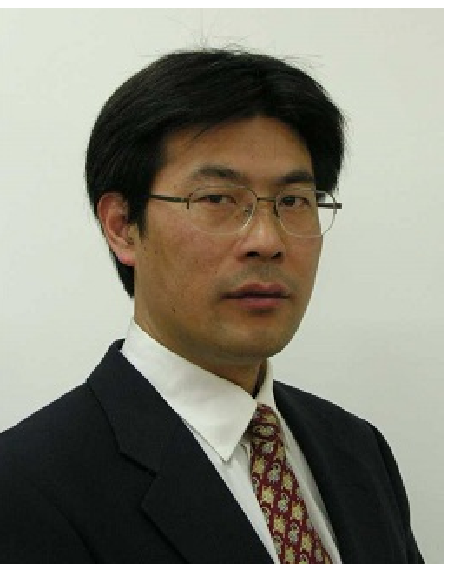}}]{ZhiWu Li}
(M'06-SM'07-F'16) received the B.S. degree in mechanical engineering, the M.S. degree in automatic control, and the Ph.D. degree in manufacturing engineering from Xidian University, Xi'an, China, in 1989, 1992, and 1995, respectively.

He joined Xidian University in 1992. He is also currently with the Institute of Systems Engineering, Macau University of Science and Technology, Macau, China. He was a Visiting Professor with the University of Toronto, Toronto, ON, Canada, the Technion-Israel Institute of Technology, Haifa, Israel, the Martin-Luther University of Halle-Wittenburg, Halle, Germany, Conservatoire National des Arts et M\'{e}tiers, Paris, France, and Meliksah Universitesi, Kayseri, Turkey. His current research interests include Petri net theory and application, supervisory control of discrete-event systems, workflow modeling and analysis, system reconfiguration, game theory, and data and process mining.

Dr. Li was a recipient of an Alexander von Humboldt Research Grant, Alexander von Humboldt Foundation, Germany. He is listed in Marquis Who's Who in the World, 27th Edition, 2010. He serves as a Frequent Reviewer of 90+ international journals, including Automatica and a number of the IEEE \textsc{Transactions} as well as many international conferences. He is the Founding Chair of Xi'an Chapter of IEEE Systems, Man, and Cybernetics Society. He is a member of Discrete-Event Systems Technical Committee of the IEEE Systems, Man, and Cybernetics Society and IFAC Technical Committee on Discrete-Event and Hybrid Systems, from 2011 to 2014.
\end{IEEEbiography}

\begin{IEEEbiography}[{\includegraphics[width=1in,height=1.25in,clip,keepaspectratio]{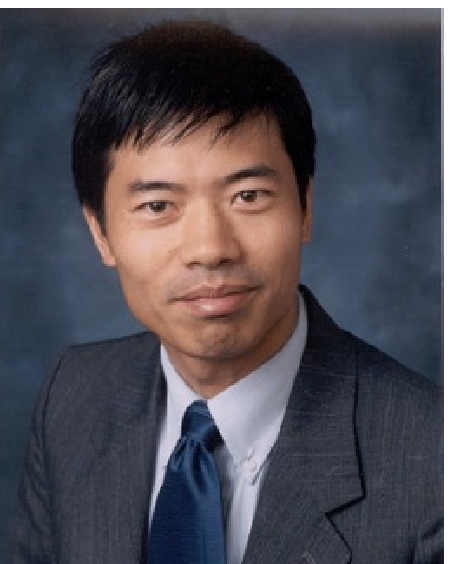}}]{MengChu Zhou}
(S'88-M'90-SM'93-F'03) received his B.S. degree in Control Engineering from Nanjing University of Science and Technology, Nanjing, China in 1983, M.S. degree in Automatic Control from Beijing Institute of Technology, Beijing, China in 1986, and Ph. D. degree in Computer and Systems Engineering from Rensselaer Polytechnic Institute, Troy, NY in 1990.

He joined New Jersey Institute of Technology (NJIT), Newark, NJ in 1990, and is now a Distinguished Professor of Electrical and Computer Engineering. His research interests are in Petri nets, intelligent automation, Internet of Things, big data, web services, and intelligent transportation.

He has over 800 publications including 12 books, 500+ journal papers (400+ in IEEE \textsc{Transactions}), 23 patents and 29 book-chapters. He is the founding Editor of IEEE Press Book Series on Systems Science and Engineering and Editor-in-Chief of IEEE/CAA Journal of Automatica Sinica. He is a recipient of Humboldt Research Award for US Senior Scientists from Alexander von Humboldt Foundation, Franklin V. Taylor Memorial Award and the Norbert Wiener Award from IEEE Systems, Man and Cybernetics Society. He is founding Co-chair of Enterprise Information Systems Technical Committee (TC) and Environmental Sensing, Networking, and Decision-making TC of IEEE SMC Society. He has been among most highly cited scholars for years and ranked top one in the field of engineering worldwide in 2012 by Web of Science/Thomson Reuters and now Clarivate Analytics. He is a life member of Chinese Association for Science and Technology-USA and served as its President in 1999. He is a Fellow of International Federation of Automatic Control (IFAC), American Association for the Advancement of Science (AAAS) and Chinese Association of Automation (CAA).
\end{IEEEbiography}




\end{document}